\newtheorem{lemma}{Lemma}
\newtheorem{thm}{\bf Theorem}
\begin{document}
%
\title{ Multi-modal and frequency-weighted tensor nuclear norm for hyperspectral image denoising }
%
%
%

\author{Xiaozhen Xie, and Sheng Liu
\thanks{Xiaozhen Xie, and Sheng Liu are with College of Science, Northwest A\&F University, Yangling 712100, China (e-mail: xiexzh@nwafu.edu.cn (Corresponding author); e-mail: liu\_sheng@nwafu.edu.cn).}
}

%
%

\markboth{Journal of \LaTeX\ Class Files,~Vol.~14, No.~8, August~2015}%
{Shell \MakeLowercase{\textit{et al.}}: Bare Demo of IEEEtran.cls for IEEE Journals}
%



\maketitle

\begin{abstract}
Low-rankness is important in the hyperspectral image (HSI) denoising tasks.
The tensor nuclear norm (TNN), defined based on the tensor singular value decomposition, is a state-of-the-art method to describe the low-rankness of HSI.
However, TNN ignores some physical meanings of HSI in tackling denoising tasks, leading to suboptimal denoising performance.
In this paper, we propose the multi-modal and frequency-weighted tensor nuclear norm (MFWTNN) and the non-convex MFWTNN for HSI denoising tasks.
Firstly, we investigate the physical meaning of frequency slices and reconsider their weights to improve the low-rank representation ability of TNN.
Secondly, we consider the correlation among two spatial dimensions and the spectral dimension of HSI and combine the above improvements to TNN to propose MFWTNN.
Thirdly, we use non-convex functions to approximate the rank function of the frequency tensor and propose the NonMFWTNN to relax the MFWTNN better.
Besides, we adaptively choose bigger weights for slices mainly containing noise information and smaller weights for slices containing profile information.
Finally, we develop the efficient alternating direction method of multiplier (ADMM) based algorithm to solve the proposed models,
and the effectiveness of our models are substantiated in simulated and real HSI datasets.

\end{abstract}


\begin{IEEEkeywords}
	hyperspectral image, denoising, non-convex approximation, frequency components, multi-modal.
\end{IEEEkeywords}

%
\IEEEpeerreviewmaketitle

\section{Introduction}
\IEEEPARstart{H}{yperspectral}
 image (HSI) has been widely used in many fields due to its wealthy spectral information of real scenes, e.g., target detection \cite{Target}, unmixing \cite{unmixing_zhou}, classification \cite{classification01} and so on.
Unfortunately, due to imaging conditions and the influence of weather, observed HSIs are often accompanied by the Gaussian noise, salt and pepper noise, and stripe noise.
This not only reduces the quality of HSIs, but also hinders subsequent applications.
Therefore, to improve the accuracy of subsequent applications, an important research topic is to recover the clear HSI from the observed HSI in the hyperspectral processing tasks \cite{zengTGRS,LLRSSTV}.

Since the spectral bands are the imaging results of the same scene under different wavelengths, they have global correlation or low-rankness.
Based on this prior, HSI can be unfolded into matrices along the spectral mode and then processed by matrix restoration methods.
The low-rank matrix recovery-based method restores the clean HSI by minimizing the ranks of unfolding matrices, which is the most popular way \cite{LRTA,LRMR,LRTV,zengTGRS}.
Zhang et al. \cite{LRMR} propose to disassemble the noisy HSI into the clean HSI and noise information and use the matrix nuclear norm as the relaxation of the rank function.
To improve the low-rank performance of the model, some non-convex methods are introduced into the HSI denoising task.
Xie et al. \cite{logsum} use the $\log sum$ function, Chen et al. \cite{Non_LRMA} use the $\gamma$ function, Lou et al. \cite{L1_L2} use the $L_1-L_2$ function and Gao et al.\cite{SCAD_MCP} use the SCAD and MCP function as the non-convex approximation of the rank function.
Low-rank matrix decomposition is also an effective way to solve the low-rank minimization problem \cite{BFTV,NLMC}.

However, the unfolding operator destroys the high-dimensional structure of HSI.
Treating HSI as a third-order tensor can protect this structure, and tensor rank minimization is the most effective way to overcome the above shortcoming.
Due to the non-unique definitions of the tensor rank, different tensor decompositions and the corresponding tensor ranks are proposed,
e.g.,  the CANDECOMP/PARAFAC (CP) decomposition \cite{CP,CP_r1}, Tucker decomposition \cite{tucker_1,HaLRTC} and tensor singular value decomposition (t-SVD) \cite{t_SVD,TNN}.
CP rank is defined as the minimum of rank-1 tensors by CP decomposition.
Whereas, it is difficult to estimate when HSI is unknown.
By Tucker decomposition, a tensor can be decomposed into core tensor and modal matrices.
Then, Tucker rank is approximated by the sum of modal matrix nuclear norms (SNN) \cite{HaLRTC}.
However, SNN has shortcomings in preserving low-rank structure, and its results are materially sub-optimal \cite{TRPCA,SNN_shortcoming}.

Due to the powerful representation ability of neural networks, the models based-on deep learning have achieved great achievements in HSI denoising tasks.
There are many models that implicitly learn the prior information of HSI from the data, which cannot be represented by traditional regular expressions, e.g., HSID-CNN \cite{HSID-CNN}, HSI-SDeCNN \cite{HSI-SDeCNN}, HSI-DeNet \cite{HSI-DeNet}, QRNN3D \cite{QRNN3D} and so on.
The above models explore the prior information of HSI from the perspective of deep network. 
However, the above models require abundant labeled data for training, which limits their denoising performance by the training data’s adversity and quantity \cite{DS2DP}. 
A more suitable strategy is to combine the advantages of deep learning and traditional regularization \cite{ZENG_HSI_tensor, zhang2020deep, luo2021hyperspectral,wang2021hyperspectral,DS2DP, zhuang2021hyperspectral, rui2021learning}.

Based on t-SVD, the tensor tubal rank has attracted extensive attention recently \cite{zengTCI,LRTRfan, WSTNN, FWTNN}.
Its convex relaxation is the tensor nuclear norm (TNN), which is defined as the sum of the matrix nuclear norms of each frontal slice in the Fourier transformed tensor \cite{TRPCA,t_SVD}.
The minimization of TNN can be quickly solved by convex optimization algorithms, and it is effective in characterizing the low-rankness of tensor.
Therefore, it is widely used in HSI denoising task \cite{LRTRfan,SSTVLRTF, WSTNNSSTV}.
In the definition process of TNN, the information in the image domain of HSI is transformed into frequency information in the Fourier domain.
In the Fourier transformed tensor, the low-frequency slices mainly carry the profile information, while the high-frequency slices mainly carry detail and noise information.
Besides, in each frontal slice of the Fourier transformed tensor, bigger singular values mainly contain information of clean data and smaller singular values mainly contain information of noise.
Meanwhile, HSIs have correlations among different modes, but TNN lacks the flexibility to explore different correlations along with different modes \cite{WSTNN}.
Thus, it is advantageous to improve the flexibility of TNN when considering the prior information about the physical meaning of frequency, correlation among different modes, and singular value.

In this paper, to take full advantage of the above prior information and improve the capability and flexibility of model,
we propose the multi-modal and frequency-weighted tensor nuclear norm (MFWTNN) and non-convex MFWTNN. 
In the proposed MFWTNN, we explore the physical meaning of frequency to explore the low-rankness of HSI in the Fourier domain and give an adaptive calculation method of frequency weights.
In addition, we consider the correlation among the spatial modes and the spectral mode.
Furthermore, based on MFWTNN, we take into account the physical meaning of singular values inside frequency slices and propose the non-convex approximation of their nuclear norms.
Finally, we apply them to the HSI denoising task.
The main contributions of this paper are summarized as follows:

\begin{itemize}
	\item Motivated by Tucker decomposition, we consider the correlation among spectral and spatial modes of HSI and use the mode-$p$ permutation of tensor instead of the mode-$p$ unfolding matrix.
	And then we use their weighted sum to explore the low-rankness of HSI.
	Furthermore, we combine the advantages of multi modes with the improvement of TNN in the frequency domain and propose MFWTNN in HSI denoising.
	\item In each frontal slice of the Fourier transformed tensor,
	we choose the $\log sum$ function as a non-convex approximation to the rank function, which can provide a better approximation to rank function compared with nuclear norms in MFWTNN.
	Different from the one-dimensional weights of the classical non-convex models, NonMFWTNN considers the weight of frequency direction and singular value direction, which is the two-dimensional weight.
	\item According to information types in different frequency slices in the Fourier transformed tensor,
	we adaptively choose bigger weights for slices mainly containing noise information and smaller weights for slices containing profile information,
	which can depress noise more and simultaneously preserve the profile information of clean HSI better.
	\item We develop the efficient alternating direction method of multiplier based algorithm to solve the proposed models,
	and obtain the best restorative performance both on the simulated and real HSI dataset in comparison to all competing HSI denoised methods.
	
\end{itemize}
\section{Notations and Preparations }
\subsection{Notations}
In this section, we follow the work in \cite{tensor_basic,3DTNN} and give some basic notations used about matrix and tensor.
We use bold upper-case letter $\mathbf{X}$ and calligraphic letter $\mathcal{X}$ to denote matrix and tensor, respectively.
For a 3rd-order tensor $\mathcal{X}\in\mathbb{R}^{n_1 \times n_2 \times n_3}$, its $(i,j,k)$-th component is denoted as $\mathcal{X}(i,j,k)$.
For $\mathcal{X}$, $\mathcal{Y} \in\mathbb{R}^{n_1 \times n_2 \times n_3}$, their inner product is defined as $<\mathcal{X},\mathcal{Y}>=\sum_{i=1}^{n_1}\sum_{j=1}^{n_2}\sum_{k=1}^{n_3}x_{ijk}y_{ijk}$.
Then the Frobenius norm of a tensor $\mathcal{X}$ is computed as $\|\mathcal{X}\|_{F}=\sqrt{<\mathcal{X},\mathcal{X}>}$.
The $\ell_1$-norm of tensor $\mathcal{X}$ was calculated as $\|\mathcal{X}\|_1=\sum_{i=1}^{n_1}$ $\sum_{j=1}^{n_2}\sum_{k=1}^{n_3}|x_{ijk}|$.
The $k$-th frontal slice of $\mathcal{X}$ is represented as $\mathbf{X}^{(k)}=\mathcal{X}(:, :,k )$.
For one-dimensional vector $v$, its fast Fourier transform is denoted as $\bar{v}=\texttt{fft}(v)\in\mathbb{C}^{N}$, and inverse transformation is represented as $v=\texttt{ifft}(\bar{v})$.
For three-dimensional tensor $\mathcal{X}$, its fast Fourier transform is represented as $\bar{\mathcal{X}}=\texttt{fft}(\mathcal{X},[],3)$
and its inverse operation is $\mathcal{X}=\texttt{ifft}(\bar{\mathcal{X}},[],3)$.
The mode-$p$ permutation of $\mathcal{X}$ is defined as $\mathcal{X}_{p}=\texttt{permute}(\mathcal{X},p)$, $p=1,2,3$,
where the $m$-th mode-3 slice of $\mathcal{X}_{p}$ is the $m$-th mode-$p$ slice of $\mathcal{X}$, i.e.,
$\mathcal{X}(i,j,k)=\mathcal{X}_{1}(j,k,i)=\mathcal{X}_{2}(k,i,j)=\mathcal{X}_{3}(i,j,k)$.
Also, its inverse operation is  $\mathcal{X}=\texttt{ipermute}(\mathcal{X}_{p},p)$.

\subsection{Problem Formulation}
A clean HSI can be treated as a third-order tensor $\mathcal{X}\in\mathbb{R}^{n_1 \times n_2 \times n_3}$
and is usually assumed to be low-rank.
Corrupted by mixed noise, its observed version can be modeled as
\begin{equation}
\begin{aligned}
\label{eq_1}
\mathcal{Y} = \mathcal{X} + \mathcal{S} + \mathcal{N},
\end{aligned}
\end{equation}
where $\mathcal{Y}, \mathcal{S}, \mathcal{N} \in \mathbb{R}^{n_1 \times n_2 \times n_3}$;
$\mathcal{S}$ denotes the sparse noise;
$\mathcal{N}$ denotes the Gaussian white noise.

HSI denoising aims to recover the clean HSI $\mathcal{X}$ from the observed HSI $\mathcal{Y}$ in (\ref{eq_1}).
Under the framework of regularization theory, it can briefly be formulated as
\begin{equation}
\label{eq_2}
\begin{aligned}
\arg\min_{\mathcal{X}, \mathcal{S}, \mathcal{N}}  \texttt{Rank}(\mathcal{X}) + \lambda\|\mathcal{S}\|_1 + \tau\|\mathcal{N}\|_F^2,
s.t.  \mathcal{Y} = \mathcal{X} + \mathcal{S} + \mathcal{N},
\end{aligned}
\end{equation}
where $\|\cdot\|_1$ describes the sparse noise;
$\|\cdot\|_F$ describes the Gaussian noise;
$\texttt{Rank}(\cdot)$ represents the rank of unknown ideal HSI;
$\lambda$ and $\tau$  are non-negative parameters.

\subsection{Tensor Nuclear Norm  }
Since the tensor rank function is non-convex and discrete, its rank minimization problem is NP-hard \cite{Tensor_NP_hard}.
To overcome computational difficulties, a usual approach is to approximate the non-convex rank function to the convex function and then substitute it into the minimization problem.
Fazel \cite{fazel2002matrix}  proves that the matrix nuclear norm is the convex envelope rank function on the unit sphere of the spectral norm.
Thereafter, the nuclear norm is used as a convex replacement of the rank function.
The rank minimization problem is transformed into nuclear norm minimization, and it is widely used in low-rank matrix recover based HSI denoising tasks \cite{LRMR}.

Furthermore, HSI can be regarded as a third-order tensor.
Then, based on t-SVD \cite{t_SVD}, Lu et al. \cite{TRPCA} propose the tensor nuclear norm (TNN) as the relaxation function of the tensor rank function.
TNN is defined as
\begin{equation}
\label{TNN}
\|\mathcal{X}\|_{\mathrm{TNN}} := \frac{1}{n_3}\sum_{k=1}^{n_3}\|\bar{\textbf{X}}^{(k)}\|_* ,
\end{equation}
where $\bar{\textbf{X}}^{(k)}$is the $k$ th frontal slice of $\bar{\mathcal{X}}$.
Due to the property of $\texttt{fft}$, we have
\begin{equation}
\begin{aligned} \label{TNN_conj}
\left\{ \begin{array}{l}
\boldsymbol{\bar{\textbf{X}}}^{\left( 1 \right)}\in \mathbb{R}^{n_1\times n_2}\\
\texttt{conj}\left( \boldsymbol{\bar{\textbf{X}}}^{\left( i \right)} \right) =\boldsymbol{\bar{\textbf{X}}}^{\left( n_3-i+2 \right)},i=2,\cdots ,\lceil \frac{n_3+1}{2} \rfloor\\
\end{array} \right. .
\end{aligned}
\end{equation}
Based on \eqref{TNN_conj}, frequency component \cite{FWTNN} is defined as
\begin{equation}
\label{tensor_frequency_component}
\begin{aligned}
\left\{\overline{\mathcal{X}}_{i-1}\right\}=\left\{\begin{array}{ll}
\overline{\boldsymbol{\textbf{X}}}^{(1)} & \text { if } i=1 \\
\overline{\boldsymbol{\textbf{X}}}^{(i)}+\overline{\boldsymbol{\textbf{X}}}^{\left(n_{3}-i+2\right)} & \text { if } i=2, \cdots,\left\lceil\frac{n_{3}+1}{2}\right\rceil
\end{array}\right.
\end{aligned}
\end{equation}
Fig. \ref{fig:fwtnn01} shows the relationship between the frontal slice and the frequency component.
\begin{figure}[htbp]
	\centering
	\includegraphics[width=0.75\linewidth]{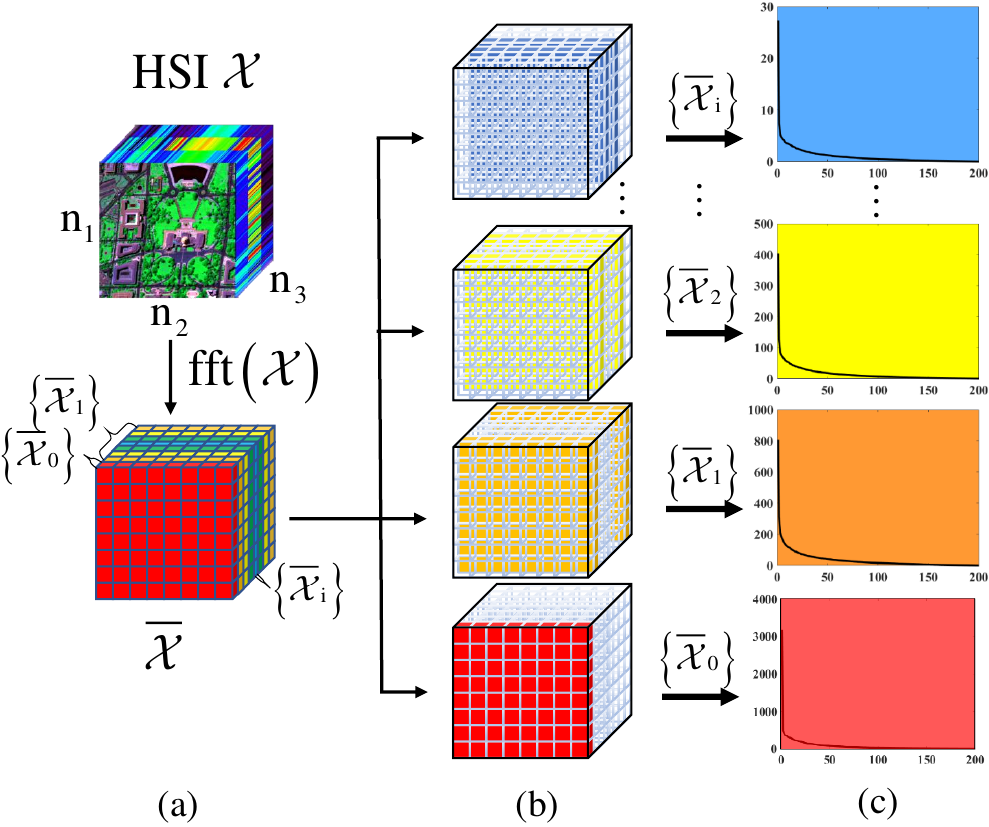}
	\caption{Illustration of the frequency component. (a) $\bar{\mathcal{X}}$ is the Fourier transform of HSI $\mathcal{X}$,(b)Frequency component of $\mathcal{X}$, (c) The singular value curve of the frequency component $\{\bar{\mathcal{X}}_i\}$ of $\bar{\mathcal{X}}$. }
	\label{fig:fwtnn01}
\end{figure}

Due to TNN can be effective in expressing the low-rank characteristics, it has received extensive attention in the HSI denoising task.
However, it focuses on the low-rankness in the spectral dimension.
For HSI, its low-rank structure exists with multiple dimensions \cite{LRTDTV,3DTNN}.
To connect the correlation along different dimensions, Liu et al. \cite{HaLRTC} use modal matrices of the tensor in different directions to explore low-rank structures, and then use SNN as the convex relaxation of the rank function.
SNN is defined as
\begin{equation}
\label{SNN}
\lVert \mathcal{X} \rVert _{\mathrm{SNN}}:=\sum_{p=1}^3{\alpha _p\lVert \boldsymbol{\textbf{X}}_{p} \rVert _*} ,
\end{equation}
where $\boldsymbol{\textbf{X}}_{p}$ is the modal matrix that unfold $\mathcal{X}$ along the direction of the mode-$p$, $\alpha _p $ is weighting parameter of modes and $\sum_{p=1}^3\alpha_p=1$.
However, this way will inevitably destroy the internal structure of HSI.
Zheng et al. \cite{3DTNN} propose the three-directional tensor nuclear norm (3DTNN). They consider the low-rankness of the three modes, which are two spatial modes and one spectral mode.
3DTNN is defined as
\begin{equation}
\label{3DTNN}
\lVert \mathcal{X} \rVert _{\mathrm{3DTNN}}:=\sum_{p=1}^3{\alpha _p\lVert \mathcal{X}_p \rVert _{\mathrm{TNN}}} ,
\end{equation}
where $\mathcal{X}_{p}$ is the mode-$p$ permutation of $\mathcal{X}$.
\begin{figure}[htbp]
	\centering
	\includegraphics[width=1\linewidth]{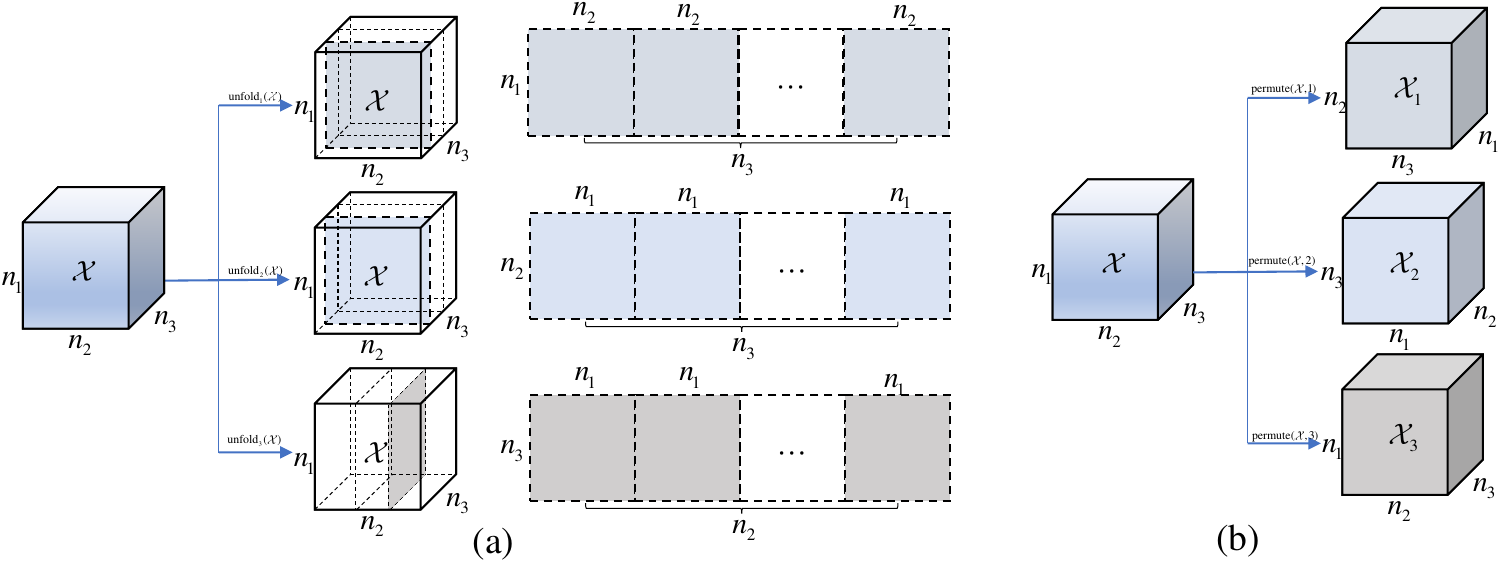}
	\caption{Mode-p unfolding and mode-p permutation of  an $n_1 \times n_2 \times n_3$ tensor. }
	\label{SNN_3DTNN}
\end{figure}
From Fig.\ref{SNN_3DTNN}, the mode-p permutation is more excellent to explore low-rank structures than the mode-p unfolding for third-order tensor, which avoids the unfolding operator.

\section{Multi-modal and Frequency-weighted Tensor Nuclear Norm and Its Non-convex Approximation. }
\subsection{Multi-modal and  Frequency-weighted Tensor Nuclear Norm }
In the definition of TNN \eqref{TNN}, $\texttt{fft}$ plays a core important role in the process of transforming t-products into matrix multiplication in the Fourier domain.
It makes TNN to have fast calculation capabilities.
However, the role of $\texttt{fft}$ is not just for fast computation in the framework of t-SVD.
When Fourier matrix $\boldsymbol{F}_n$ is treated as a transform matrix,  $\bar{\mathcal{X}}$ is the feature tensor of $\mathcal{X}$.
There are some similar work which supports this viewpoint \cite{Framelet_TNN, TTNN_DCT, TTNN_XU2019, TTNN_data,TTNN_linear_transforms,TTNN_Dictionary,TTNN_Self_Supervised,tensor_Qrank}.
To improve the low-rank representation ability of TNN,
Wang et al. \cite{FWTNN} propose the frequency-filtered tensor nuclear norm (FTNN), which is defined as
\begin{equation}
\label{FWTNN111}
\|\mathcal{X}\|_{\mathrm{FTNN}}=\frac{1}{I} \sum_{k=1}^{I} \alpha_{k}\left\|\{\overline{\mathcal{X}}\}_{k}\right\|_{*}, \quad I=\left\lceil\frac{n_{3}+1}{2}\right\rceil
\end{equation}
where  $\left\|\{\overline{\mathcal{X}}\}_{k}\right\|_{*}=\left\|\overline{\mathcal{X}}^{(k)}\right\|_{*}+\left\|\overline{\mathcal{X}}^{\left(n_{3}-k+2\right)}\right\|_{*} $ is the sum of two matrix nuclear norms,  $I$  denotes the number of frequency bands, and  $\alpha_{k} $ is a pre-defined parameter assigned to the  $k$-th frequency band, whose value depends on the prior knowledge.
This work treats the FTNN as a frequency filtering method. $\alpha_k$ characterizes the filtering coefficient for the $k$-th frequency component.
However, these parameters should be data-dependent, and the pre-weight parameters cannot well reflect the real physical meaning between filter coefficient and frequency component.
For the HSI, it is difficult to give $\left\lceil\frac{n_{3}+1}{2}\right\rceil$ pre-weight parameters in advance.
Therefore, we reconsider these weight parameters based on the physical properties of HSI.
For the clean Washington DC Mall dataset, sparse noise is added to form clean data, and then the zero-frequency component and non-zero frequency component of the noise data are separated.
\begin{figure}[htbp] \centering    	
	\subfigure[ ] {
		\label{Iidiam_clean}
		\includegraphics[width=0.21\columnwidth]{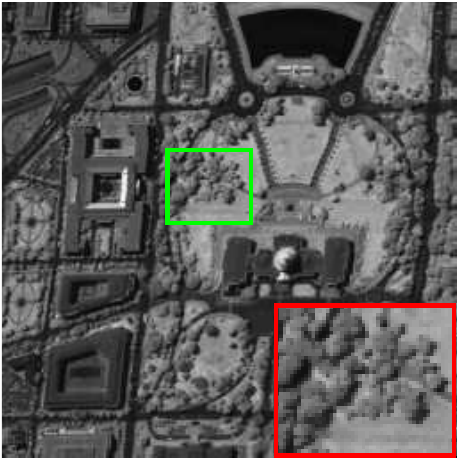}
	}
	\subfigure[  ] {
		\label{Iidiam_noise11}
		\includegraphics[width=0.21\columnwidth]{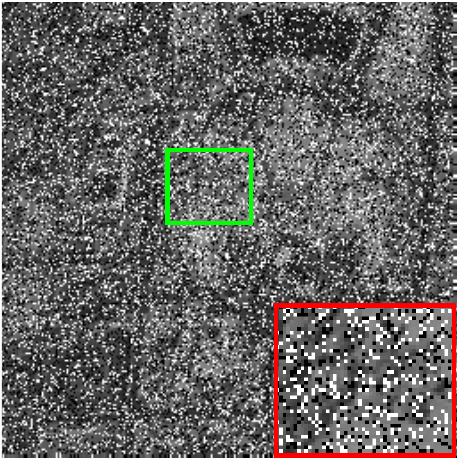}
	}
	\subfigure[  ] {
		\label{Iidiam_denoise1}
		\includegraphics[width=0.21\columnwidth]{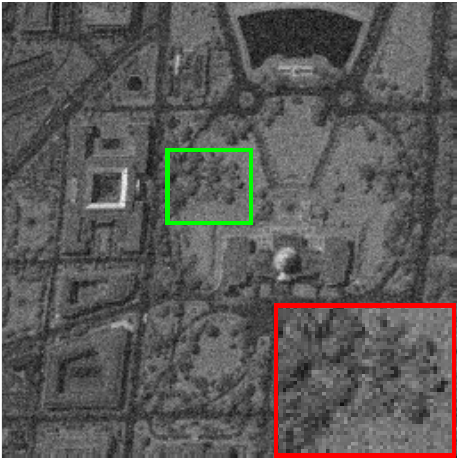}
	}
	\subfigure[ ] {
		\label{Iidiam_denoise_others}
		\includegraphics[width=0.21\columnwidth]{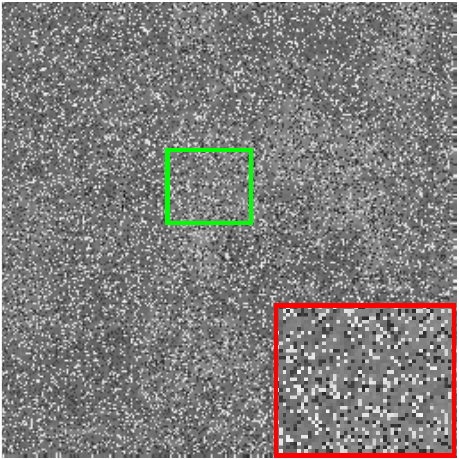}
	}\\
	\subfigure[ ] {
		\label{NNDC_ClearObservedNoisy1}
		\includegraphics[width=0.55\columnwidth]{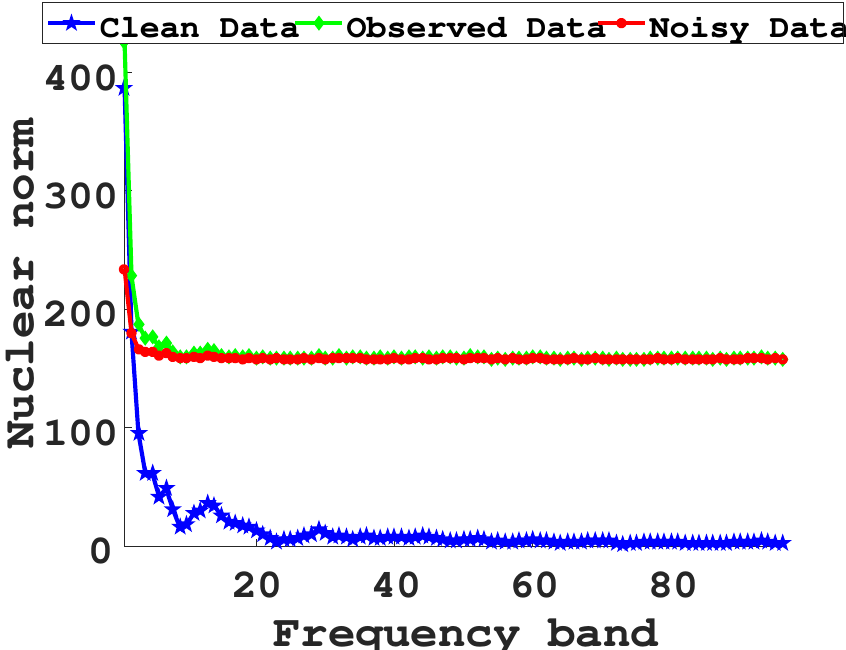}
	}\\
	\caption{Recovery of frequency band 60 of Washington DC Mall dataset. (a) clear data $\mathcal{X}$. (b) observed data $\mathcal{Y}$.  (c) zero frequency component of $\mathcal{Y}$. (d) non-zero frequency component of $\mathcal{Y}$. (e) the distribution of clean data, observed data, and noisy data in the frequency domain. }
	\label{Pavia_Recovery_frequency}
\end{figure}
Fig. \ref{Pavia_Recovery_frequency} shows the grayscale image of the 60th band.
We can find an interesting phenomenon that just using the information of the zero frequency component can get satisfactory results without any processing and the noise is more concentrated in the non-zero frequency component.
As the transform operator, $\texttt{fft}$ plays the role of information separation, which makes the image information and noise information as separate as possible.
The distribution of the nuclear norm shown in Fig. \ref{NNDC_ClearObservedNoisy1} is consistent with the observed phenomenon.
When the HSI is polluted by noise, the variation in the high-frequency component is bigger than that in the low-frequency component.
There is a natural idea that we can reduce the shrinkage of the low-frequency component and increase the penalty of the high-frequency component.
Therefore, base on FTNN, we redefine the frequency-weighted tensor nuclear norm (FWTNN):
\begin{equation}
\label{FWTNN}
\|\mathcal{X}\|_{FW*} := \sum_{k=1}^{n_3}w_{k}(\bar{\textbf{X}}^{(k)})\|\bar{\textbf{X}}^{(k)}\|_*.
\end{equation}
Our weights are different from the weights of FTNN, which are not pre-weight but data dependent.  They are discussed in Section \ref{FW_parameter}.

In the definition of FWTNN, by calculating the nuclear norm along the spectral direction,  $\texttt{fft}$ can capture low rank of the HSI spectrum.
However, the low-rankness of HSI exists not only in the spectral dimension, but also in the spatial dimensions \cite{LRTDTV,HaLRTC}.
Therefore, $\texttt{fft}$ is insufficient in exploring intra-mode and inter-mode correlations of HSI.
Besides, in the HSI denoising task, the HSI band may be polluted by stripe noise in the same direction.
In a low-rank model that only considers the spectral dimension, these stripe noises will be regarded as part of the low-rank image and are hard to remove.
In the frequency domain, for the same HSI, the information expressed by the permuted versions $\mathcal{X}_p$ in different directions is even more different.
Therefore, motivated by SNN \cite{HaLRTC}, it is necessary to consider the low-rankness of different modes for HSI denoising task.
	Different from SNN, to better retain the high-dimensional structures of HSI, its mode-$p$ permutation $\mathcal{X}_p$
	to replace its mode-$p$ unfolding matrix $\boldsymbol{\textbf{X}}_{p}$.
	
Based on the above analysis, we propose the multi-modal and frequency-weighted tensor nuclear norm (MFWTNN) as follows:
\begin{equation}
\label{MFWTNN}
\lVert \mathcal{X} \rVert _{MFW*}:=\sum_{p=1}^3{\alpha _p\lVert \mathcal{X}_p \rVert _{FW*}}=\sum_{p=1}^3{\sum_{k=1}^{n_3}{\alpha _pw_{k}^{p}\lVert \boldsymbol{\bar{\textbf{X}}}_{p}^{\left( k \right)} \rVert _*}} ,
\end{equation}
where $\bar{\mathcal{X}}_{p}=\texttt{fft}(\mathcal{X}_{p},[],3)$;
$\bar{\textbf{X}}^{(k)}_{p}$ is the $k$-th frontal slice of $\bar{\mathcal{X}}_{p}$ and its assigned weight is $w_{k}^{p}$;
$\alpha_p>0$ and $\sum_{p=1}^3\alpha_p=1$.
\subsection{Non-convex Multi-modal and Frequency-weighted Tensor Nuclear Norm }
In the same frequency slice, its singular values are treated equally.
However, the major information of HSI, such as smooth zones and profile, is contained in the larger singular value;
the noise information of HSI is contained in the smaller singular value \cite{zengSP,Non_LRMA,L1_L2,logsum}.
To combine both the frequency and the singular value prior information, we use the $\log$ norm to perform non-convex relaxation of the nuclear norm in the frequency band to more accurately describe the tensor rank function \cite{logsum}.
Thus, we propose the non-convex multi-modal and frequency-weighted tensor nuclear norm (NonMFWTNN) to more accurately describe the low-rankness of HSI. 
It is defined as follows:
\begin{equation}
\begin{aligned}
\lVert \mathcal{X} \rVert _{MFW*,Log}:&=\sum_{p=1}^3{\alpha _p\lVert \mathcal{X}_p \rVert _{FW*,Log}}\\
&=\sum_{p=1}^3{\sum_{k=1}^{n_3}{\alpha _pw_{k}^{p}\log \left( \lVert \boldsymbol{\bar{\textbf{X}}}_{p}^{\left( k \right)} \rVert _* \right)}},
\end{aligned}
\end{equation}
where $\log \left( \lVert \boldsymbol{\bar{X}}_{p}^{\left( k \right)} \rVert _* \right) =\sum_i^{n_{12}}{\left( \log \left( \sigma _i\left( \boldsymbol{\bar{X}}_{p}^{\left( k \right)} \right) +\varepsilon \right) \right)}$, $n_{12}=\texttt{min}(n_1,n_2)$.
Fig. \ref{fig:nonmfwtnnshiyitu06} shows the mechanism of NonMFWTNN.
As can be seen from Fig. \ref{fig:nonmfwtnnshiyitu06}, NonMFWTNN takes into account the relationship between the singular values inside and outside of frequency band.
The weights of classic matrix / tensor non-convex models \cite{Non_LRMA,logsum,3DTNN,PSTNN} are one-dimensional, which only consider the relationship between the singular values inside the matrix.
The way is equivalent to pulling the core singular value matrix $\overline{\mathbf{S}}_{p}$ in Fig. \ref{fig:nonmfwtnnshiyitu06}(c) into a column vector and using the weights as the coefficient of this column vector.
The weight of MFWTNN is to make a Cartesian product of the weight vector $W_F$ in the frequency direction and the weight vector $W_S$ in the singular value direction, which are two-dimensional weights.
If $W_F=1$, it degenerates to classical one-dimensional weights.
The table \ref{tab:similar} shows the distinction and relation between the proposed model and similar work in the weighted parameter.
It can be seen from Fig. \ref{fig:nonmfwtnnshiyitu06} and table \ref{tab:similar} that the key of the NonMFWTNN is that the core matrix is processed.
Liu et.al \cite{IRTPCA} claim that the core matrix is also low-rank.
\begin{table}[htbp]
	\centering
	\caption{The distinction and relation between the proposed model and similar work in the weighted parameter.}
	\scalebox{0.80}{
	\begin{tabular}{c|ccc}
		\hline
		& Modal direction  &Singular value direction & Frequency direction \\
		\hline
		NonMFWTNN &  \Checkmark     &  \Checkmark     & \Checkmark \\
		MFWTNN &    \Checkmark   &       & \Checkmark \\
		3DLogTNN\cite{3DTNN} &     \Checkmark  &     \Checkmark  &  \\
		3DTNN\cite{3DTNN} &    \Checkmark   &       &  \\
		FTNN \cite{FWTNN} &       &       & \Checkmark \\
		PSTNN \cite{PSTNN} &       &     \Checkmark  &  \\
		TNN  \cite{TRPCA} &       &       &  \\
		\hline
	\end{tabular}%
	\label{tab:similar}%
}
\end{table}%

\begin{figure*}
	\centering
	\includegraphics[width=1\linewidth]{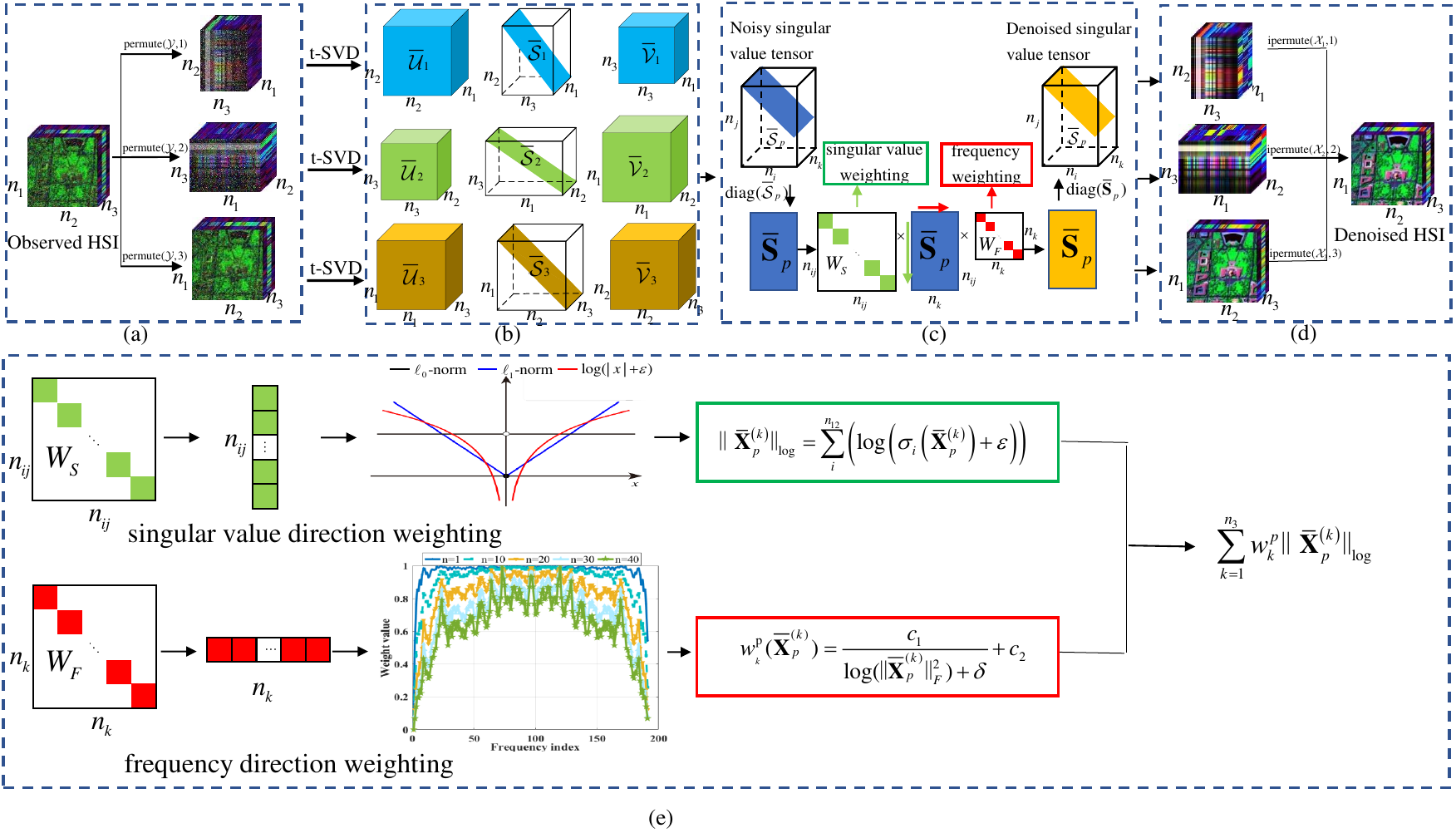}
	\caption{Illustration of NonMFWTNN. (a) Multi-modal permutations and Fourier transforms. (b) t-SVD of $\mathcal{X}_p$, p=1,2,3. (c) Shrinking the singular value tensor $ \bar{\mathcal{S}}_p$, where the size of $\mathcal{X}_p$ is $n_i\times n_j\times n_k$, $n_{ij}$=min$(n_i,n_j)$. The horizontal direction is the frequency weighting direction, and the vertical direction is the singular value weighting direction.  (d) Synthesis from multi-modal denoised results.  (e) The top is the singular value weighting, which is the weight of $\mathbf{\bar{S}}_p$ in the vertical direction. And the bottom is the frequency weighting, which is the weight of $\mathbf{\bar{S}}_p$ in the horizontal direction. }
	\label{fig:nonmfwtnnshiyitu06}
\end{figure*}
\subsection{Frequency-weighted parameter} \label{FW_parameter}

In the definition of MFWTNN and NonMFWTNN, the values of the frequency-weighted parameter $w_{k}^{p}$ are important.
The ideal  $w_{k}^{p}$ can take smaller values that reduce the penalty for the low-frequency slice nuclear norm to protect the main information of HSI in the low-frequency band, and take larger values that increase the penalty for the high-frequency slice nuclear norm to fully remove the noise in the high-frequency band.
It can be seen from Fig. \ref{nuclearnormPavia} that
the nuclear norm is opposite to the distribution characteristics of $w_{k}^{p}$ in the frequency domain.
It takes larger values in the low-frequency band and takes smaller values in the high-frequency band.
And the value of the nuclear norm decreases as the frequency increases.
However, the calculation of the nuclear norm brings high cost, which requires singular value decomposition.
In a finite dimensional space, any two norms are equivalent.
This shows that when transforming from the nuclear norm to the Frobenius norm, they are equal under the normalization condition.
Thus, the reciprocal of the Frobenius norm of each frequency slice matrix can be used as $w_{k}^{p}$.
\begin{equation}
\label{Weights}
w_{k}^{p}\left( \boldsymbol{\bar{\textbf{X}}}_{p}^{\left( k \right)} \right) =\frac{1}{\log \left( \lVert \boldsymbol{\bar{\textbf{X}}}_{p}^{\left( k \right)} \rVert _{F}^{2} \right) +\delta } ,
\end{equation}
where $\delta  = 10^{-6}$.
The purpose of using $\log$ function is to shrink the scale of the value without changing the trend of the Frobenius norm to prevent the extreme influence of the maximum value.
\begin{figure}[h] \centering    	
	\subfigure[ ] {
		\label{nuclearnormPavia}
		\includegraphics[width=0.400\columnwidth]{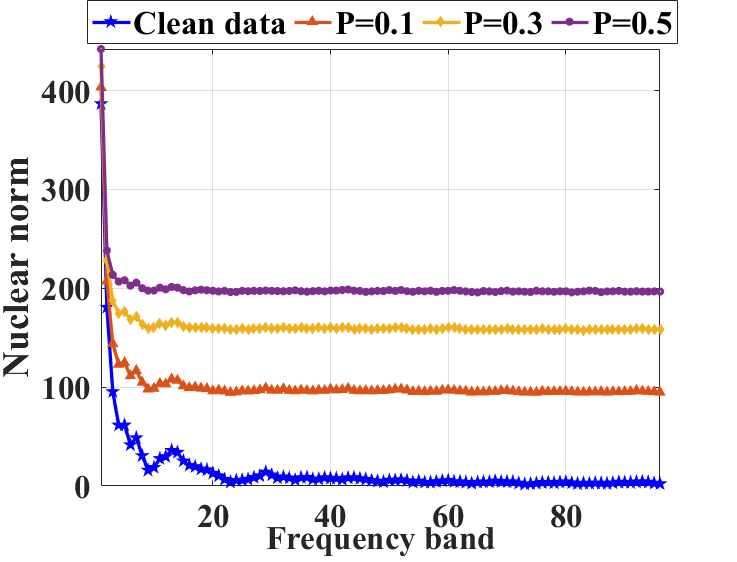}
	}
	\subfigure[  ] {
		\label{Fig511}
		\includegraphics[width=0.410\columnwidth]{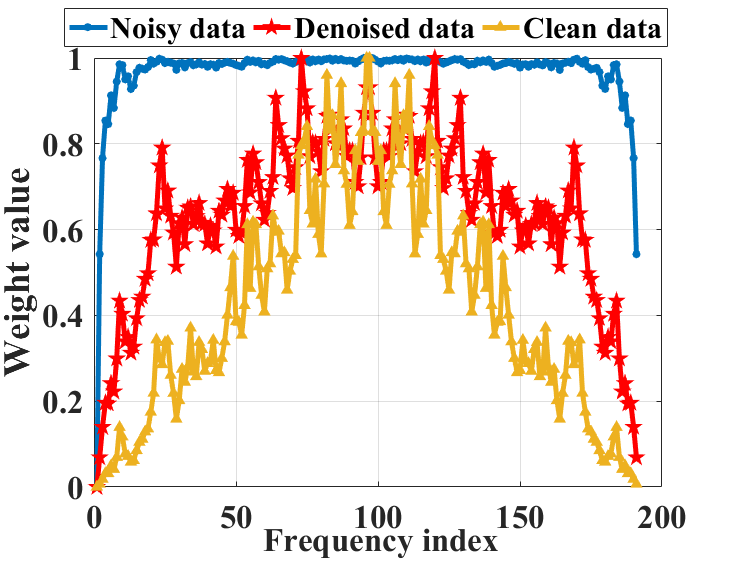}
	}
	\caption{ (a) The distribution of the nuclear norm of each frequency band of DC Mall data under different sparse noise intensities. (b) Frequency weights obtained from different data.
	}
	\label{Fig5411}
\end{figure}

The above method of calculating weights is feasible when the clean HSI is known.
However, for HSI denoising tasks, we often only have noisy HSI, and clear HSI is unknown.
It can be seen from Fig. \ref{Fig511} that the weights calculated by the denoised HSI is closer to the result obtained by the clean HSI than the noisy HSI.
Further, as shown in Fig. \ref{fig:nonmfwtnnshiyitu06}(e) , as the number of iterations $n$ continues to increase, the weights calculated by the denoised HSI will get closer to the result of clean HSI.
Based on this, we propose an iterative calculation method to update $w_{k}^{p}$.
Assuming that the $n$th iteration gets $\boldsymbol{\bar{\textbf{X}}}_{p}^{\left( k , n\right)}$, then the corresponding  $w_{k,(n+1)}^{p}$ of the $n+1$th iteration is
\begin{equation}
\label{Weights02}
w_{k,(n+1)}^{p} =C_1h_{k,n}^{p} +C_2 ,
\end{equation}
where $h_{k,n}^{p} =1/(\log \left( \lVert \boldsymbol{\bar{\textbf{X}}}_{p}^{\left( k,n \right)} \rVert _{F}^{2} \right) +\delta )$.
$C_1$ is the scaling factor after frequency normalization; $C_2$ is a constant.
Affected by noise, there is less useful information from the denoised HSI in the initial iteration.
If it is not corrected, the denoised result will develop in a bad direction.
Since TNN is a robust model, inspired by this, we introduce a constant $C_2$ to improve our weights.
When $C_1$=0, it degenerates into TNN.
Another reason we introduced $C_2$ is that noise exists in all frequency bands.
The previous conclusion is that there is less noise information in the low-frequency part instead of none.
$C_2$ can achieve a slight shrinkage of all frequency bands.
Presetting the value of $C_2$ is a feasible solution.

\section{HSI Denoising via MFWTNN and NonMFWTNN Minimization}
\subsection{Proposed model}
MFWTNN uses frequency components, modal information, and NonMFWTN uses the physical meaning of singular value distribution on the basis of MFWTNN.
They can provide a better approximation to the tensor rank.
Then we use MFWTNN and NonMFWTNN to replace the regularization term $\texttt{Rank}$ in \eqref{eq_2} and propose the HSI denoising model as follows:
\begin{equation}
\label{MFWTNN_main01}
\begin{aligned}
\arg\min_{\mathcal{X}, \mathcal{S}, \mathcal{N}} &\sum_{p=1}^{3}\alpha_p \|\mathcal{Z}_{p}\|_{FW*}+ \lambda\|\mathcal{S}\|_1 + \tau\|\mathcal{N}\|_F^2, \\
&s.t.  \mathcal{Y} = \mathcal{X} + \mathcal{S} + \mathcal{N}.
\end{aligned}
\end{equation}
\begin{equation}
\label{MFWTNN_main0002}
\begin{aligned}
\arg\min_{\mathcal{X}, \mathcal{S}, \mathcal{N}} &\sum_{p=1}^{3}\alpha_p \|\mathcal{Z}_{p}\|_{FW*,Log}+ \lambda\|\mathcal{S}\|_1 + \tau\|\mathcal{N}\|_F^2, \\
&s.t.  \mathcal{Y} = \mathcal{X} + \mathcal{S} + \mathcal{N}.
\end{aligned}
\end{equation}

Introducing auxiliary variables, model \eqref{MFWTNN_main01} and \eqref{MFWTNN_main0002} are equivalent to
\begin{equation}
\label{MFWTNN_main02}
\begin{array}{rl}
\displaystyle \arg \min_{\mathcal{X},\mathcal{S},\mathcal{N}} \sum_{p=1}^{3}\alpha_p \|\mathcal{Z}_{p}\|_{FW*}+\lambda \lVert \mathcal{S} \rVert _1+\tau \lVert \mathcal{N} \rVert _{F}^{2},\\
\displaystyle  s.t.\ \mathcal{Y}=\mathcal{X}+\mathcal{S}+\mathcal{N},\ \mathcal{Z}_p=\mathcal{X}_p,\ p=1,2,3 .
\end{array}
\end{equation}
\begin{equation}
\label{MFWTNN_main03}
\begin{array}{rl}
\displaystyle \arg \min_{\mathcal{X},\mathcal{S},\mathcal{N}} \sum_{p=1}^{3}\alpha_p \|\mathcal{Z}_{p}\|_{FW*,Log}+\lambda \lVert \mathcal{S} \rVert _1+\tau \lVert \mathcal{N} \rVert _{F}^{2},\\
\displaystyle  s.t.\ \mathcal{Y}=\mathcal{X}+\mathcal{S}+\mathcal{N},\ \mathcal{Z}_p=\mathcal{X}_p,\ p=1,2,3 .
\end{array}
\end{equation}

By augmented Lagrangian multiplier method, the Lagrangian function of model \eqref{MFWTNN_main02} can be written as
\begin{equation}
\begin{aligned}
&L_{\mu _p,\beta}\left( \mathcal{X},\mathcal{Z}_p,\mathcal{N},\mathcal{S},\Gamma _p,\Lambda \right) = \lambda \lVert \mathcal{S} \rVert _1+\tau \lVert \mathcal{N} \rVert _{F}^{2} \\
&+ <\mathcal{Y}-\left( \mathcal{X}+\mathcal{S}+\mathcal{N} \right) ,\Lambda > +\frac{\beta}{2}\lVert \mathcal{Y}-\left( \mathcal{X}+\mathcal{S}+\mathcal{N} \right) \rVert _{F}^{2} \\
&+\sum_{p=1}^3{\left\{ \alpha _p\lVert \mathcal{X}_p \rVert _{FW*}+<\mathcal{X}_p-\mathcal{Z}_p,\Gamma _p>+\frac{\mu _p}{2}\lVert \mathcal{X}_p-\mathcal{Z}_p \rVert _{F}^{2} \right\}},\\
\end{aligned}
\end{equation}
where $\Lambda$ and $\Gamma_p$ are the Lagrangian multipliers;
$\beta$ and $\mu _p$ are the Lagrange penalty parameters.
Compared with MFWTNN, the model based on NonMFWTNN only differs in the steps of solving $\mathcal{Z}_p$. See the solution of $\mathcal{Z}_p$ for details.
\subsection{Algorithmic optimization}
ADMM is an effective framework to solve these type of minimization problems \cite{ADMM}.
When other variables are fixed at the $n$-th iteration, each variable in the Lagrangian function can be updated by solving its corresponding subproblem respectively at the $(n+1)$-th iteration.

For $\mathcal{Z}_p$, $p=1,2,3$, their corresponding subproblems can be written as
\begin{equation}
\label{update_Zp01}
\arg\min_{\mathcal{Z}_{p}} \alpha _p\lVert \mathcal{Z}_p \rVert _{FW*}+\frac{\mu _p}{2}\left\| \mathcal{Z}_p-\left( \mathcal{X}_{p}^{n}+\frac{\Gamma _{p}^{n}}{\mu _p} \right) \right\| _{F}^{2}.
\end{equation}
The closed-form solution of \eqref{update_Zp01} obtained from theorem 1 of \cite{FWTNN}  are as follows:
\begin{equation}
\begin{array}{rl}
\mathcal{Z}_{p}^{n+1}=\mathrm{FTSVT}^{w\left( \mathcal{X}_p^n \right) ,\frac{\alpha _p}{\mu _p}}\left( \mathcal{X}_{p}^{n}+\frac{\Gamma _{p}^{n}}{\mu _p} \right) . \\
\end{array}\label{updateZp01}
\end{equation}

Similarly, when $\texttt{Rank}(\mathcal{X})$ is replaced by $\sum_{p=1}^3{ \alpha _p\lVert \mathcal{X}_p \rVert _{FW*,Log}}$ in \eqref{MFWTNN_main03}, the sub-problem of $\mathcal{Z}_p$ is rewritten as
\begin{equation}
\label{update_Zp02}
\arg\min_{\mathcal{Z}_{p}} \alpha _p\lVert \mathcal{Z}_p \rVert _{FW*,Log}+\frac{\mu _p}{2}\left\| \mathcal{Z}_p-\left( \mathcal{X}_{p}^{n}+\frac{\Gamma _{p}^{n}}{\mu _p} \right) \right\| _{F}^{2}.
\end{equation}
The closed-form solution of \eqref{update_Zp02} obtained from theorem \ref{thm2} are as follows:
\begin{equation}
\begin{array}{rl}
\mathcal{Z}_{p}^{n+1}=\mathcal{D}\mathcal{W}^{w\left( \mathcal{X}_p^n \right),\varepsilon  ,\frac{\alpha _p}{\mu _p}}\left( \mathcal{X}_{p}^{n}+\frac{\Gamma _{p}^{n}}{\mu _p} \right) . \\
\end{array}\label{updateZp02}
\end{equation}

For  $ \mathcal{X} $, its corresponding subproblem can be reformulated as
\begin{equation}
\begin{aligned}
\mathcal{X}^{n+1}=
&\arg\min_{\mathcal{X}} \sum_{p=1}^3{\frac{\mu _p}{2}}\lVert \mathcal{X}-\mathcal{Z}_{p}^{n+1}+\frac{\Gamma _{p}^{n}}{\mu _p} \rVert _{F}^{2}\\
&+\frac{\beta}{2}\lVert \mathcal{Y}-\left( \mathcal{X}+\mathcal{S}^n+\mathcal{N}^n \right) +\frac{\Lambda ^n}{\beta} \rVert _{F}^{2}.
\end{aligned}
\end{equation}
$ \mathcal{X} $ can be updated as follows:
\begin{equation}
\begin{array}{rl}
\mathcal{X}^{n+1}=\frac{\sum_{p=1}^3{\mu _p}\left( \mathcal{Z}_{p}^{n+1}-\frac{\Gamma _{p}^{n}}{\mu _p} \right) +\beta \left( \mathcal{Y}-\mathcal{S}^n-\mathcal{N}^n+\frac{\Lambda ^n}{\beta} \right)}{1+\beta}. \\
\end{array}\label{update_X}
\end{equation}

For  $ \mathcal{S} $, its corresponding subproblem can be reformulated as
\begin{equation}
\label{solve_S}
\arg\min_{\mathcal{S}}\lambda \lVert \mathcal{S} \rVert _1+\frac{\beta}{2}\lVert \mathcal{Y}-\left( \mathcal{X}^{n+1}+\mathcal{S}+\mathcal{N}^{n} \right) +\frac{\Lambda ^n}{\beta} \rVert _{F}^{2}.
\end{equation}
It can be solved by the soft-thresholding operator \cite{soft_threshold} as:
\begin{equation}
\begin{array}{rl}
\mathcal{S}^{n+1}=\texttt{shrink}\left( \mathcal{Y}-\mathcal{X}^{n+1}-\mathcal{N}^{n+1}+\frac{\Lambda ^n}{\beta},\frac{\lambda}{\beta} \right).
\end{array}\label{update_S}
\end{equation}

For  $ \mathcal{N} $, its corresponding subproblem can be reformulated as
\begin{equation}
\arg\min_{\mathcal{N}} \tau \|\mathcal{N}\|_{F}^{2}+\frac{\beta}{2}\|\mathcal{Y}-\left( \mathcal{X}^{n+1}+\mathcal{S}^{n+1}+\mathcal{N} \right) +\frac{\Lambda ^n}{\beta}\|_{F}^{2}.
\end{equation}
$ \mathcal{N} $ can be updated as follows :
\begin{equation}
\begin{array}{rl}
\mathcal{N}^{n+1}=\frac{\beta \left( \mathcal{Y}-\mathcal{X}^{n+1}-\mathcal{S}^n+\frac{\Lambda ^n}{\beta} \right)}{2\tau +\beta}.
\end{array}\label{update_N}
\end{equation}

For multipliers $\Gamma_{p}$ and $\Lambda$, they can be updated as follows:
\begin{equation}
\left\{ \begin{array}{l}
\Gamma _{p}^{n+1}=\Gamma _{p}^{n}+\mu _p\left( \mathcal{Z}_p^{n+1}-\mathcal{X}^{n+1} \right), p=1,2,3\\
\Lambda ^{n+1}=\Lambda ^n+\beta \left( \mathcal{Y}-\mathcal{X}^{n+1}-\mathcal{S}^{n+1}-\mathcal{N}^{n+1} \right). \\
\end{array} \right. \label{update_multiplier}
\end{equation}

\subsection{Time Complexity Analysis}
The HSI denoising models based on MFWTNN and NonMFWTNN can be solved by Algorithm 1.
Further, we discuss the time complexity of this algorithm.
Computing $\mathcal{Z}_p$, $p=1,2,3$, in both MFWTNN-based solver and NonMFWTNN based solver, has a complexity  of	
$O(n_1n_2n_3\log ( n_1n_2n_3 )$   $ +\sum_{i=1}^3{\max ( n_i,n_{i+1} )}\min ^2( n_i,n_{i+1} ) n_{i+2})$, where $n_4=n_1, n_5=n_2$.
Calculating $w_k$ has a complexity of $O(\sum_{i=1}^3{\min \left( n_i,n_{i+1} \right) n_{i+2}})$.
The cost of computing $ \mathcal{X} $, $ \mathcal{S} $, and $ \mathcal{N} $ are all $O(n_1n_2n_3)$.
Then, by calculating the complexity of the above variables, the total complexity of the proposed algorithm can be obtained as
$O(n_1n_2n_3\log \left( n_1n_2n_3 \right) +\sum_{i=1}^3{\max \left( n_i,n_{i+1} \right) \min ^2( n_i,n_{i+1} ) n_{i+2}}$ $+\sum_{i=1}^3{\min \left( n_i,n_{i+1} \right)}n_{i+2}+3n_1n_2n_3)$.
\begin{algorithm}[htbp]
	\caption{HSI denoising via the MFWTNN and NonMFWTNN minimization}
	{\bf Input:} The observed tensor $\mathcal{Y}$; weight parameters $c_1$, $c_2$, $\varepsilon $;  regularization parameters $\lambda$, $\tau$; and  stopping criterion $\epsilon$. \\
	{\bf Output:} Denoised image $\mathcal{X}$. \\
	\hspace*{0.02in} 1: Initialize: $\mathcal{Y}$=$\mathcal{X}$=$\mathcal{S}$=$\mathcal{N}$=$\mathcal{Z}_p$; $\Gamma _p=\Lambda =0$;  $\mu_p$=$\beta $=$10^{-3}$; \\
	\hspace*{0.07in}$p=1,2,3$; $\mu_{max}=10^{10}$;  $\rho=1.2$ and $n=1$. \\
	\hspace*{0.02in} 2: Repeat until convergence:\\
	\hspace*{0.02in} 3. Update $\mathcal{X}, \mathcal{S}, \mathcal{N}, \mathcal{Z}_p, \Lambda$, $\beta$, $\mu_p$, $w_k$, $\Gamma_p$ via\\
	\hspace*{0.2in} step 1: Update $ \mathcal{Z}_p $ by \eqref{updateZp01}  or \eqref{updateZp02}  \\
	\hspace*{0.2in} step 2: Update $ \mathcal{X} $ by \eqref{update_X}\\
	\hspace*{0.2in} step 3: Update $ \mathcal{S} $ by \eqref{update_S}\\
	\hspace*{0.2in} step 4: Update $ \mathcal{N} $ by \eqref{update_N}\\
	\hspace*{0.2in} step 5: Update $ \Gamma _{p}, \Lambda $ by \eqref{update_multiplier}\\	
	\hspace*{0.2in} step 6: Update $\mu_p=\rho\mu_p$, $\beta=\rho\beta$, $w_k$ by \eqref{Weights02}\\
	\hspace*{0.02in} 4: Check the convergence condition.
\end{algorithm}

\section{Experiment Results and Discussion}

In this section, we conduct experiments on simulated and real-world HSIs to substantiate the effectiveness of the proposed MFWTNN and NonMFWTNN model for HSI denoising.
For evaluating the effectiveness of the models, our models are compared with six state-of-the-art HSI denoised methods, i.e., LRTA \cite{LRTA}, BM4D \cite{BM4D}, LRMR \cite{LRMR}, LRTDTV \cite{LRTDTV}, 3DTNN \cite{3DTNN} and 3DLogTNN \cite{3DTNN}.
Since LRTA and BM4D are only suitable for removing Gaussian noise and our experiments are to remove hybrid noise, for fair comparison, we first use the RPCA \cite{TRPCA} to remove sparse noise, and then use these models to obtain denoised results.
Before the experiments, all pixels of models in the HSI band are normalized to [0,1].
In all experiments, the parameters in these comparative methods are manually adjusted according to the suggestions in relevant papers.
In Section \ref{Discussion_Parameter},  the parameters of the proposed models are discussed in detail.
We use visual comparison and quantitative comparison to comprehensively evaluate the performance of different denoising methods.
In quantitative comparisons, we use five quantitative picture quality indices (PQIs), i.e., PSNR \cite{PSNR}, SSIM \cite{SSIM}, FSIM \cite{FSIM}, ERGAS \cite{EGRAS} and SAM \cite{SAM}, to assess the denoised results.
Better denoised results correspond to higher values in MPSNR, MSSIM and MFSIM, and lower values in MERGAS and MSAM.

\subsection{Simulated HSI Data Experiments}
We select two HSI datasets for the simulated experiments to evaluate the performance of our methods (see Fig. \ref{PaviaDCW_3Dtube}).
One is collected by the reflection optical system imaging spectrometer (ROSIS-03), which is named the Pavia City Center dataset\footnote{\url{http://www.ehu.eus/ccwintco/index.php/}}.
Its size is $1096 \times 1096$, with a total of $102$ bands.
The reason we chose a sub-block with a size of $200 \times 200 \times 80$ as a simulation dataset is that some bands are seriously polluted by noise and have lost the meaning of reference.
The other one is collected by the HSI acquisition sensor (HYDICE), which is named the Washington DC Mall dataset\footnote{\url{http://lesun.weebly.com/hyperspectral-data-set.html}}.
Its size is $1208 \times 307$, with a total of 191 bands.
We choose a sub-block with a size of $256 \times 256 \times 191$ as a simulation dataset.

The observed HSIs are usually degraded by hybrid noises, i.e. Gaussian noise, sparse noise and stripe noise.
Thus, to better simulate the degradation mechanism of HSI, we add various intensities of hybrid noises to clean HSI.
In the simulation experiments, hybrid noises with eight different intensity levels are added to the simulation dataset band by band.
Let $G$ and $P$  denote the variance of Gaussian white noise and percentage of impulse noise, respectively.
In noise cases 1-5, the same intensity noise is added to all the bands.
Specifically, in noise case 1, $G$=0.1 and $P$=0.2;
in noise case 2, $G$=0.1 and $P$=0.3;
in noise case 3, $G$=0.1 and $P$=0.4;
in noise case 4, $G$=0.15 and $P$=0.2;
in noise case 5, $G$=0.2 and $P$=0.2;
in noise cases 6-8, the noise intensities are different for different bands;
in noise case 6, $G$=0.1 and $P$ is randomly selected from 0.2 to 0.4;
in noise case 7, $G$ is randomly selected from 0.1 to 0.3 and $P$=0.2;
in noise case 8, $G$ is randomly selected from 0.1 to 0.3 and $P$ is randomly selected from 0.1 to 0.3. In addition,  stripe noises are added to the 54th-64th bands of the Pavia City Center and the 70th–100th bands of the Washington DC Mall.
\begin{figure}[h] \centering    	
	\subfigure[ ] {
		\label{Pavia_3Dtube}
		\includegraphics[width=0.33\columnwidth]{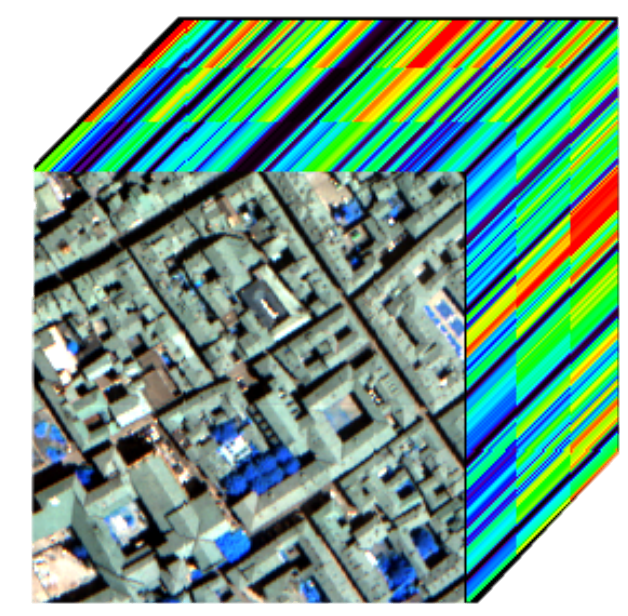}
	}
	\subfigure[  ] {
		\label{DCW_3Dtube}
		\includegraphics[width=0.33\columnwidth]{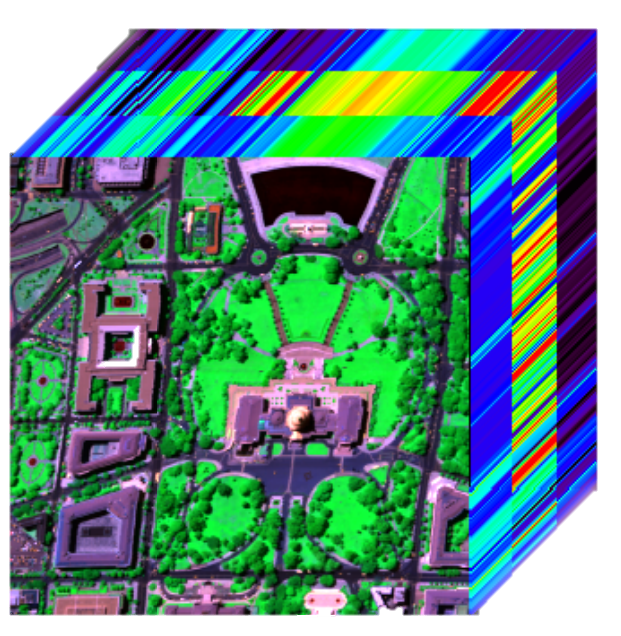}
	}
	\caption{ Datasets used in the simulated experiment. (a) Pavia City Center dataset (R:20, G:50, B:80). (b) Washington DC Mall dataset (R:50, G:80, B:150).
	}
	\label{PaviaDCW_3Dtube}
\end{figure}

1) Pavia City Center: In this subsection, 
we show the evaluation results of all denoising models including visual and quantitative quality on the Pavia City Center dataset.

\begin{table*}[htbp]
	\centering
	\caption{
		Quantitative comparison and time of eight denoising models on Pavia City Center dataset under eight noise cases.
	}
	\scalebox{0.80}{
		\begin{tabular}{cccccccccccc}
			\toprule
			Case  & Level & Index & Noise & LRTA  & BM4D  & LRMR  & LRTDTV & 3DTNN & 3DLogTNN & MFWTNN & NonMFWTNN \\
			\midrule
			\multirow{6}[2]{*}{Case 1} & \multirow{3}[1]{*}{G=0.1} & MPSNR & 11.122 & 29.440 & 29.701 & 31.259 & 32.297 & 31.696 & 33.487 & 32.470 & \textbf{34.398} \\
			&       & MSSIM & 0.105 & 0.905 & 0.920  & 0.905 & 0.914 & 0.924 & 0.942 & 0.928 & \textbf{0.945} \\
			&       & MFSIM & 0.510  & 0.947 & 0.949 & 0.946 & 0.942 & 0.954 & 0.964 & 0.955 & \textbf{0.968} \\
			& \multirow{3}[1]{*}{P=0.2} & MERGAS & 1013.518 & 123.549 & 119.608 & 99.715 & 87.316 & 94.645 & 77.865 & 89.194 & \textbf{69.458} \\
			&       & MSAM  & 45.712 & 6.805 & 5.840  & 6.824 & 4.930  & 4.844 & 4.394 & 5.656 & \textbf{4.339} \\
			&       & time/s  & -    & 14.013 & 115.944 & 116.111 & 128.850 & 56.638 & 86.658 & 66.917 & 96.791 \\
			\midrule
			\multirow{6}[2]{*}{Case 2} & \multirow{3}[1]{*}{G=0.1} & MPSNR & 9.541 & 28.484 & 28.864 & 30.173 & 31.188 & 30.657 & 32.823 & 31.600  & \textbf{33.420} \\
			&       & MSSIM & 0.065 & 0.887 & 0.910  & 0.881 & 0.897 & 0.903 & 0.933 & 0.915 & \textbf{0.935} \\
			&       & MFSIM & 0.452 & 0.938 & 0.942 & 0.935 & 0.931 & 0.941 & 0.958 & 0.947 & \textbf{0.961} \\
			& \multirow{3}[1]{*}{P=0.3} & MERGAS & 1216.767 & 138.09 & 132.124 & 112.427 & 99.239 & 106.370 & 83.831 & 97.753 & \textbf{77.512} \\
			&       & MSAM  & 47.676 & 7.204 & 6.236 & 7.250  & 5.349 & 5.619 & 4.570  & 6.065 & \textbf{4.528} \\
			&       & time/s  & -    & 14.006 & 116.229 & 109.304 & 127.285 & 59.130 & 89.352 & 67.251 & 98.926 \\
			\midrule
			\multirow{6}[2]{*}{Case 3} & \multirow{3}[1]{*}{G=0.1} & MPSNR & 8.384 & 27.265 & 27.745 & 28.937 & 29.741 & 27.696 & 31.339 & 30.570 & \textbf{31.879} \\
			&       & MSSIM & 0.043 & 0.859 & 0.892 & 0.848 & 0.873 & 0.790  & \textbf{0.915} & 0.898 & 0.913 \\
			&       & MFSIM & 0.411 & 0.926 & 0.933 & 0.920  & 0.917 & 0.892 & 0.945 & 0.936 & \textbf{0.948} \\
			& \multirow{3}[1]{*}{P=0.4} & MERGAS & 1390.482 & 158.869 & 150.540 & 128.938 & 117.629 & 148.361 & 99.511 & 109.129 & \textbf{92.185} \\
			&       & MSAM  & 48.58 & 7.629 & 6.745 & 7.740  & 6.007 & 9.875 & 5.042 & 6.506 & \textbf{4.925} \\
			&       & time/s  & -     & 14.197 & 111.924 & 113.685 & 132.354 & 66.067 & 94.827 & 72.480 & 107.781 \\
			\midrule
			\multirow{6}[2]{*}{Case 4} & \multirow{3}[1]{*}{G=0.15} & MPSNR & 10.716 & 27.026 & 27.420 & 29.013 & 30.111 & 29.089 & 31.133 & 30.451 & \textbf{32.021} \\
			&       & MSSIM & 0.088 & 0.848 & 0.879 & 0.849 & 0.867 & 0.868 & 0.905 & 0.891 & \textbf{0.913} \\
			&       & MFSIM & 0.483 & 0.923 & 0.925 & 0.921 & 0.913 & 0.920 & 0.942 & 0.933 & \textbf{0.948} \\
			& \multirow{3}[1]{*}{P=0.2} & MERGAS & 1060.262 & 161.397 & 154.021 & 127.846 & 112.153 & 126.82 & 101.226 & 110.568 & \textbf{91.095} \\
			&       & MSAM  & 46.411 & 7.825 & 6.671 & 7.690  & 5.848 & 5.895 & 5.227 & 6.562 & \textbf{5.050} \\
			&       & time/s  & -     & 14.683 & 112.103 & 112.627 & 131.750 & 62.625 & 94.155 & 72.962 & 104.037 \\
			\midrule
			\multirow{6}[2]{*}{Case 5} & \multirow{3}[1]{*}{G=0.2} & MPSNR & 10.265 & 25.163 & 25.619 & 27.321 & 28.605 & 27.009 & 29.134 & 29.088 & \textbf{30.520} \\
			&       & MSSIM & 0.074 & 0.789 & 0.835 & 0.791 & 0.821 & 0.799 & 0.860  & 0.856 & \textbf{0.885} \\
			&       & MFSIM & 0.461 & 0.901 & 0.900 & 0.898 & 0.886 & 0.88  & 0.917 & 0.913 & \textbf{0.931} \\
			& \multirow{3}[1]{*}{P=0.2} & MERGAS & 1115.059 & 198.966 & 188.524 & 154.623 & 133.338 & 160.649 & 126.652 & 128.254 & \textbf{108.312} \\
			&       & MSAM  & 46.978 & 8.379 & 7.17  & 8.385 & 6.685 & 6.923 & 5.981 & 7.143 & \textbf{5.580} \\
			&       & time/s  & -     & 14.355 & 115.494 & 117.568 & 126.082 & 58.988 & 88.231 & 67.559 & 97.910 \\
			\midrule
			\multirow{6}[2]{*}{Case 6} & \multirow{3}[1]{*}{G=0.1} & MPSNR & 9.515 & 28.374 & 28.761 & 30.090 & 31.062 & 30.400  & 32.569 & 31.512 & \textbf{33.405} \\
			&       & MSSIM & 0.066 & 0.885 & 0.908 & 0.879 & 0.895 & 0.894 & 0.927 & 0.913 & \textbf{0.937} \\
			&       & MFSIM & 0.452 & 0.937 & 0.942 & 0.933 & 0.931 & 0.938 & 0.956 & 0.946 & \textbf{0.960} \\
			& \multirow{3}[1]{*}{P=(0.2,0.4)} & MERGAS & 1235.513 & 139.857 & 133.652 & 113.657 & 100.96 & 109.752 & 86.176 & 98.851 & \textbf{78.252} \\
			&       & MSAM  & 47.824 & 7.278 & 6.281 & 7.294 & 5.501 & 6.078 & 4.991 & 6.148 & \textbf{4.589} \\
			&       & time/s  & -   & 14.490 & 115.407 & 116.126 & 128.118 & 59.013 & 87.868 & 67.084 & 97.160 \\
			\midrule
			\multirow{6}[2]{*}{Case 7} & \multirow{3}[1]{*}{G=(0.1,0.3)} & MPSNR & 10.193 & 25.281 & 25.589 & 27.437 & 28.773 & 26.772 & 29.211 & 29.218 & \textbf{30.800} \\
			&       & MSSIM & 0.074 & 0.803 & 0.833 & 0.798 & 0.826 & 0.784 & 0.868 & 0.861 & \textbf{0.893} \\
			&       & MFSIM & 0.461 & 0.906 & 0.898 & 0.900   & 0.890  & 0.875 & 0.920  & 0.916 & \textbf{0.934} \\
			& \multirow{3}[1]{*}{P=0.2} & MERGAS & 1131.564 & 198.204 & 189.307 & 153.217 & 131.728 & 167.334 & 126.702 & 126.994 & \textbf{105.541} \\
			&       & MSAM  & 47.045 & 9.057 & 7.138 & 8.518 & 6.650  & 7.733 & 6.194 & 6.885 & \textbf{5.672} \\
			&       & time/s  & -     & 14.881 & 116.565 & 116.969 & 125.336 & 58.311 & 87.113 & 67.129 & 96.770 \\
			\midrule
			\multirow{6}[2]{*}{Case 8} & \multirow{3}[1]{*}{G=(0.1,0.3)} & MPSNR & 9.052 & 24.289 & 24.653 & 26.402 & 27.922 & 25.048 & 28.711 & 28.489 & \textbf{29.332} \\
			&       & MSSIM & 0.053 & 0.770  & 0.812 & 0.759 & 0.803 & 0.673 & 0.852 & 0.839 & \textbf{0.865} \\
			&       & MFSIM & 0.425 & 0.893 & 0.887 & 0.884 & 0.876 & 0.836 & 0.913 & 0.903 & \textbf{0.916} \\
			& \multirow{2}[0]{*}{P=(0.2,0.4)} & MERGAS & 1299.108 & 222.242 & 210.596 & 172.408 & 145.080 & 207.854 & 134.585 & 136.996 & \textbf{125.036} \\
			&       & MSAM  & 48.246 & 9.538 & 7.394 & 9.122 & 7.027 & 11.026 & 7.233 & 7.689 & \textbf{7.526} \\
			& stipes & time/s  & -     & 13.861 & 116.280 & 113.436 & 125.442 & 60.487 & 88.043 & 67.376 & 97.515 \\
			\bottomrule
		\end{tabular}%
		\label{tab:PaviaTablePQI}%
	}
\end{table*}%

Table \ref{tab:PaviaTablePQI} reports quantitative comparison and CPU running time of all the compared methods under the condition of the eight noise cases on the Pavia City Center dataset.
When compared to the other methods, our proposed methods obtain the optimal PQIs among all the denoising models in most noise cases, indicating the advantage of the proposed methods in HSI denoising.
Although LRTA and LRTDTV also consider the low-rankness among different modes, they can not fully exploit the connection among the modes.
BM4D  use non-local similarity information and LRTDTV use spatial smoothing information.
These make the denoised result excessively smooth and lack texture details.
Although 3DTNN considers the correlation among different modes, it is still not accurate enough to represent the low-rankness of HSI.
3DlogTNN and MFWTNN use the physical meaning of singular values and frequency components to improve 3DTNN respectively.
But these are not the best descriptions of the low-rankness of HSI.
NonMFWTNN inherits the advantages of 3DLogTNN and MFWTNN, and improves the ability to express low-rankness of HSI in terms of frequency components and singular values.
Although the time complexity of our models is higher, our models have obtained better-denoised results.
\begin{figure*}
	\centering
	\includegraphics[width=0.90\columnwidth]{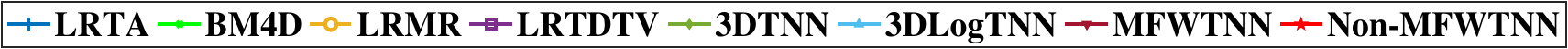}\\	
	\subfigure[Case 1]{
		\begin{minipage}[t]{0.21\textwidth}
			\includegraphics[width=1.00\columnwidth]{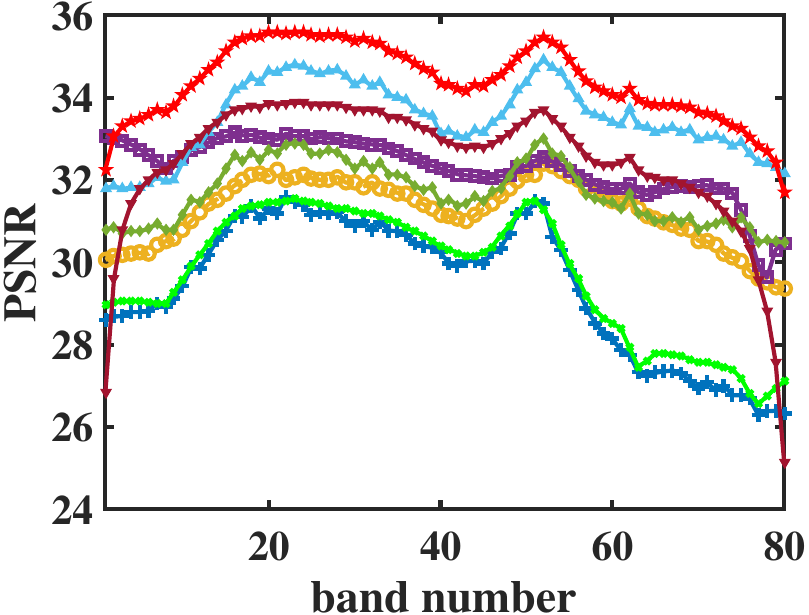}\\
			\includegraphics[width=1.00\columnwidth]{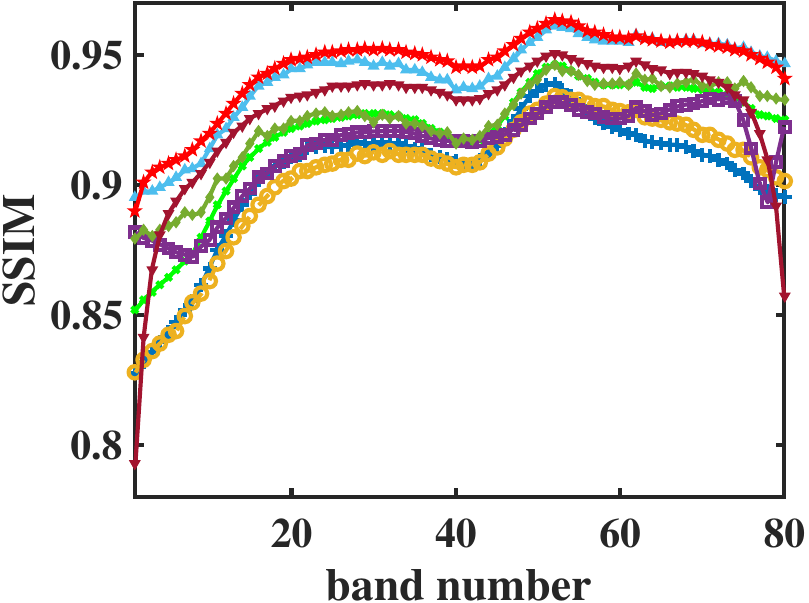}
		\end{minipage}
	}
	\subfigure[Case 2]{
		\begin{minipage}[t]{0.21\textwidth}
			\includegraphics[width=1.00\columnwidth]{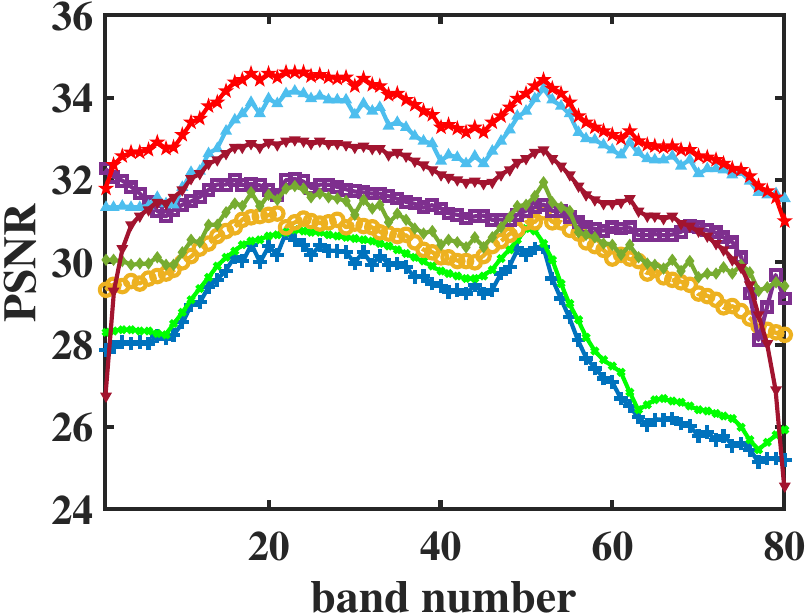}\\
			\includegraphics[width=1.00\columnwidth]{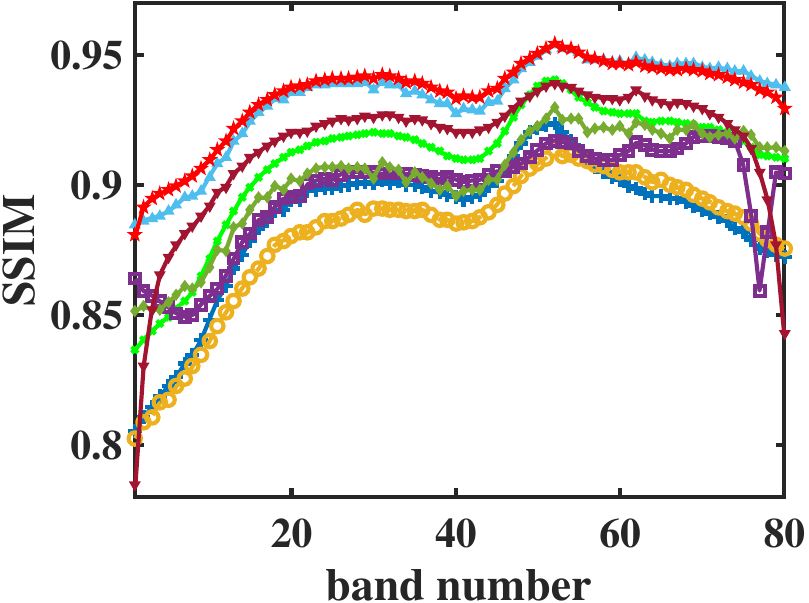}
		\end{minipage}
	}
	\subfigure[Case 3]{
		\begin{minipage}[t]{0.21\textwidth}
			\includegraphics[width=1.00\columnwidth]{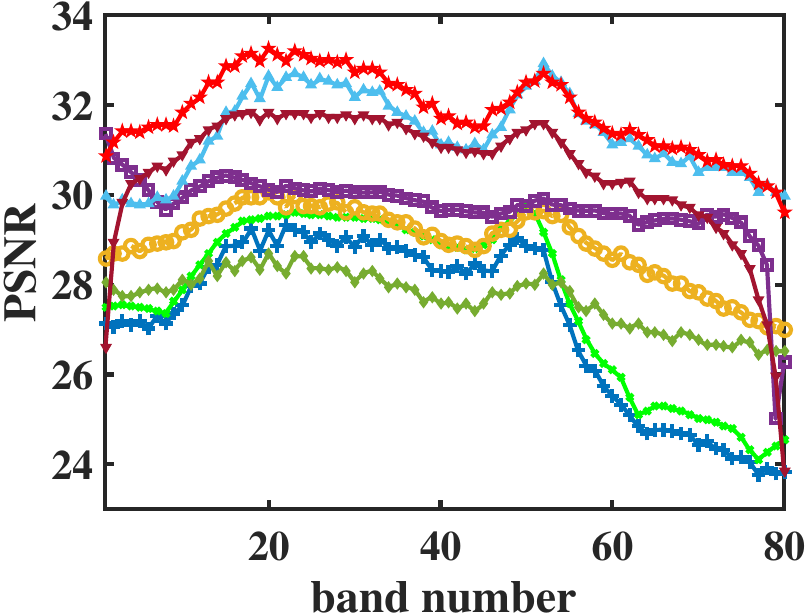}\\
			\includegraphics[width=1.00\columnwidth]{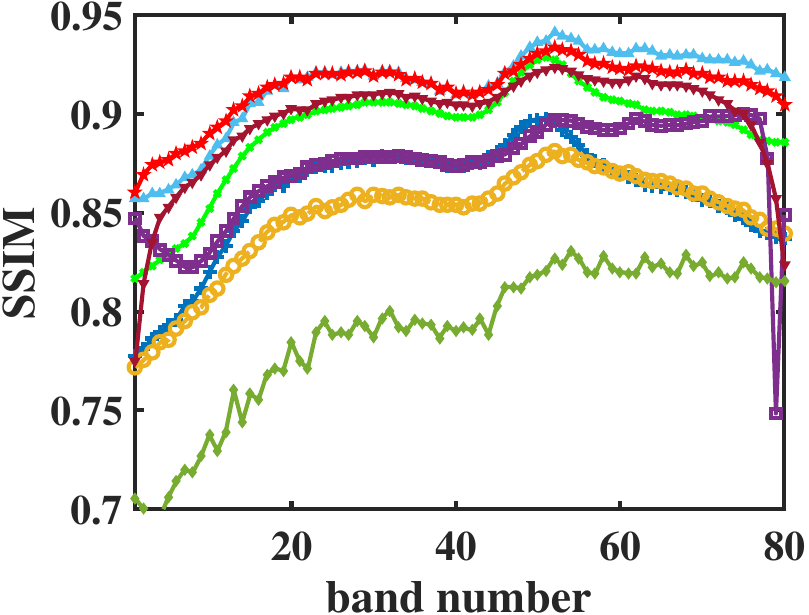}
		\end{minipage}
	}
	\subfigure[Case 4]{
		\begin{minipage}[t]{0.21\textwidth}
			\includegraphics[width=1.00\columnwidth]{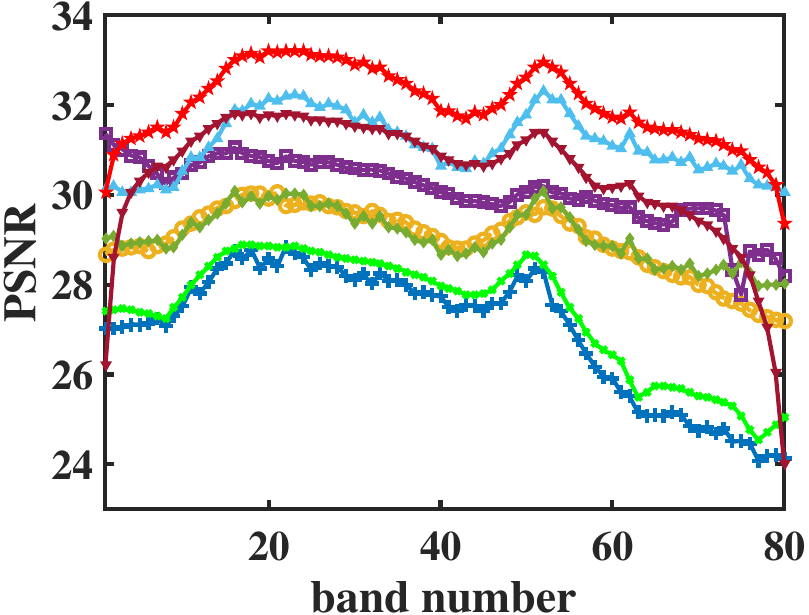}\\
			\includegraphics[width=1.00\columnwidth]{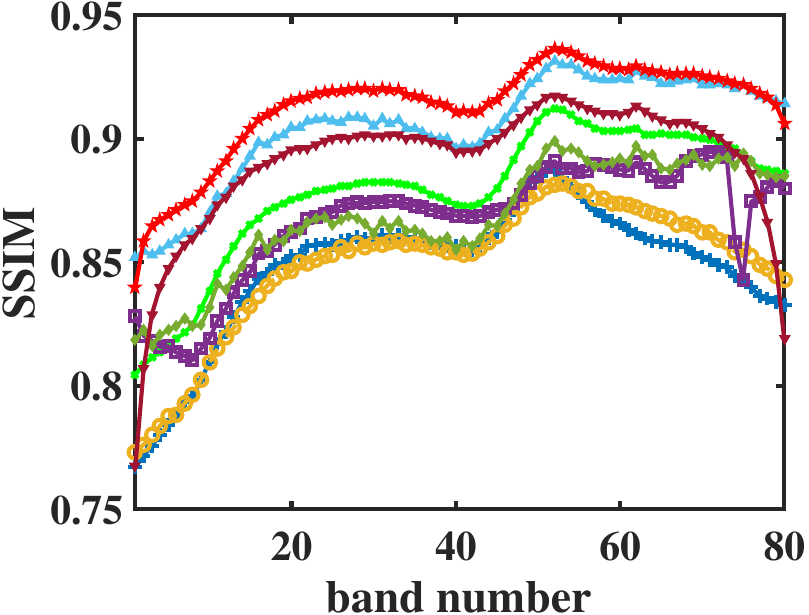}
		\end{minipage}
	}
	\subfigure[Case 5]{
		\begin{minipage}[t]{0.21\textwidth}
			\includegraphics[width=1.00\columnwidth]{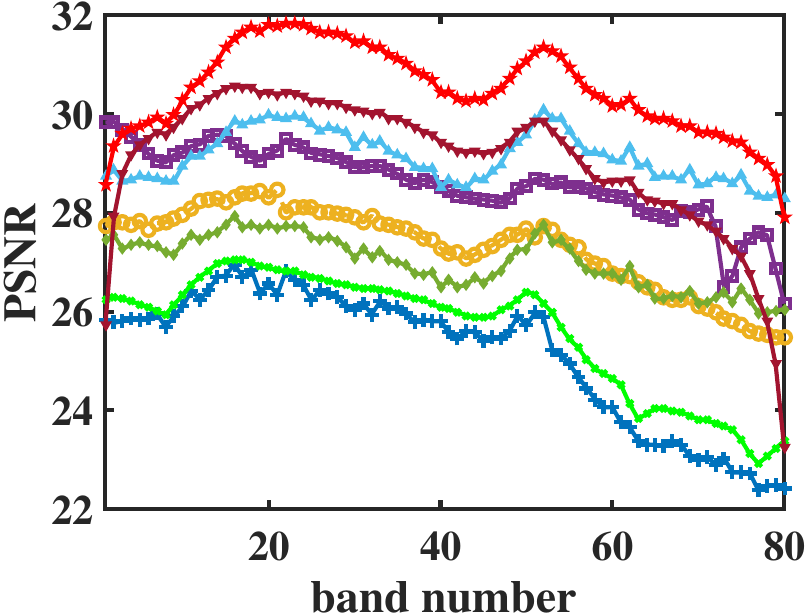}\\
			\includegraphics[width=1.00\columnwidth]{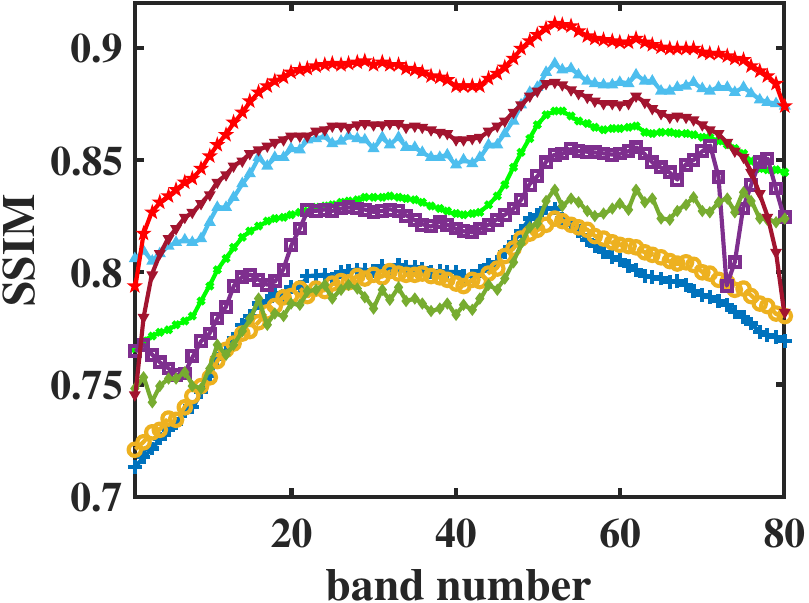}
		\end{minipage}
	}
	\subfigure[Case 6]{
		\begin{minipage}[t]{0.21\textwidth}
			\includegraphics[width=1.00\columnwidth]{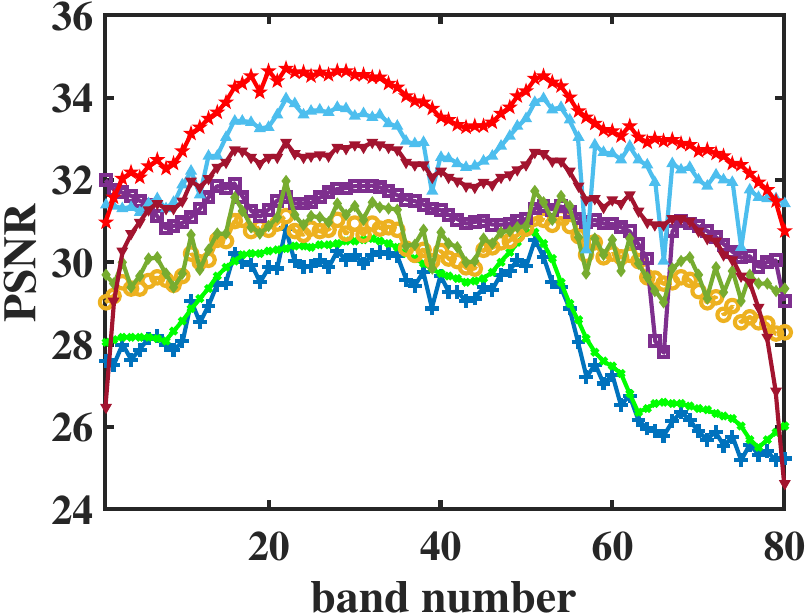}\\
			\includegraphics[width=1.00\columnwidth]{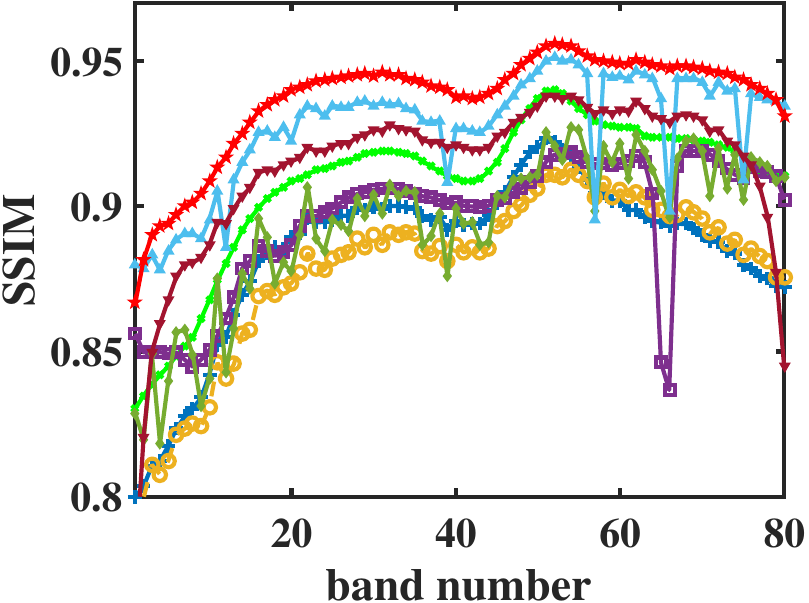}
		\end{minipage}
	}
	\subfigure[Case 7]{
		\begin{minipage}[t]{0.21\textwidth}
			\includegraphics[width=1.00\columnwidth]{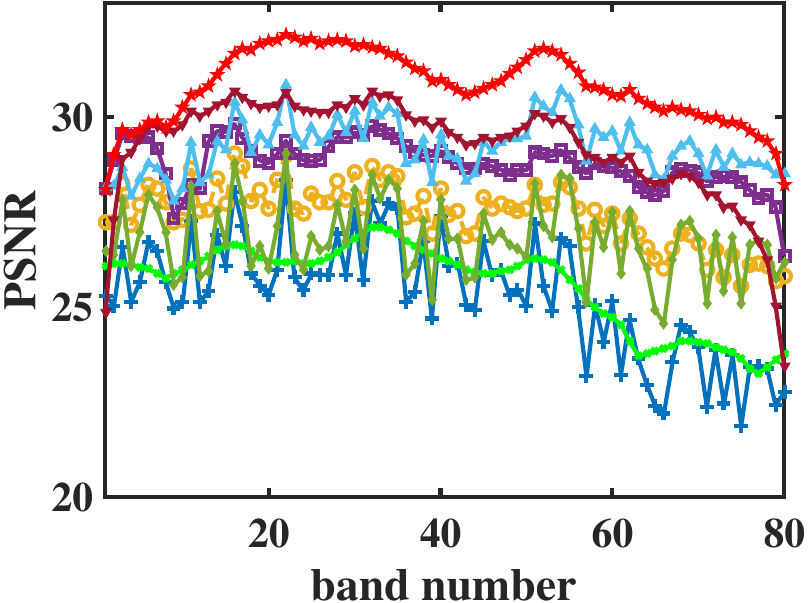}\\
			\includegraphics[width=1.00\columnwidth]{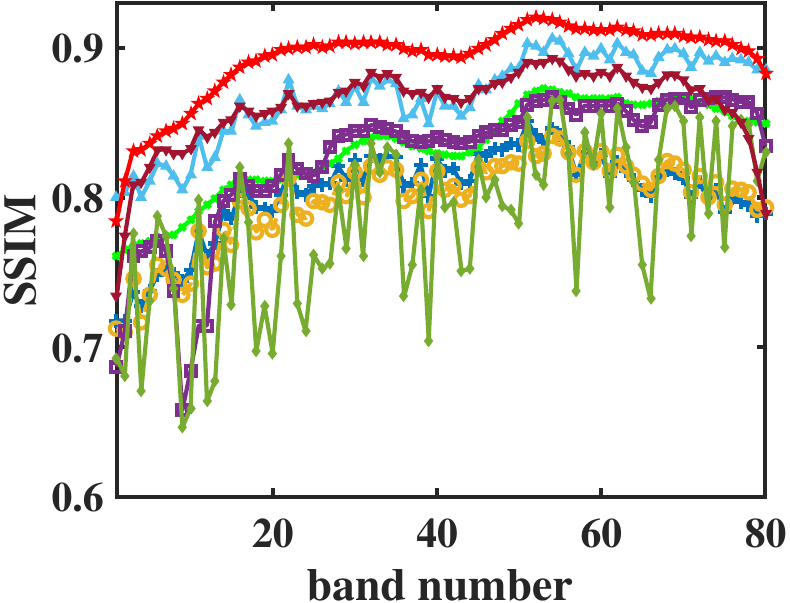}
		\end{minipage}
	}
	\subfigure[Case 8]{
		\begin{minipage}[t]{0.21\textwidth}
			\includegraphics[width=1.00\columnwidth]{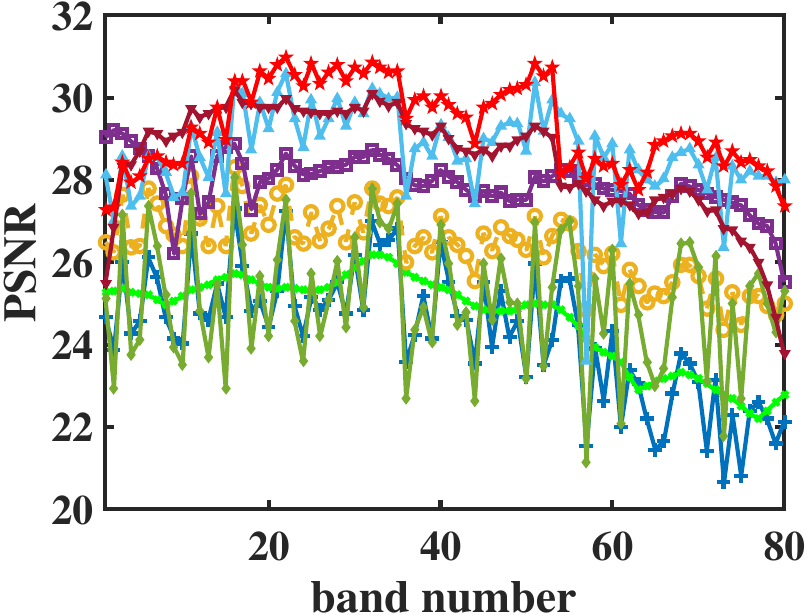}\\
			\includegraphics[width=1.00\columnwidth]{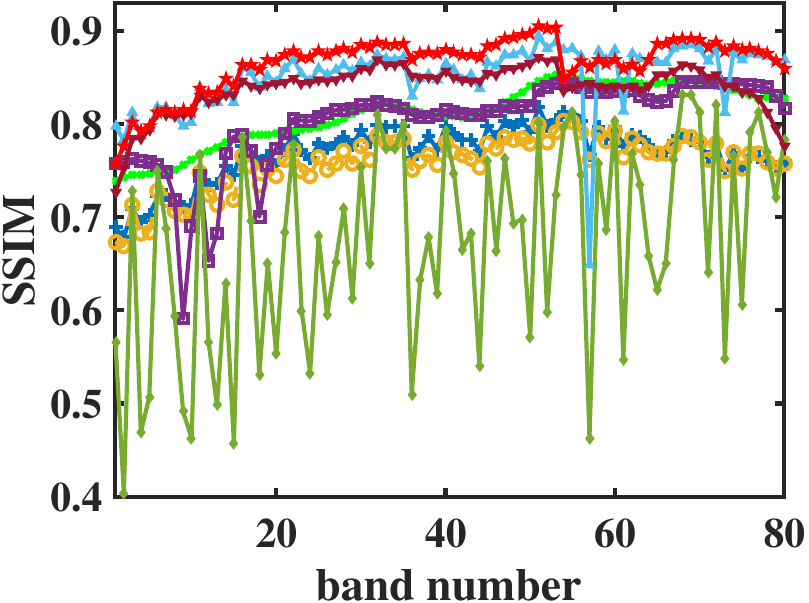}
		\end{minipage}
	}	
	\caption{ 
		The PSNR and SSIM of all denoising models for each band under the eight noise cases in Pavia City Center dataset.
	}
	\label{Pavia_imshow0302_PSNR_SSIM}
\end{figure*}

In Fig. \ref{Pavia_imshow0302_PSNR_SSIM},  we show denoised results of the Pavia City Center dataset in terms of PSNR and SSIM for each band under all noisy case experiments.
As this figure shows, our methods obtain the optimal PSNR and SSIM values in most bands.
\begin{figure*}
	\centering
	\subfigure[Original image] {
		
		\includegraphics[width=0.33\columnwidth]{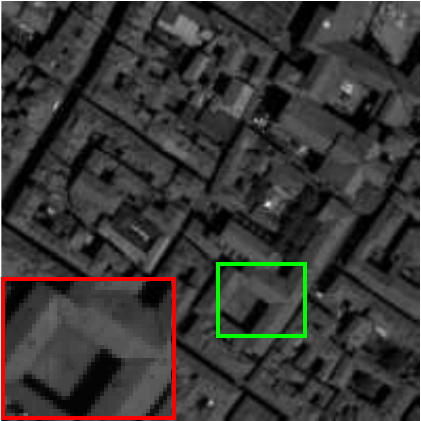}
	}
	\subfigure[Noisy image(9.60dB)] {
		
		\includegraphics[width=0.33\columnwidth]{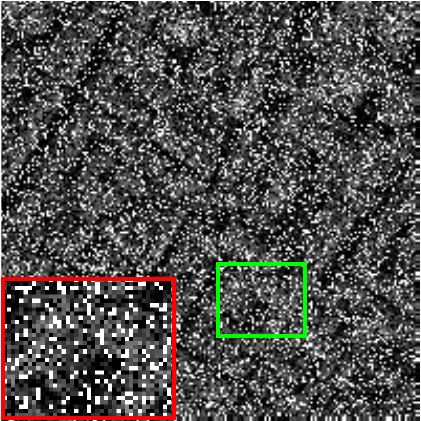}
	}
	\subfigure[LRTA(30.33dB)] {
		
		\includegraphics[width=0.33\columnwidth]{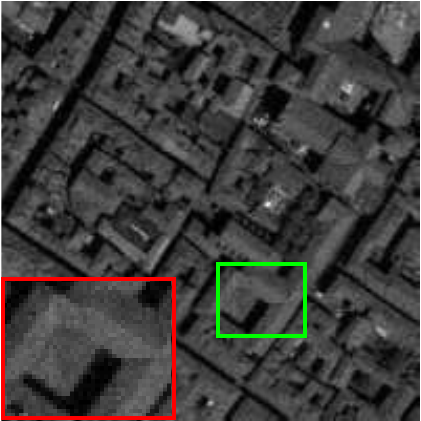}
	}
	\subfigure[BM4D(30.81dB)] {
		
		\includegraphics[width=0.33\columnwidth]{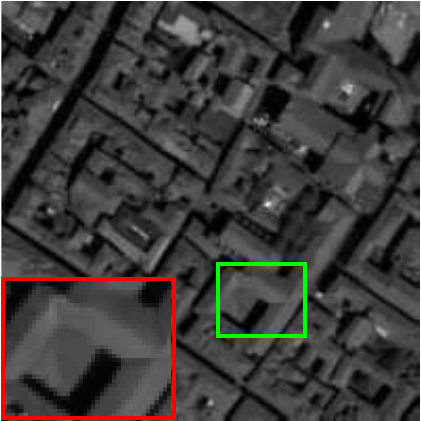}
	}
	\subfigure[LRMR(30.98dB)] {
		
		\includegraphics[width=0.33\columnwidth]{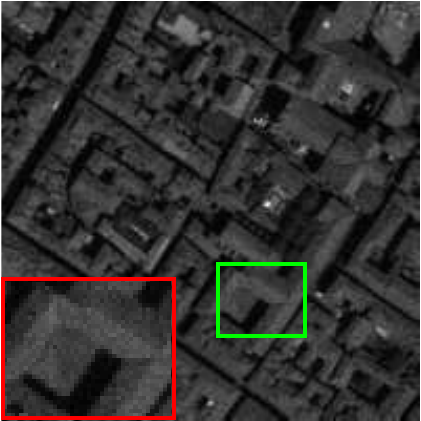}
	}
	\subfigure[LRTDTV(31.25dB)] {
		
		\includegraphics[width=0.33\columnwidth]{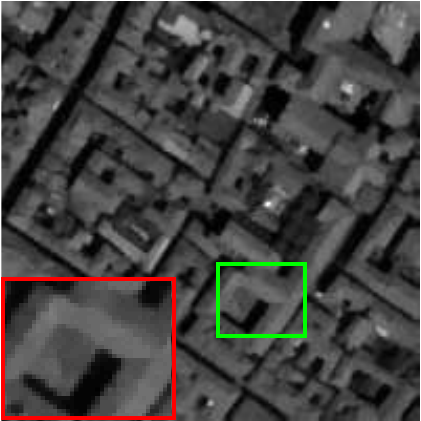}
	}
	\subfigure[3DTNN(31.40dB)] {
		
		\includegraphics[width=0.33\columnwidth]{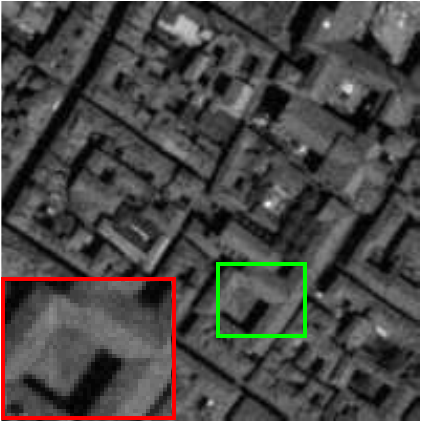}
	}
	\subfigure[3DLogTNN(33.67dB)] {
		
		\includegraphics[width=0.33\columnwidth]{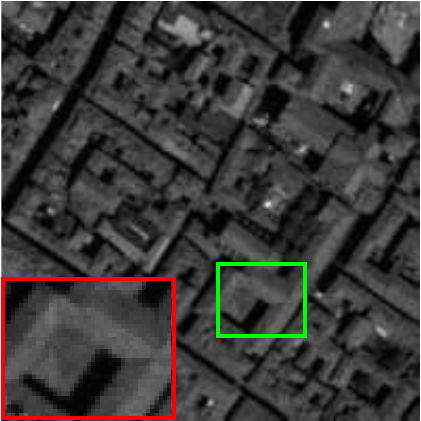}
	}
	\subfigure[MFWTNN(32.51dB)] {
		
		\includegraphics[width=0.33\columnwidth]{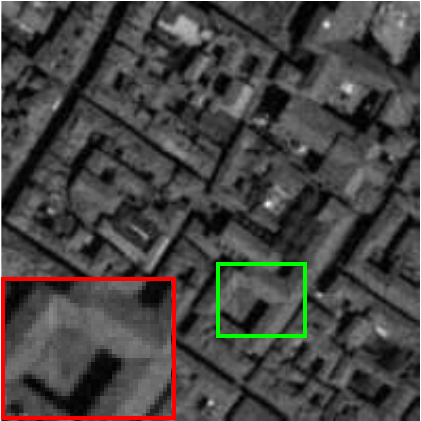}
	}
	\subfigure[\hspace*{-0.05in}NonMFWTNN(34.02dB)] {
		
		\includegraphics[width=0.33\columnwidth]{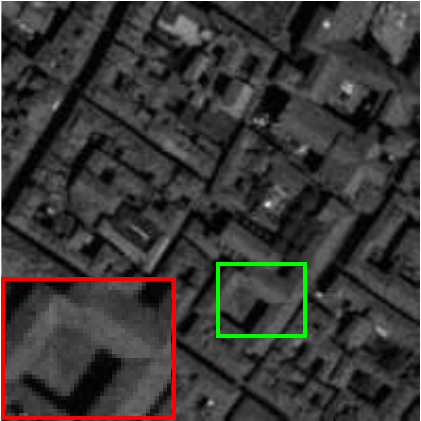}
	}
	\caption{ The 50th band of the denoised results of the Pavia City Center dataset under noise Case 2.}
	\label{Paviaimshow0103_50}
\end{figure*}
\begin{figure*}
	\centering
	\subfigure[Original image] {
		
		\includegraphics[width=0.33\columnwidth]{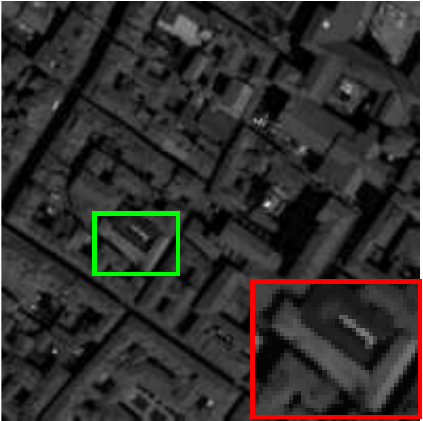}
	}
	\subfigure[Noisy image(8.32dB)] {
		
		\includegraphics[width=0.33\columnwidth]{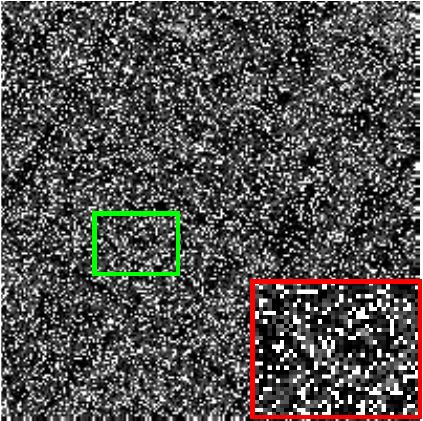}
	}
	\subfigure[LRTA(28.89dB)] {
		
		\includegraphics[width=0.33\columnwidth]{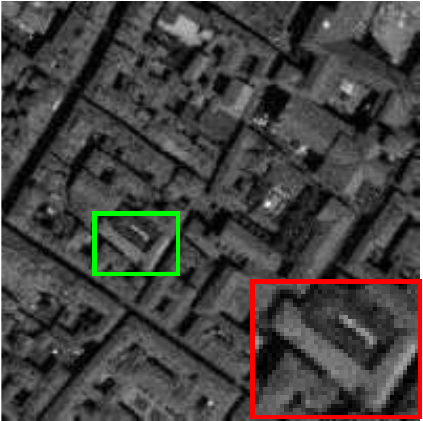}
	}
	\subfigure[BM4D(29.48dB)] {
		
		\includegraphics[width=0.33\columnwidth]{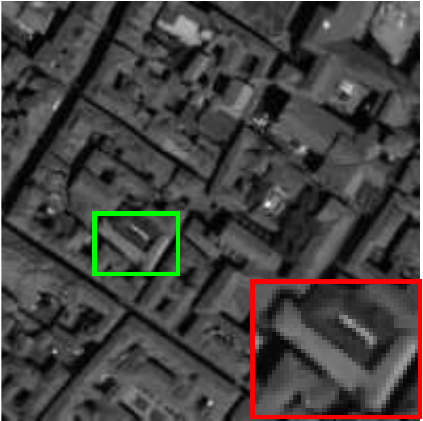}
	}
	\subfigure[LRMR(29.57dB)] {
		
		\includegraphics[width=0.33\columnwidth]{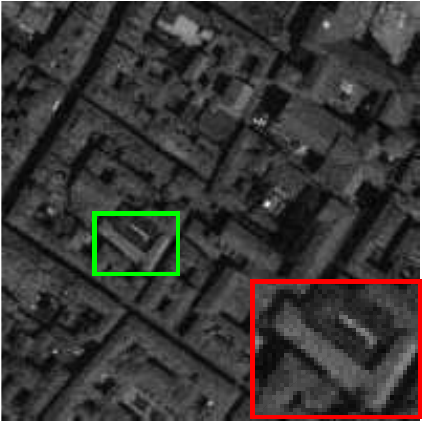}
	}
	\subfigure[LRTDTV(30.05dB)] {
		
		\includegraphics[width=0.33\columnwidth]{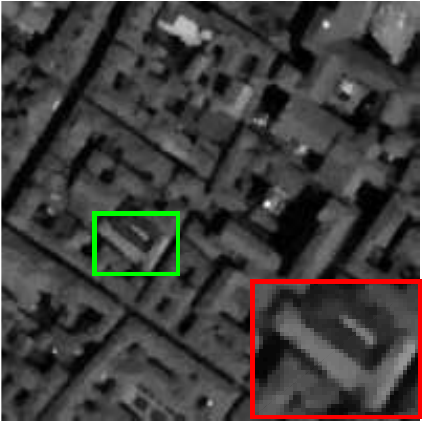}
	}
	\subfigure[3DTNN(28.31dB)] {
		
		\includegraphics[width=0.33\columnwidth]{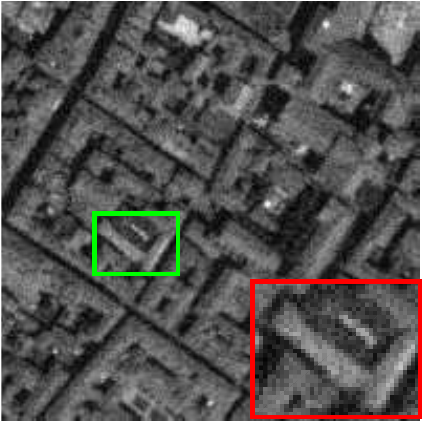}
	}
	\subfigure[3DLogTNN(30.15dB)] {
		
		\includegraphics[width=0.33\columnwidth]{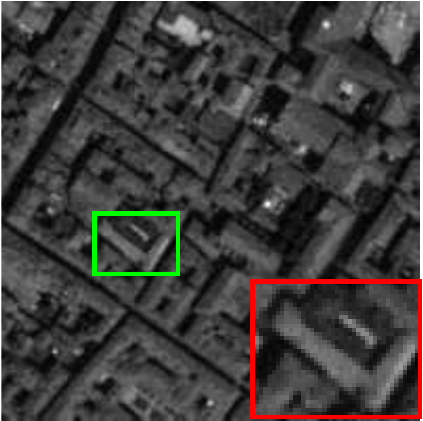}
	}
	\subfigure[MFWTNN(31.58dB)] {
		
		\includegraphics[width=0.33\columnwidth]{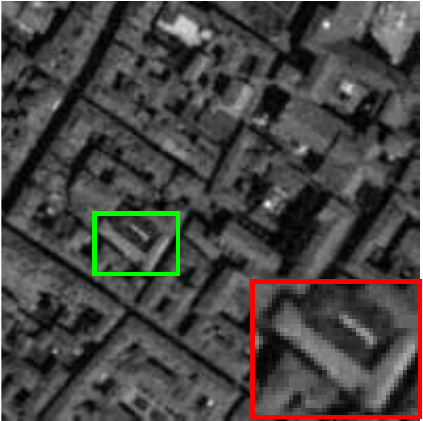}
	}
	\subfigure[\hspace*{-0.05in}NonMFWTNN(33.16dB)] {
		
		\includegraphics[width=0.33\columnwidth]{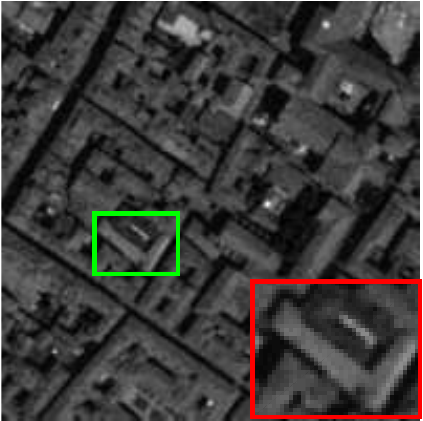}
	}
	\caption{ The 32th band of the denoised results of the Pavia City Center dataset under noise Case 3.}
	\label{Pavia_imshow0104_32}
\end{figure*}
In Fig. \ref{Paviaimshow0103_50}, we show the 50th band of the denoised results of all comparison methods in case 2.
In Fig. \ref{Pavia_imshow0104_32}, we show the 32th band of the denoised results in case 3.
As these grayscale images show, LRTA, LRMR and 3DTNN cannot completely remove some high-intensity noises.
For BM4D and LRTDTV, although they can remove more noise, they also lose more details.
This makes the denoised result too smooth.
Compared with them, our proposed models can remove more noise while retaining more details.
\begin{figure*}
	\centering
	\subfigure[Noisy image] {
		
		\includegraphics[width=0.33\columnwidth]{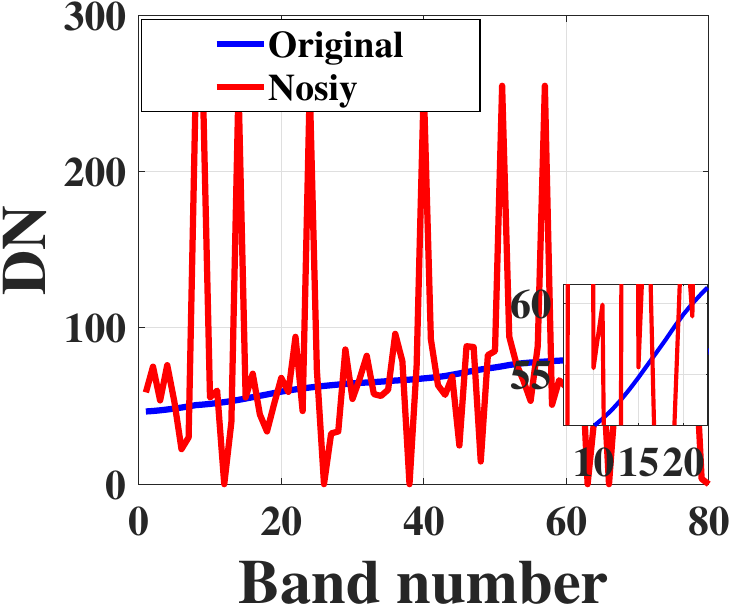}
	}
	\subfigure[LRTA] {
		
		\includegraphics[width=0.33\columnwidth]{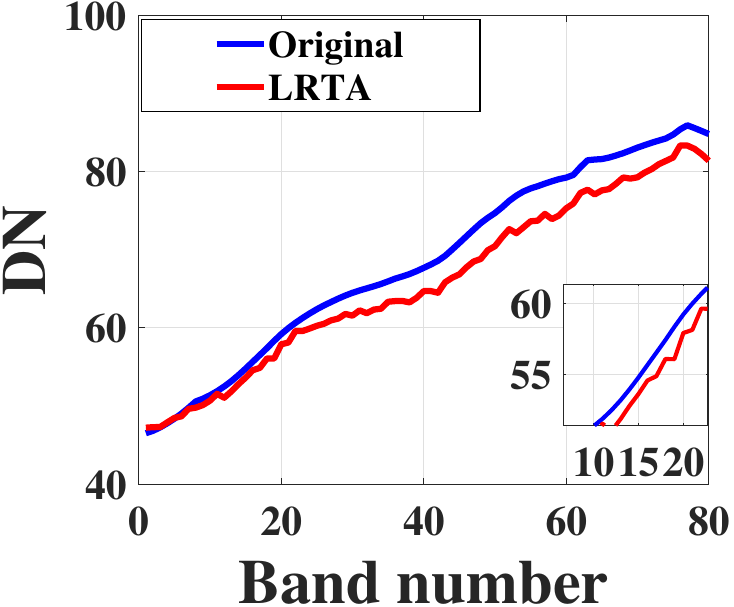}
	}
	\subfigure[BM4D] {
		
		\includegraphics[width=0.33\columnwidth]{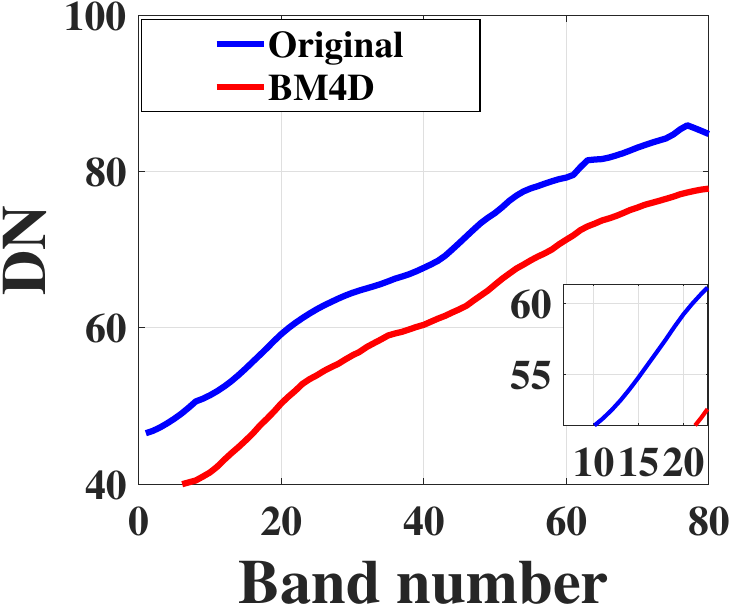}
	}
	\subfigure[LRMR] {
		
		\includegraphics[width=0.33\columnwidth]{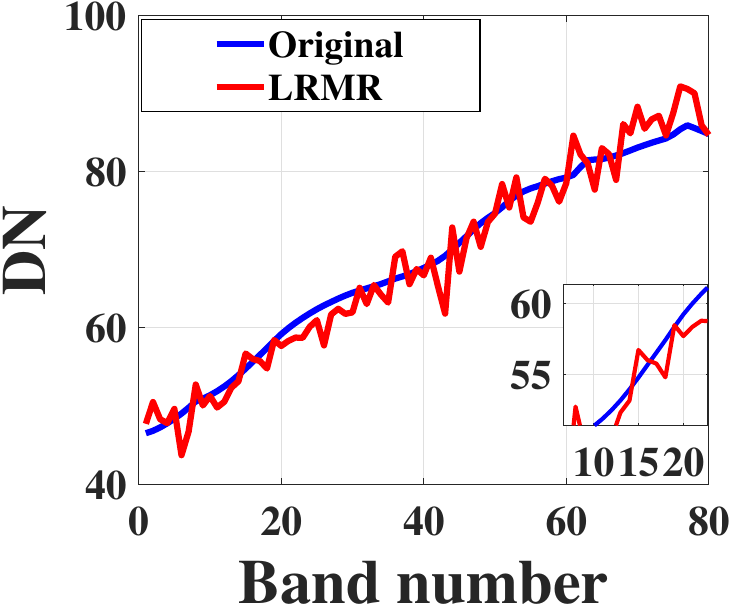}
	}
	\subfigure[LRTDTV] {
		
		\includegraphics[width=0.33\columnwidth]{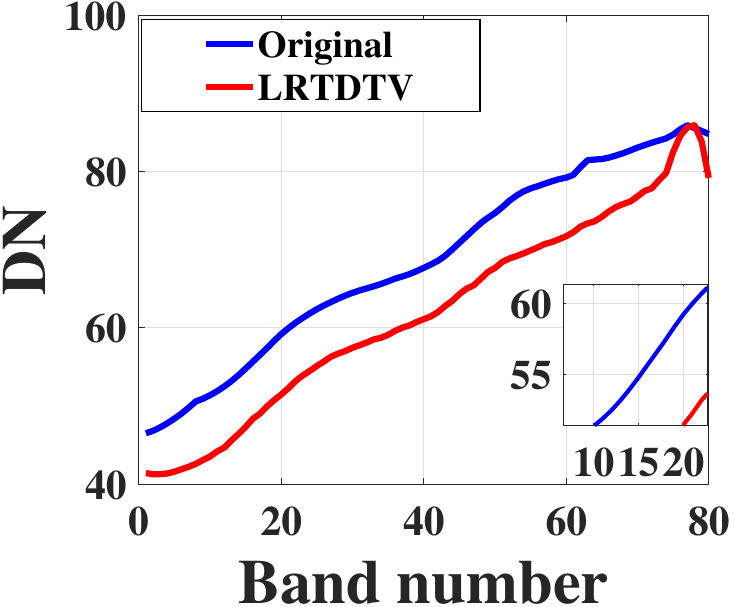}
	}
	\subfigure[3DTNN] {
		
		\includegraphics[width=0.33\columnwidth]{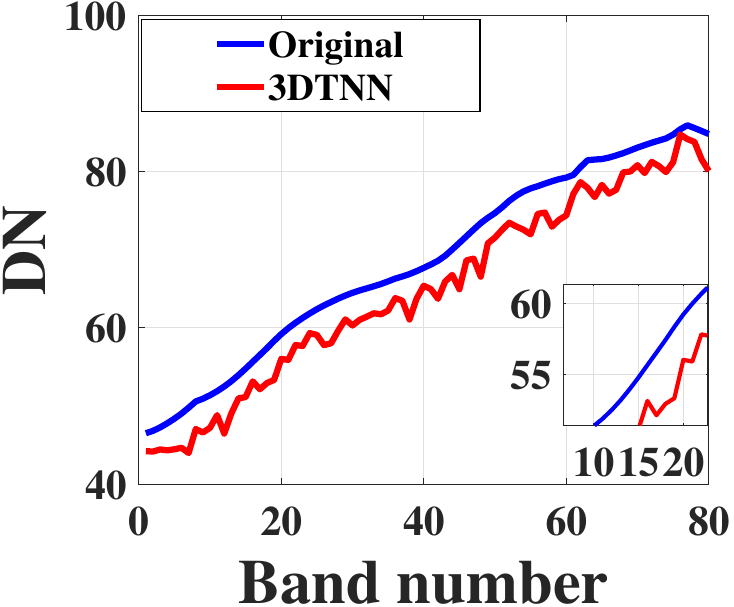}
	}
	\subfigure[3DLogTNN] {
		
		\includegraphics[width=0.33\columnwidth]{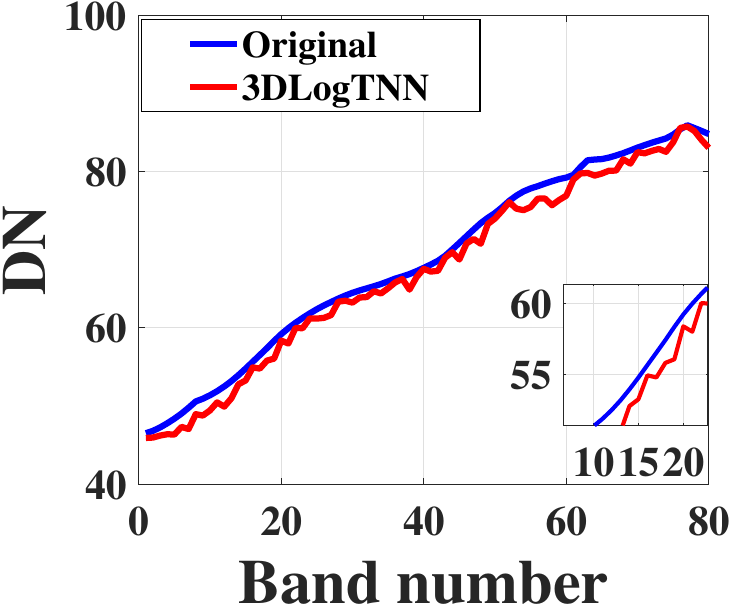}
	}
	\subfigure[MFWTNN] {
		
		\includegraphics[width=0.33\columnwidth]{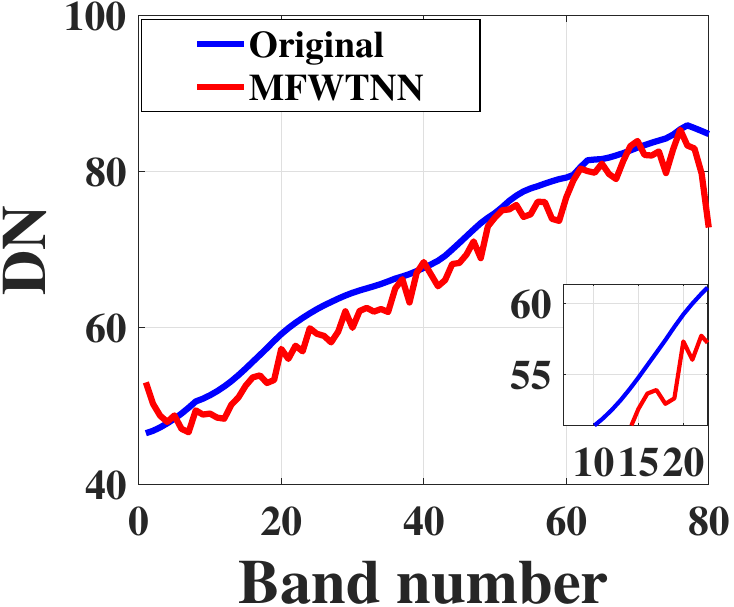}
	}
	\subfigure[NonMFWTNN] {
		
		\includegraphics[width=0.33\columnwidth]{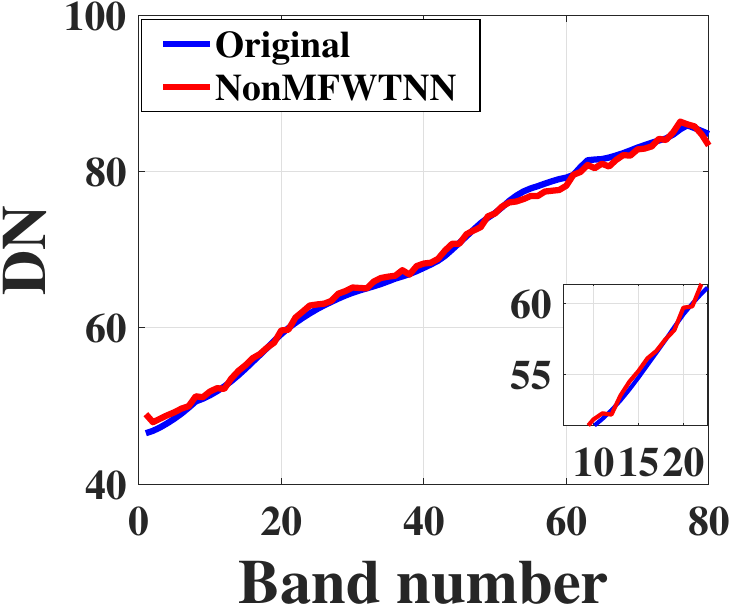}
	}
	\caption{The Spectral signatures curve of  Pavia City Center dataset in (170,99) under noise case 1.}
	\label{Pavia_imshow_pixel_0102}
\end{figure*}
The spectral signatures are important indicators for HSI.
To further compare the denoised image quality, 
we show the spectral curve of the pixel (170, 99) under case 1 in Fig. \ref{Pavia_imshow_pixel_0102}.
It can be seen that spectral signatures curves are relative to ERGAS values in table \ref{tab:PaviaTablePQI}, and our models can obtain the optimal value.

\begin{table*}
	\centering
	\caption{
		Quantitative comparison and time of eight denoising models on Washington DC Mall dataset under eight noise cases.
	}
	\scalebox{0.80}{
		\begin{tabular}{cccccccccccc}
			\toprule
			Case  & Level & Index & Noise & LRTA  & BM4D  & LRMR  & LRTDTV & 3DTNN & 3DLogTNN & MFWTNN & Non-MFWTNN \\
			\midrule
			\multirow{6}[2]{*}{Case 1} & \multirow{3}[1]{*}{G=0.1} & MPSNR & 11.024  & 32.089  & 31.209  & 31.856  & 33.243  & 33.227  & 35.518  & 34.296  & \textbf{36.344 } \\
			&       & MSSIM & 0.077  & 0.883  & 0.889  & 0.848  & 0.900  & 0.889  & 0.932  & 0.923  & \textbf{0.942 } \\
			&       & MFSIM & 0.438  & 0.938  & 0.919  & 0.929  & 0.931  & 0.943  & 0.961  & 0.950  & \textbf{0.966 } \\
			& \multirow{3}[1]{*}{P=0.2} & MERGAS & 1333.908  & 104.005  & 115.882  & 113.823  & 96.698  & 97.016  & 75.616  & 86.176  & \textbf{67.825 } \\
			&       & MSAM  & 43.237  & 4.116  & 4.519  & 4.937  & 4.175  & 4.136  & 3.148  & 3.579  & \textbf{2.864 } \\
			&       & time/s  & -  & 56.158  & 544.679  & 424.176  & 545.315  & 291.548  & 404.145  & 330.736  & 543.209  \\
			\midrule
			\multirow{6}[2]{*}{Case 2} & \multirow{3}[1]{*}{G=0.1} & MPSNR & 9.420  & 31.156  & 30.585  & 30.595  & 32.208  & 31.611  & 35.030  & 33.329  & \textbf{35.746 } \\
			&       & MSSIM & 0.049  & 0.863  & 0.878  & 0.814  & 0.881  & 0.882  & 0.927  & 0.908  & \textbf{0.935 } \\
			&       & MFSIM & 0.384  & 0.928  & 0.912  & 0.913  & 0.919  & 0.925  & 0.957  & 0.941  & \textbf{0.962 } \\
			& \multirow{3}[1]{*}{P=0.3} & MERGAS & 1611.157  & 115.466  & 124.189  & 130.309  & 110.034  & 123.077  & 80.302  & 96.378  & \textbf{73.121 } \\
			&       & MSAM  & 47.557  & 4.526  & 4.805  & 5.667  & 4.718  & 4.885  & 3.305  & 3.987  & \textbf{3.052 } \\
			&       & time/s  & -  & 55.783  & 544.675  & 382.850  & 539.389  & 272.741  & 407.267  & 330.584  & 554.722  \\
			\midrule
			\multirow{6}[2]{*}{Case 3} & \multirow{3}[1]{*}{G=0.1} & MPSNR & 8.253  & 29.852  & 29.630  & 29.149  & 30.686  & 29.982  & 33.579  & 32.342  & \textbf{34.402 } \\
			&       & MSSIM & 0.034  & 0.833  & 0.859  & 0.768  & 0.850  & 0.851  & 0.910  & 0.887  & \textbf{0.922 } \\
			&       & MFSIM & 0.346  & 0.911  & 0.899  & 0.890  & 0.901  & 0.910  & 0.940  & 0.928  & \textbf{0.947 } \\
			& \multirow{3}[1]{*}{P=0.4} & MERGAS & 1846.435  & 133.476  & 138.002  & 151.548  & 133.569  & 146.816  & 94.251  & 108.312  & \textbf{85.311 } \\
			&       & MSAM  & 50.441  & 5.126  & 5.262  & 6.635  & 6.007  & 5.955  & 3.761  & 4.478  & \textbf{3.490 } \\
			&       & time/s  & - & 57.749  & 528.381  & 447.568  & 584.706  & 323.020  & 424.376  & 370.543  & 601.149  \\
			\midrule
			\multirow{6}[2]{*}{Case 4} & \multirow{3}[1]{*}{G=0.15} & MPSNR & 10.635  & 29.490  & 28.928  & 29.286  & 31.359  & 31.477  & 33.121  & 32.220  & \textbf{34.568 } \\
			&       & MSSIM & 0.066  & 0.822  & 0.834  & 0.774  & 0.856  & 0.874  & 0.896  & 0.881  & \textbf{0.916 } \\
			&       & MFSIM & 0.411  & 0.906  & 0.879  & 0.893  & 0.902  & 0.920  & 0.941  & 0.923  & \textbf{0.952 } \\
			& \multirow{3}[1]{*}{P=0.2} & MERGAS & 1379.815  & 138.205  & 148.879  & 150.799  & 120.696  & 123.227  & 102.607  & 109.969  & \textbf{83.629 } \\
			&       & MSAM  & 44.251  & 5.410  & 5.774  & 6.561  & 5.241  & 4.979  & 4.167  & 4.636  & \textbf{3.518 } \\
			&       & time/s  & -  & 59.081  & 529.783  & 446.593  & 585.548  & 315.768  & 445.453  & 366.438  & 600.614  \\
			\midrule
			\multirow{6}[2]{*}{Case 5} & \multirow{3}[1]{*}{G=0.2} & MPSNR & 10.188  & 27.573  & 27.245  & 27.427  & 29.970  & 28.918  & 31.223  & 30.919  & \textbf{33.034 } \\
			&       & MSSIM & 0.056  & 0.766  & 0.783  & 0.705  & 0.814  & 0.809  & 0.859  & 0.844  & \textbf{0.891 } \\
			&       & MFSIM & 0.390  & 0.877  & 0.841  & 0.861  & 0.875  & 0.887  & 0.922  & 0.900  & \textbf{0.933 } \\
			& \multirow{3}[1]{*}{P=0.2} & MERGAS & 1437.539  & 170.752  & 179.235  & 183.763  & 141.567  & 166.078  & 132.159  & 127.459  & \textbf{100.032 } \\
			&       & MSAM  & 45.387  & 6.557  & 6.845  & 8.049  & 6.214  & 6.782  & 5.291  & 5.466  & \textbf{4.167 } \\
			&       & time/s  & -  & 54.268  & 547.776  & 427.316  & 540.747  & 290.324  & 399.630  & 340.013  & 546.367  \\
			\midrule
			\multirow{6}[2]{*}{Case 6} & \multirow{3}[1]{*}{G=0.1} & MPSNR & 9.512  & 31.135  & 30.555  & 30.572  & 32.063  & 31.553  & 34.525  & 33.337  & \textbf{35.728 } \\
			&       & MSSIM & 0.052  & 0.863  & 0.878  & 0.812  & 0.881  & 0.882  & 0.909  & 0.908  & \textbf{0.937 } \\
			&       & MFSIM & 0.388  & 0.927  & 0.911  & 0.912  & 0.920  & 0.925  & 0.950  & 0.941  & \textbf{0.960 } \\
			& \multirow{3}[1]{*}{P=(0.2,0.4)} & MERGAS & 1605.940  & 116.259  & 124.905  & 131.211  & 112.085  & 123.948  & 84.541  & 96.388  & \textbf{73.122 } \\
			&       & MSAM  & 47.511  & 4.539  & 4.814  & 5.708  & 4.944  & 4.934  & 3.907  & 3.991  & \textbf{3.045 } \\
			&       & time/s  & -  & 54.281  & 543.748  & 427.928  & 541.382  & 272.044  & 411.018  & 331.695  & 543.483  \\
			\midrule
			\multirow{6}[2]{*}{Case 7} & \multirow{3}[1]{*}{G=(0.1,0.3)} & MPSNR & 10.175  & 27.979  & 27.445  & 27.667  & 30.182  & 29.006  & 31.356  & 31.197  & \textbf{32.843 } \\
			&       & MSSIM & 0.057  & 0.787  & 0.795  & 0.714  & 0.827  & 0.815  & 0.869  & 0.853  & \textbf{0.895 } \\
			&       & MFSIM & 0.393  & 0.886  & 0.848  & 0.865  & 0.883  & 0.892  & 0.924  & 0.906  & \textbf{0.937 } \\
			& \multirow{3}[1]{*}{P=0.2} & MERGAS & 1444.165  & 165.560  & 176.071  & 182.967  & 142.743  & 167.263  & 128.476  & 124.285  & \textbf{104.171 } \\
			&       & MSAM  & 45.531  & 6.443  & 6.700  & 8.063  & 6.347  & 6.934  & 5.226  & 5.369  & \textbf{4.389 } \\
			&       & time/s  & -  & 53.851  & 543.508  & 429.160  & 534.911  & 289.693  & 394.217  & 336.045  & 540.726  \\
			\midrule
			\multirow{6}[2]{*}{Case 8} & \multirow{3}[1]{*}{G=(0.1,0.3)} & MPSNR & 8.937  & 26.924  & 26.614  & 26.396  & 29.159  & 27.306  & 29.804  & 30.031  & \textbf{30.303 } \\
			&       & MSSIM & 0.040  & 0.756  & 0.771  & 0.662  & 0.802  & 0.778  & 0.836  & 0.818  & \textbf{0.844 } \\
			&       & MFSIM & 0.357  & 0.868  & 0.830  & 0.841  & 0.870  & 0.863  & \textbf{0.914} & 0.887  & {0.909 } \\
			& \multirow{2}[0]{*}{P=(0.2,0.4)} & MERGAS & 1684.836  & 186.771  & 193.299  & 209.482  & 165.661  & 200.091  & 162.363  & 144.022  & \textbf{145.351 } \\
			&       & MSAM  & 48.823  & 7.237  & 7.322  & 9.412  & 7.297  & 8.633  & 6.768  & 6.360  & \textbf{6.325 } \\
			& stipes & time/s  & -  & 60.181  & 543.885  & 424.387  & 532.568  & 285.417  & 407.518  & 341.152  & 548.572  \\
			\bottomrule
		\end{tabular}%
		\label{tab:DCTablePQI}%
	}
\end{table*}%

2) Washington DC Mall: In this subsection, 
we show the evaluation results of all denoising models including visual and quantitative quality on the Washington DC Mall dataset under the condition of the eight noise cases.

Table \ref{tab:DCTablePQI} lists the PQIs of denoised results under eight noise cases in the Washington DC Mall dataset.
As this table shows, our proposed models obtain preeminently higher MPSNR and MSSIM values than those of the comparison denoising models.
\begin{figure*}
	\centering
	\includegraphics[width=0.9\columnwidth]{image/DCMall_PSNRSSIM/DCMall_tuli}\\	
	\subfigure[Case 1]{
		\begin{minipage}[t]{0.21\textwidth}
			\includegraphics[width=1.00\columnwidth]{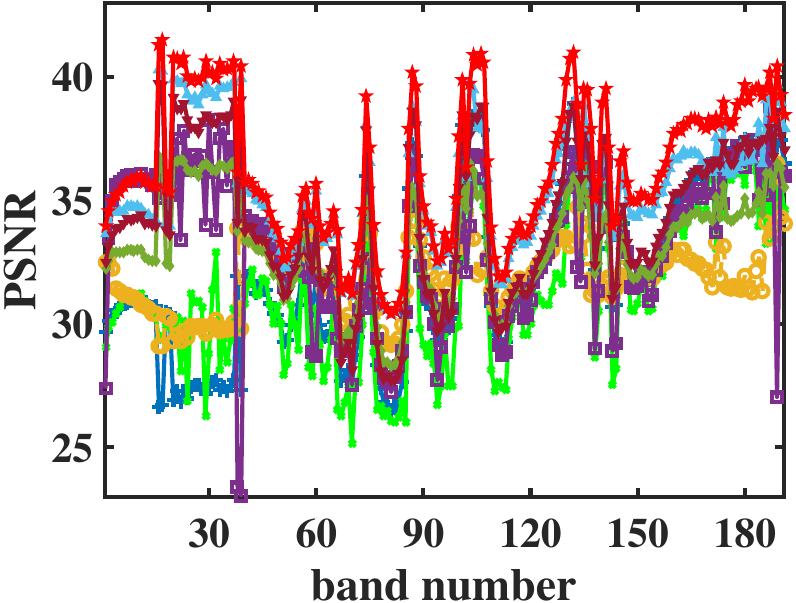}\\
			\includegraphics[width=1.00\columnwidth]{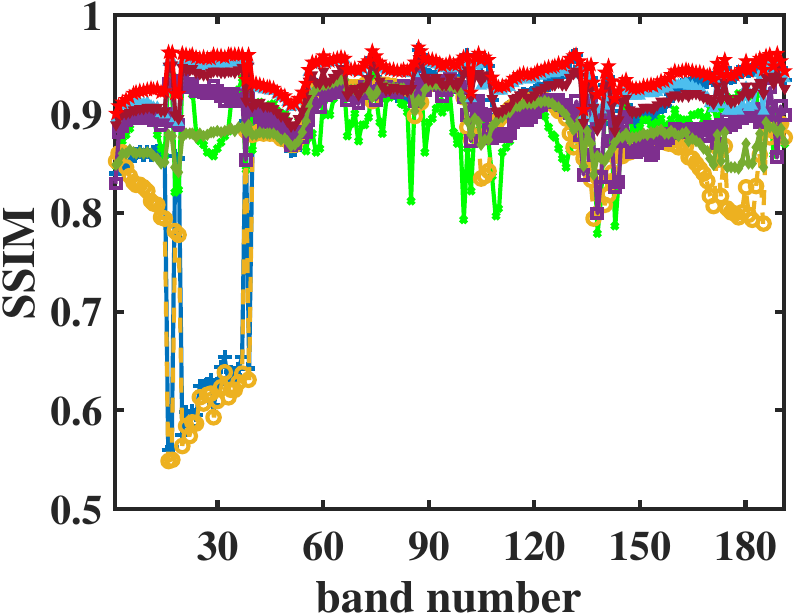}
		\end{minipage}
	}
	\subfigure[Case 2]{
		\begin{minipage}[t]{0.21\textwidth}
			\includegraphics[width=1.00\columnwidth]{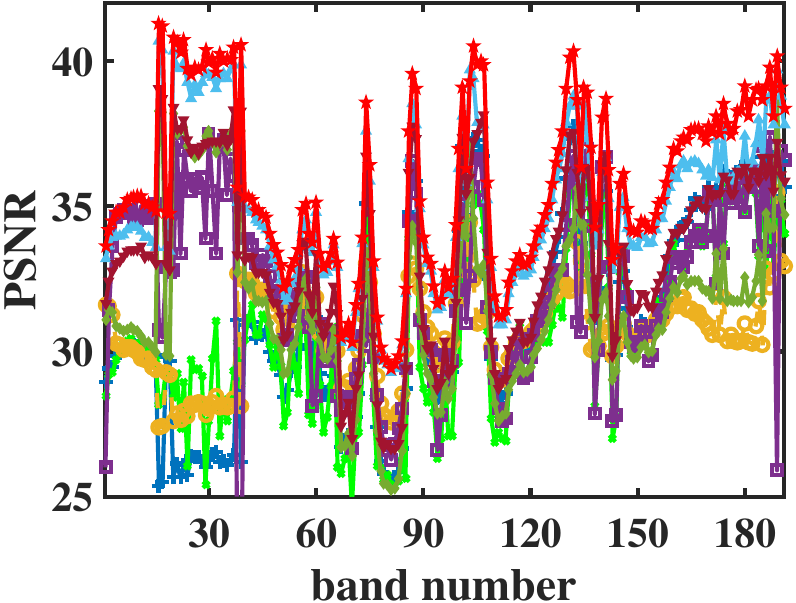}\\
			\includegraphics[width=1.00\columnwidth]{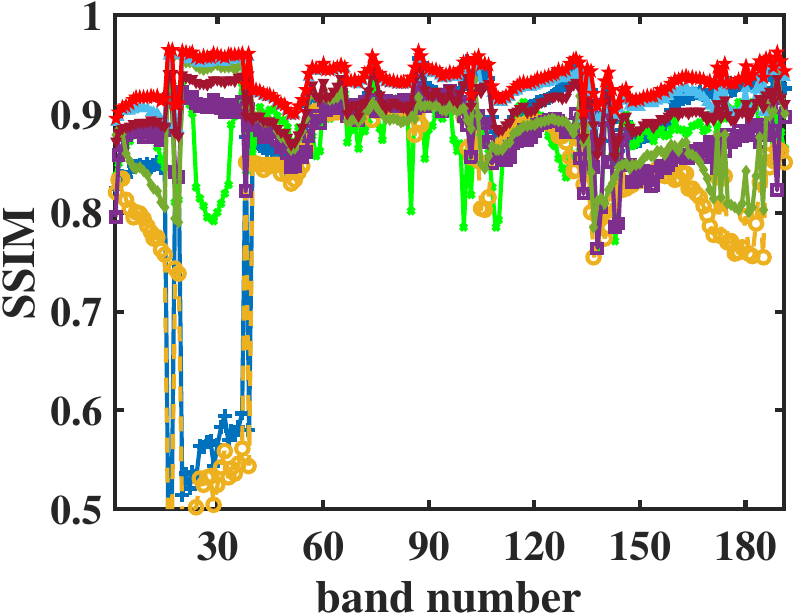}
		\end{minipage}
	}
	\subfigure[Case 3]{
		\begin{minipage}[t]{0.21\textwidth}
			\includegraphics[width=1.00\columnwidth]{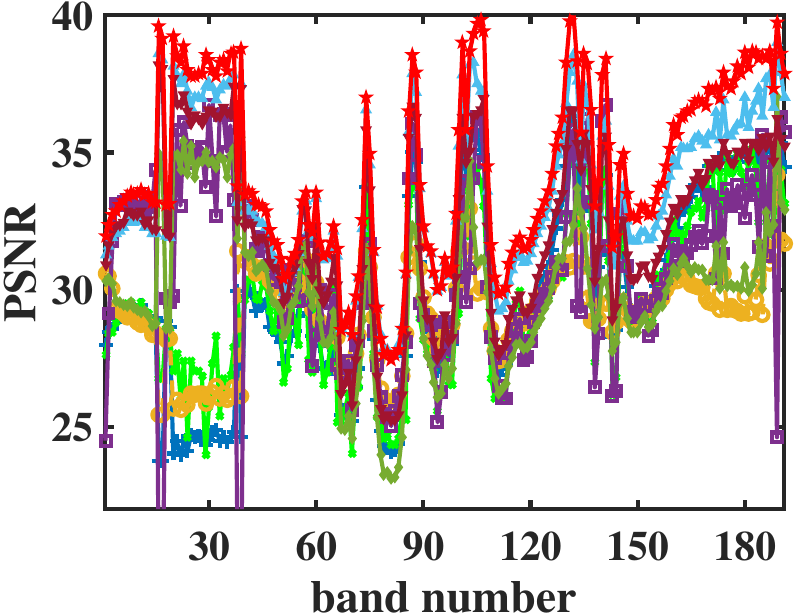}\\
			\includegraphics[width=1.00\columnwidth]{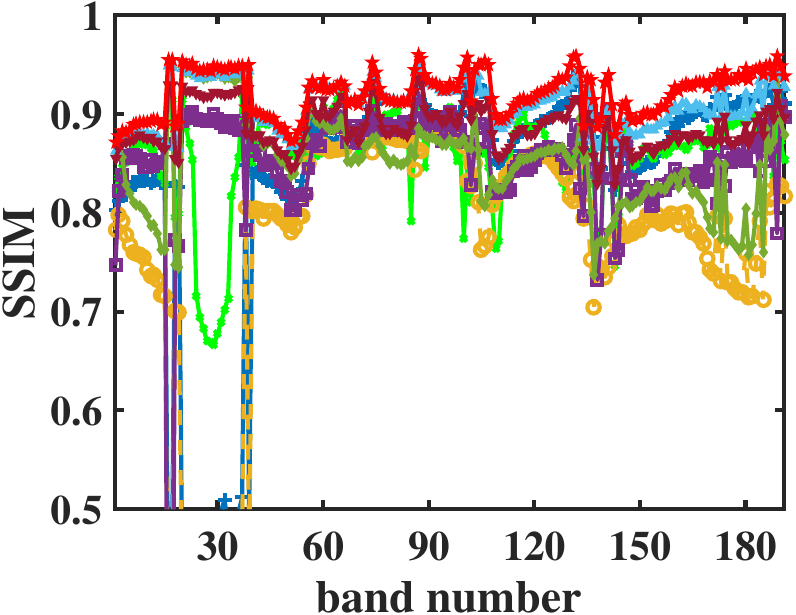}
		\end{minipage}
	}
	\subfigure[Case 4]{
		\begin{minipage}[t]{0.21\textwidth}
			\includegraphics[width=1.00\columnwidth]{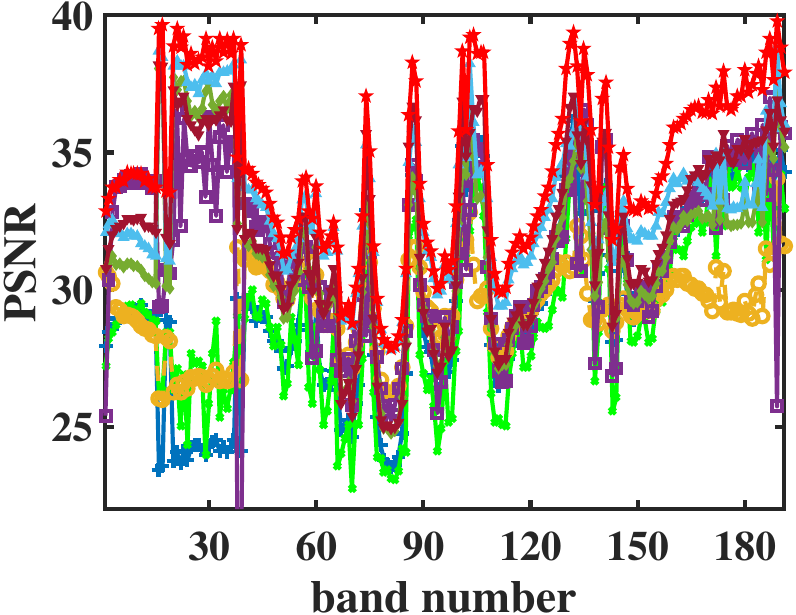}\\
			\includegraphics[width=1.00\columnwidth]{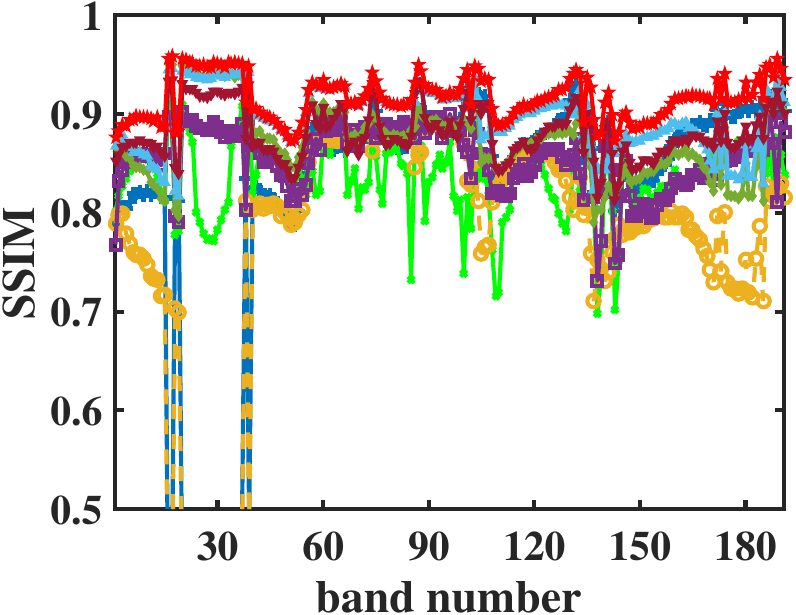}
		\end{minipage}
	}
	\subfigure[Case 5]{
		\begin{minipage}[t]{0.21\textwidth}
			\includegraphics[width=1.00\columnwidth]{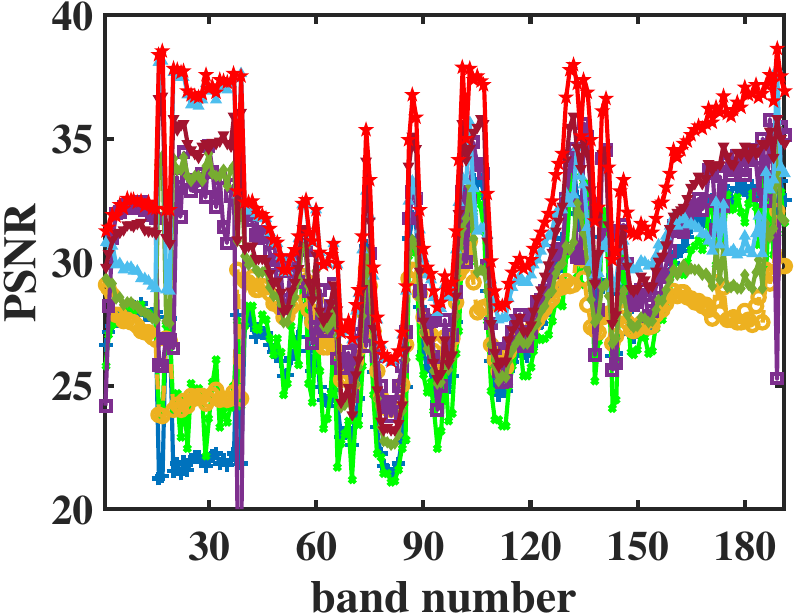}\\
			\includegraphics[width=1.00\columnwidth]{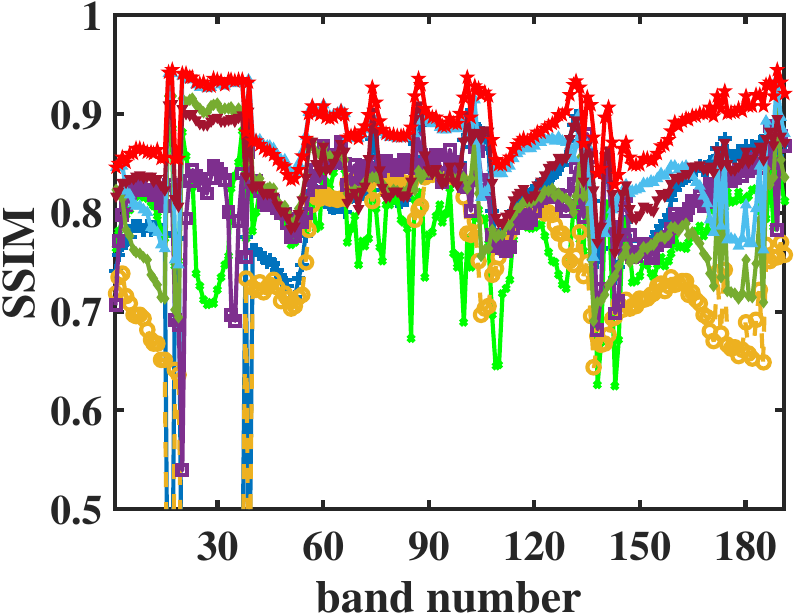}
		\end{minipage}
	}
	\subfigure[Case 6]{
		\begin{minipage}[t]{0.21\textwidth}
			\includegraphics[width=1.00\columnwidth]{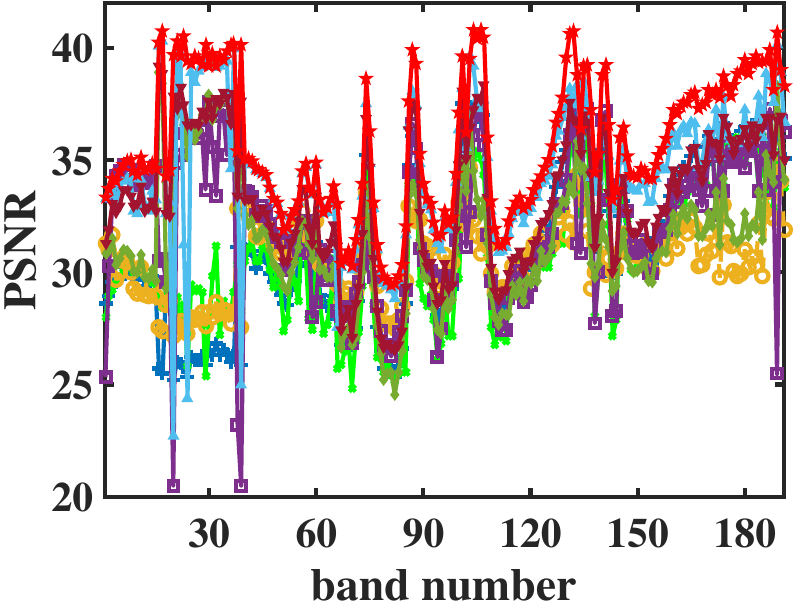}\\
			\includegraphics[width=1.00\columnwidth]{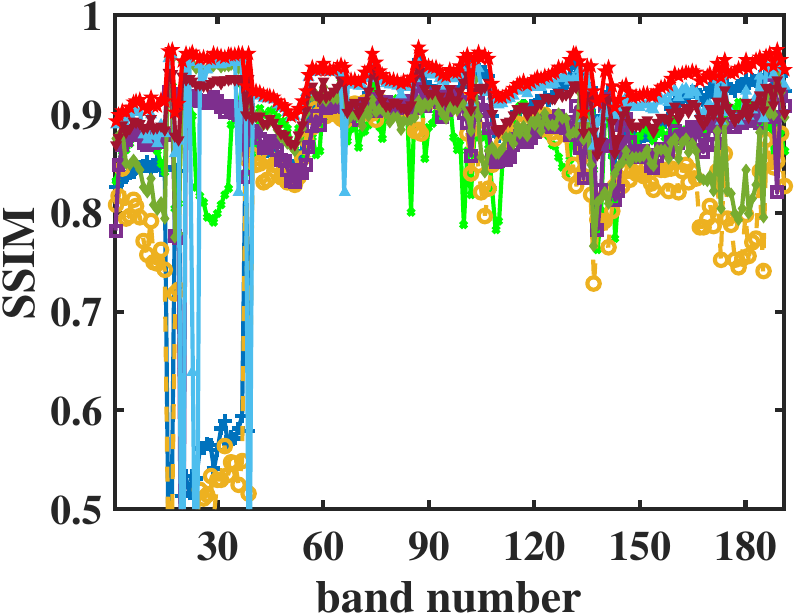}
		\end{minipage}
	}
	\subfigure[Case 7]{
		\begin{minipage}[t]{0.21\textwidth}
			\includegraphics[width=1.00\columnwidth]{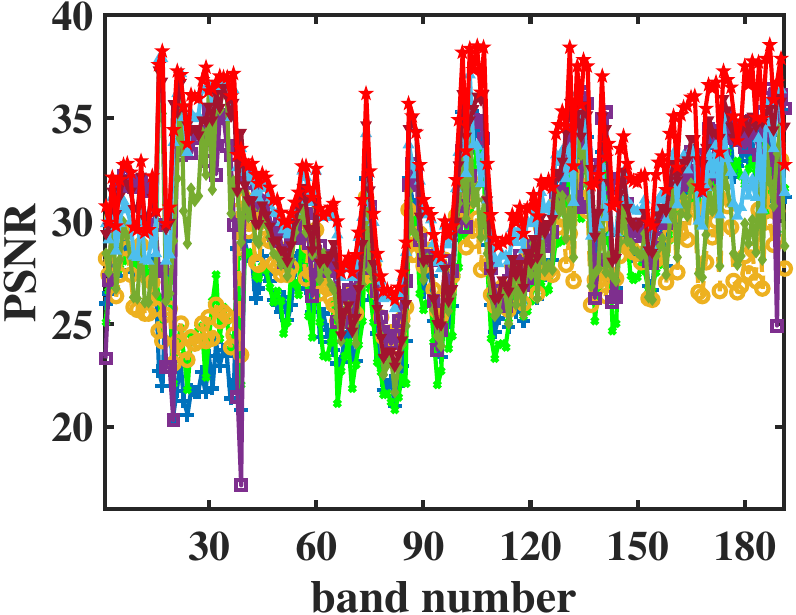}\\
			\includegraphics[width=1.00\columnwidth]{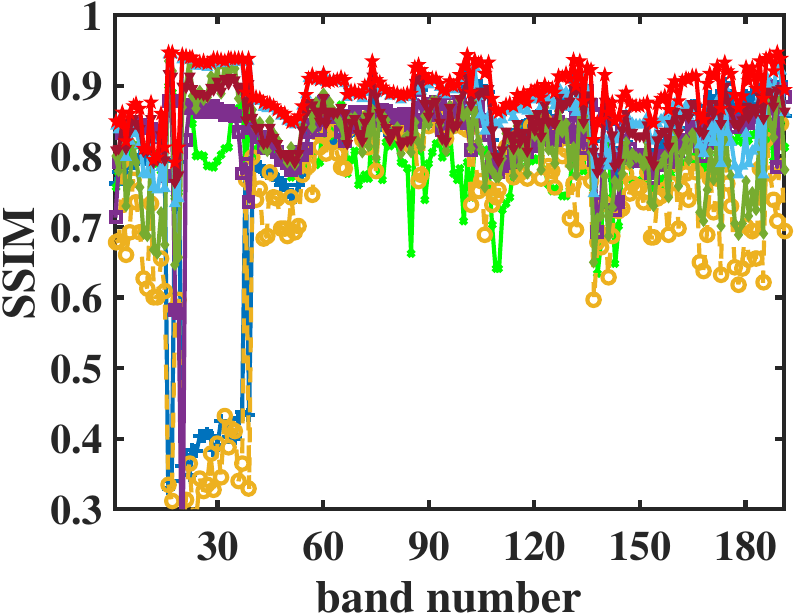}
		\end{minipage}
	}
	\subfigure[Case 8]{
		\begin{minipage}[t]{0.21\textwidth}
			\includegraphics[width=1.00\columnwidth]{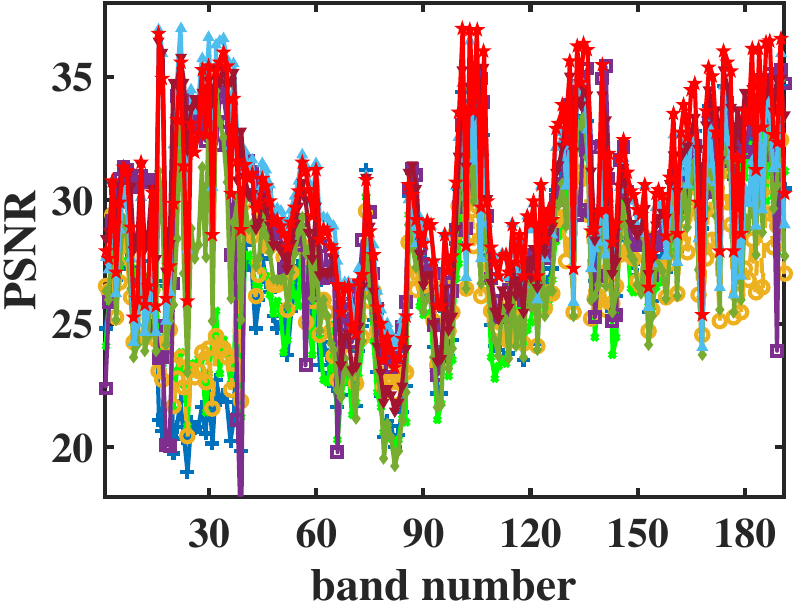}\\
			\includegraphics[width=1.00\columnwidth]{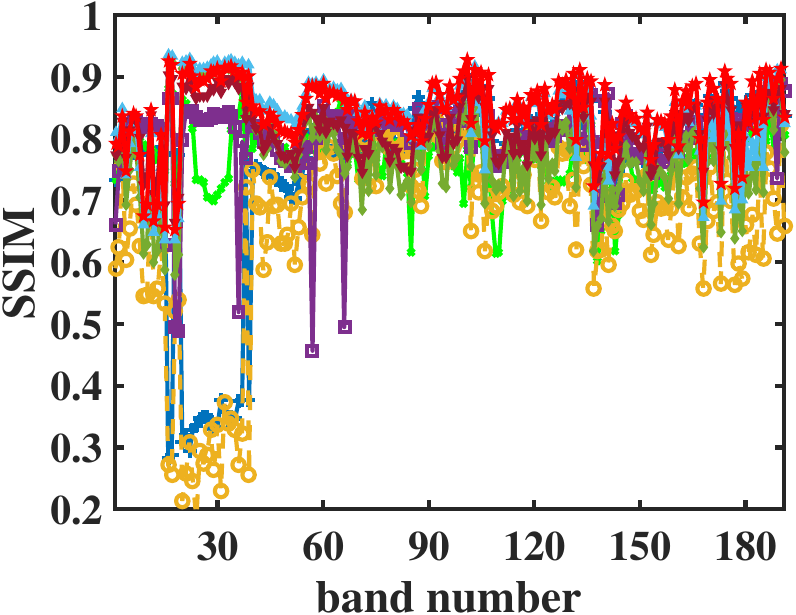}
		\end{minipage}
	}	
	\caption{ 
		The PSNR and SSIM of all denoising models for each band under the eight noise cases in Washington DC Mall dataset.
	}
	\label{DCMall_imshow_PSNR_SSIM}
\end{figure*}
As shown in the Fig. \ref{DCMall_imshow_PSNR_SSIM}, it shows the quantitative comparison with PSNR and SSIM of each band under all noisy cases in the Washington DC Mall dataset, and our methods obtain the optimal PSNR and SSIM values in most bands.

\begin{figure*}
	\centering
	\subfigure[Original image] {
		
		\includegraphics[width=0.33\columnwidth]{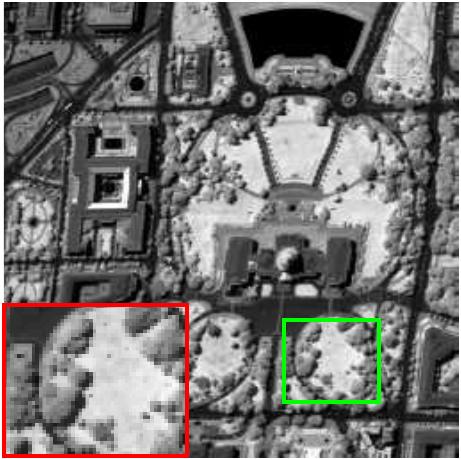}
	}
	\subfigure[Noisy image(10.02dB)] {
		
		\includegraphics[width=0.33\columnwidth]{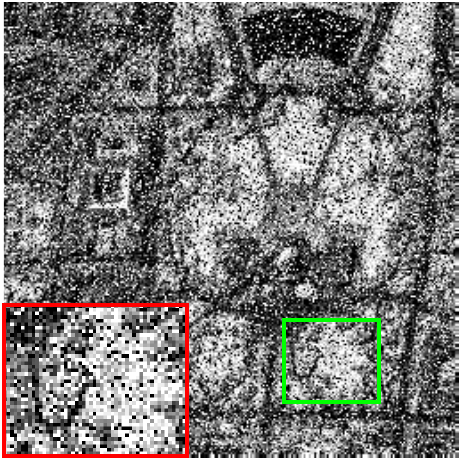}
	}
	\subfigure[LRTA(24.91dB)] {
		
		\includegraphics[width=0.33\columnwidth]{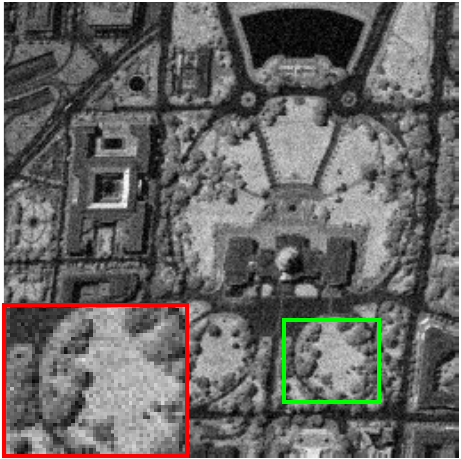}
	}
	\subfigure[BM4D(22.78dB)] {
		
		\includegraphics[width=0.33\columnwidth]{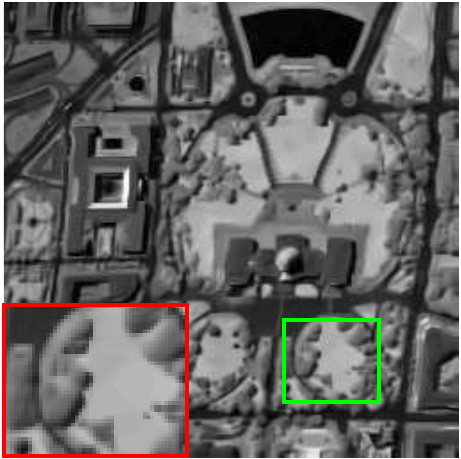}
	}
	\subfigure[LRMR(26.66dB)] {
		
		\includegraphics[width=0.33\columnwidth]{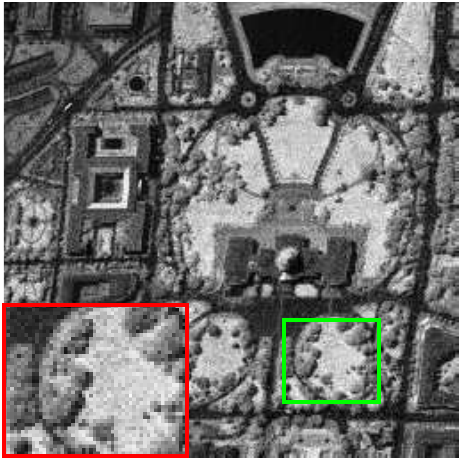}
	}
	\subfigure[LRTDLV(27.16dB)] {
		
		\includegraphics[width=0.33\columnwidth]{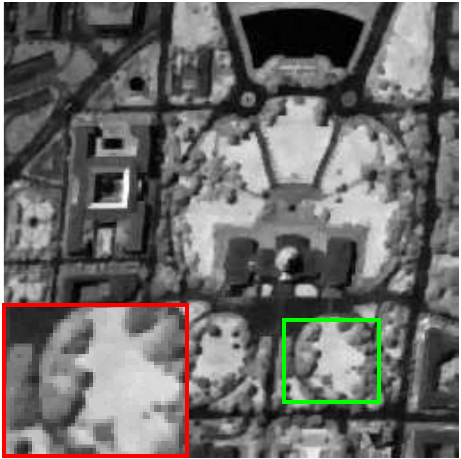}
	}
	\subfigure[3DTNN(26.04dB)] {
		
		\includegraphics[width=0.33\columnwidth]{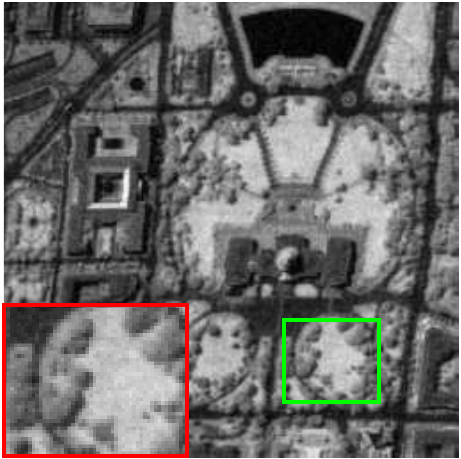}
	}
	\subfigure[3DLogTNN(28.84dB)] {
		
		\includegraphics[width=0.33\columnwidth]{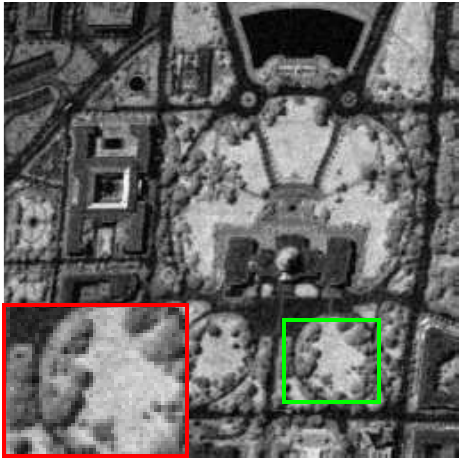}
	}
	\subfigure[MFWTNN(26.44dB)] {
		
		\includegraphics[width=0.33\columnwidth]{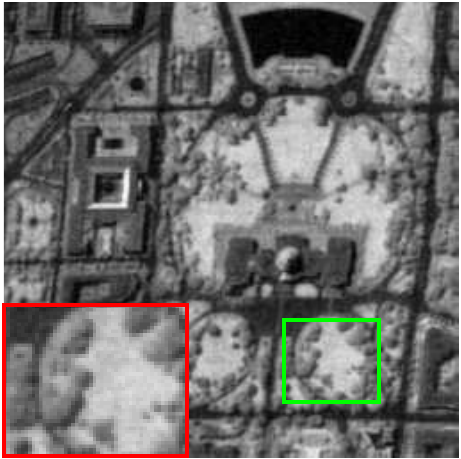}
	}
	\subfigure[\hspace*{-0.05in}NonMFWTNN(29.05dB)] {
		
		\includegraphics[width=0.33\columnwidth]{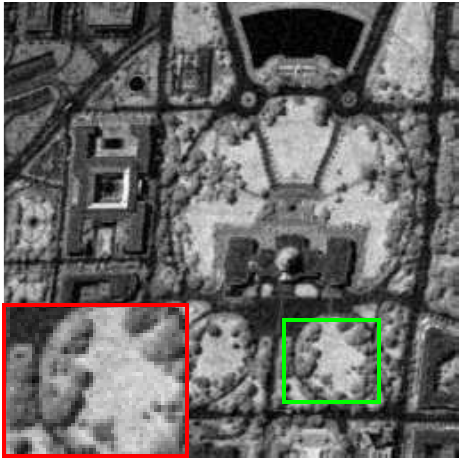}
	}
	\caption{  The 82th band of the denoised results of the Washington DC Mall dataset under noise Case 5.}
	\label{DC_imshow0202_82}
\end{figure*}

\begin{figure*}
	\centering
	\subfigure[Original image] {
		
		\includegraphics[width=0.33\columnwidth]{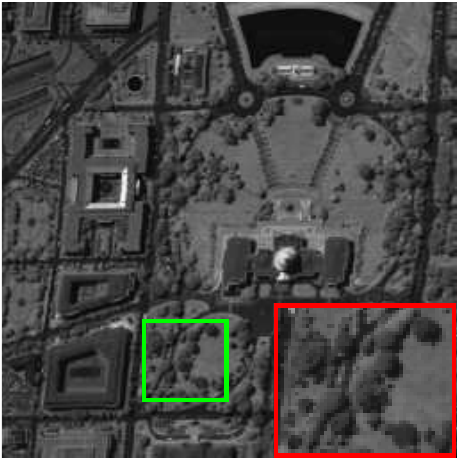}
	}
	\subfigure[Noisy image(10.33dB)] {
		
		\includegraphics[width=0.33\columnwidth]{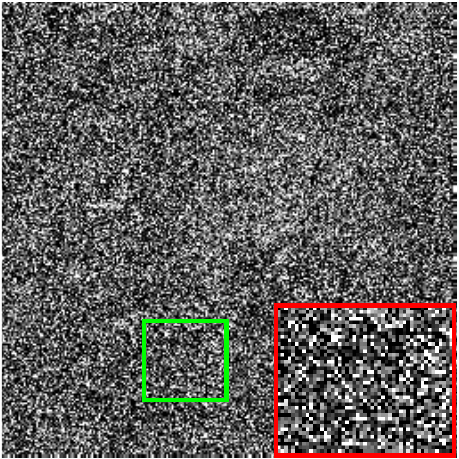}
	}
	\subfigure[LRTA(26.73dB)] {
		
		\includegraphics[width=0.33\columnwidth]{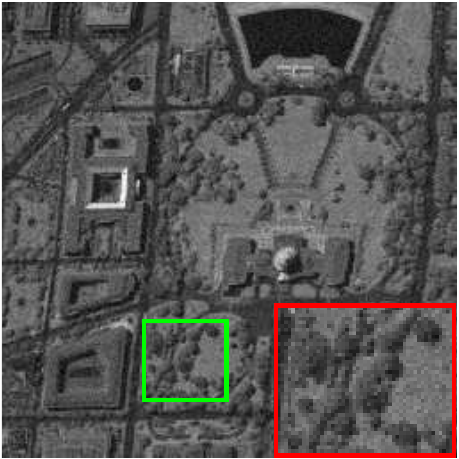}
	}
	\subfigure[BM4D(28.11dB)] {
		
		\includegraphics[width=0.33\columnwidth]{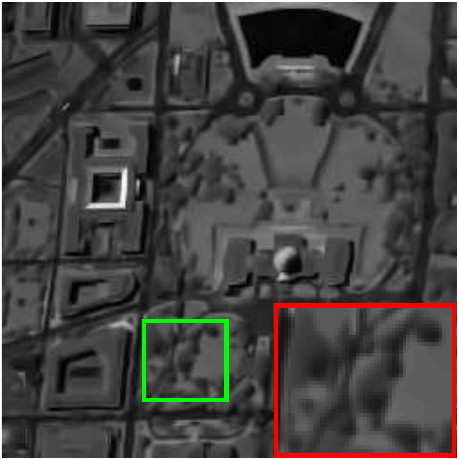}
	}
	\subfigure[LRMR(27.21dB)] {
		
		\includegraphics[width=0.33\columnwidth]{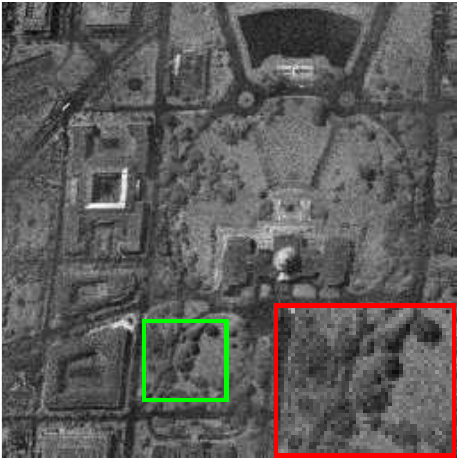}
	}
	\subfigure[LRTDLV(30.37dB)] {
		
		\includegraphics[width=0.33\columnwidth]{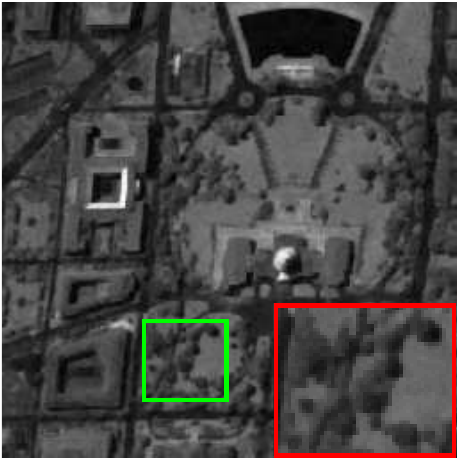}
	}
	\subfigure[3DTNN(29.44dB)] {
		
		\includegraphics[width=0.33\columnwidth]{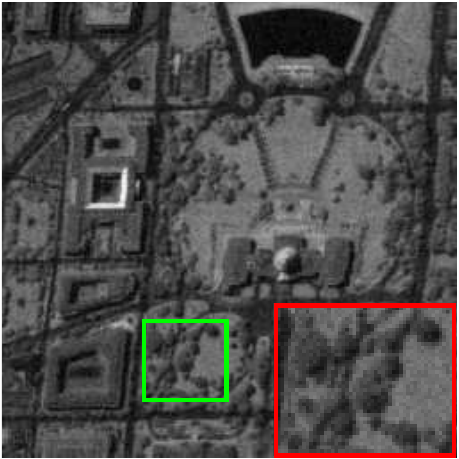}
	}
	\subfigure[3DLogTNN(32.13dB)] {
		
		\includegraphics[width=0.33\columnwidth]{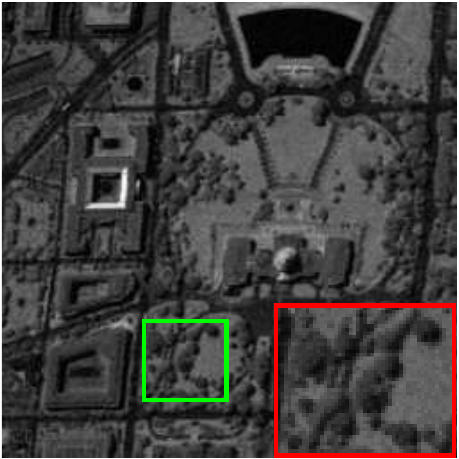}
	}
	\subfigure[MFWTNN(31.03dB)] {
		
		\includegraphics[width=0.33\columnwidth]{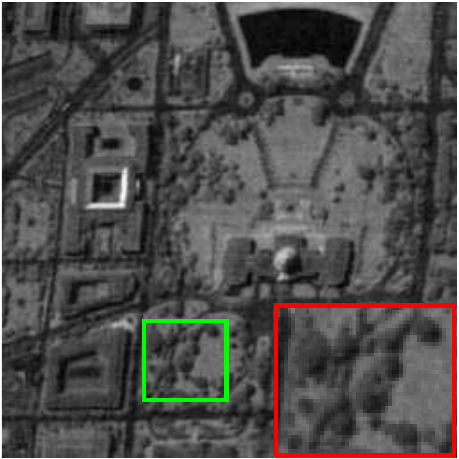}
	}
	\subfigure[\hspace*{-0.05in}NonMFWTNN(32.67dB)] {
		
		\includegraphics[width=0.33\columnwidth]{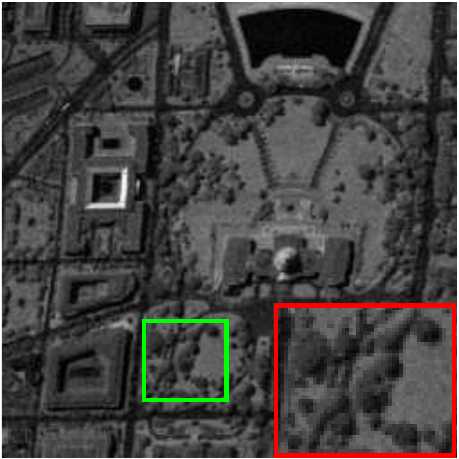}
	}
	\caption{ The 57th band of the denoised results of the Washington DC Mall dataset under noise Case 7.}
	\label{DC_imshow010302_57}
\end{figure*}
In Fig. \ref{DC_imshow0202_82}, we show the 82th band of the denoised results in case 5.
In Fig. \ref{DC_imshow010302_57}, we show the 57th band of the denoised results in case 7.
As these grayscale images show, LRTA, LRMR and 3DTNN cannot completely remove some high-intensity noises and retain more noise.
For BM4D and LRTDTV, although they can remove more noise, they also lose more details. This makes the denoised result too smooth.
Compared with them, our proposed models can remove more noise while retaining more details.
\begin{figure*}
	\centering
	\subfigure[Noisy image] {
		
		\includegraphics[width=0.33\columnwidth]{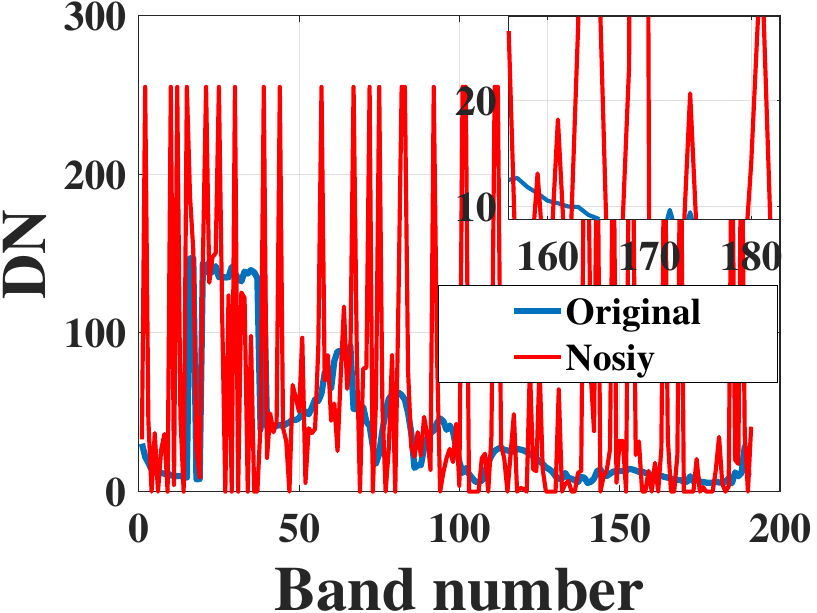}
	}
	\subfigure[LRTA] {
		
		\includegraphics[width=0.33\columnwidth]{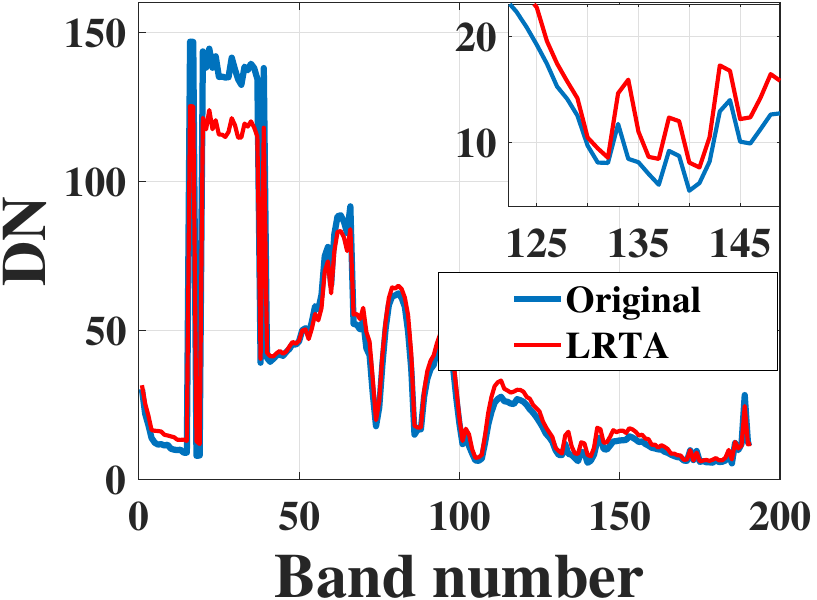}
	}
	\subfigure[BM4D] {
		
		\includegraphics[width=0.33\columnwidth]{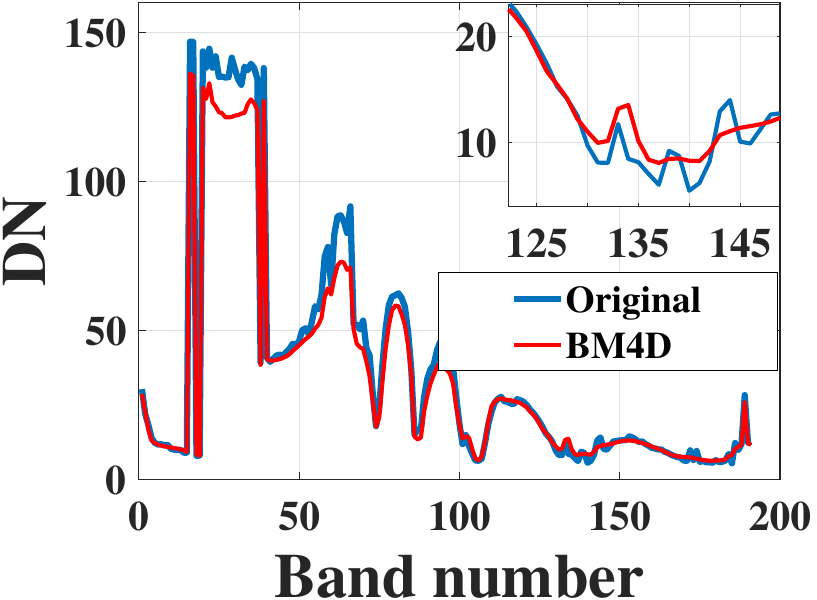}
	}
	\subfigure[LRMR] {
		
		\includegraphics[width=0.33\columnwidth]{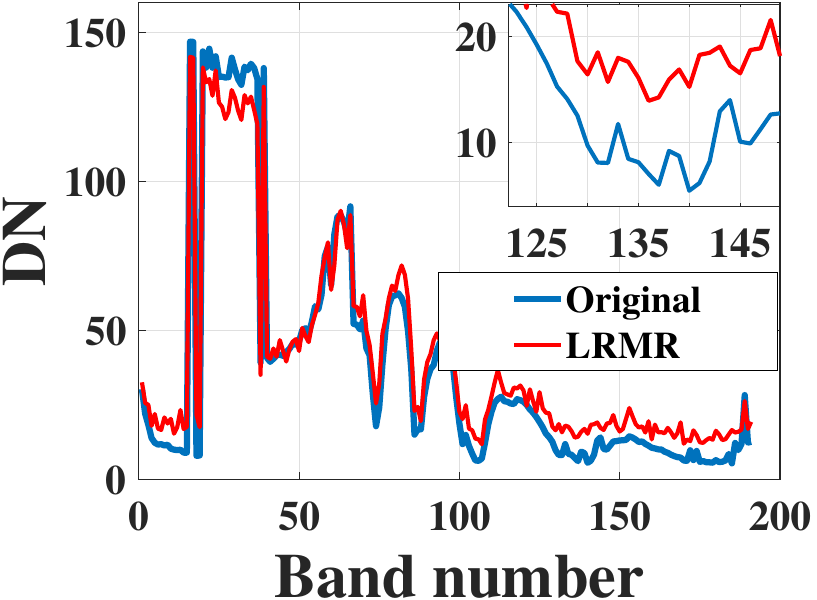}
	}
	\subfigure[LRTDTV] {
		
		\includegraphics[width=0.33\columnwidth]{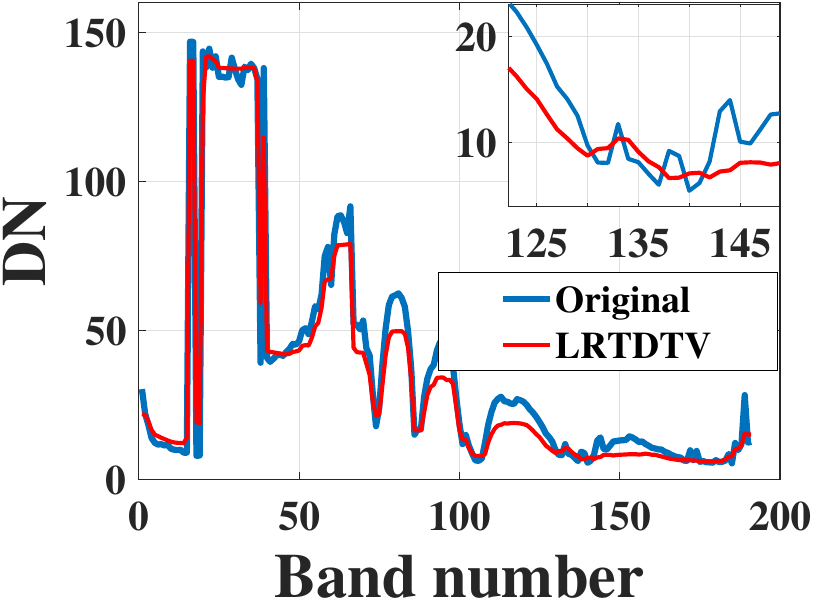}
	}
	\subfigure[3DTNN] {
		
		\includegraphics[width=0.33\columnwidth]{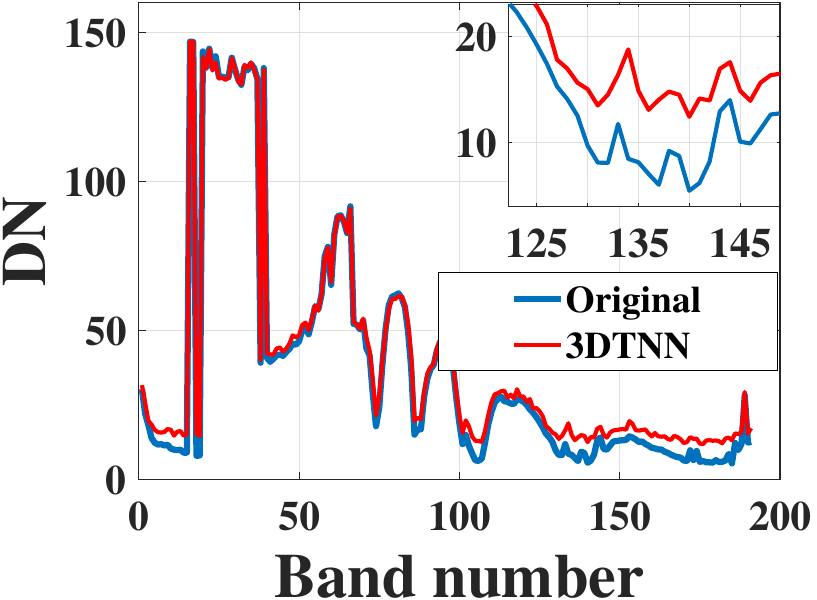}
	}
	\subfigure[3DLogTNN] {
		
		\includegraphics[width=0.33\columnwidth]{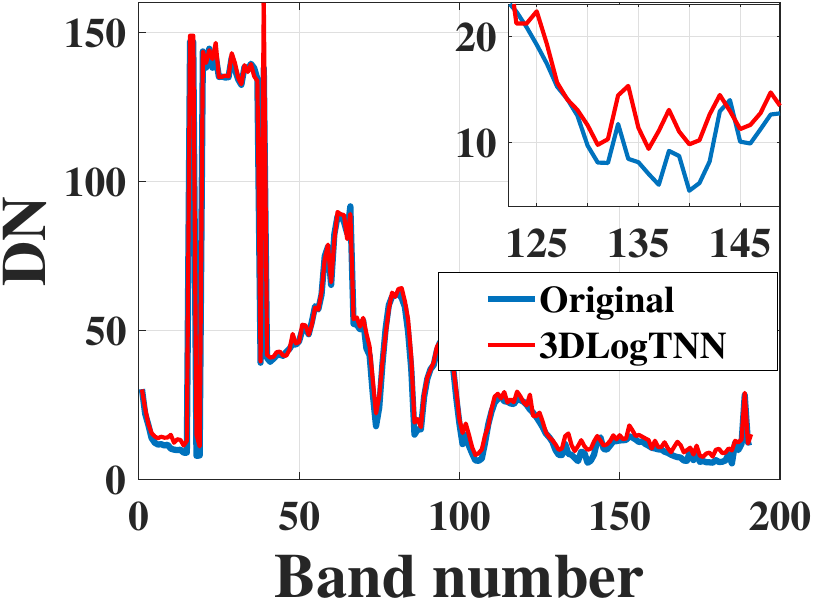}
	}
	\subfigure[MFWTNN] {
		
		\includegraphics[width=0.33\columnwidth]{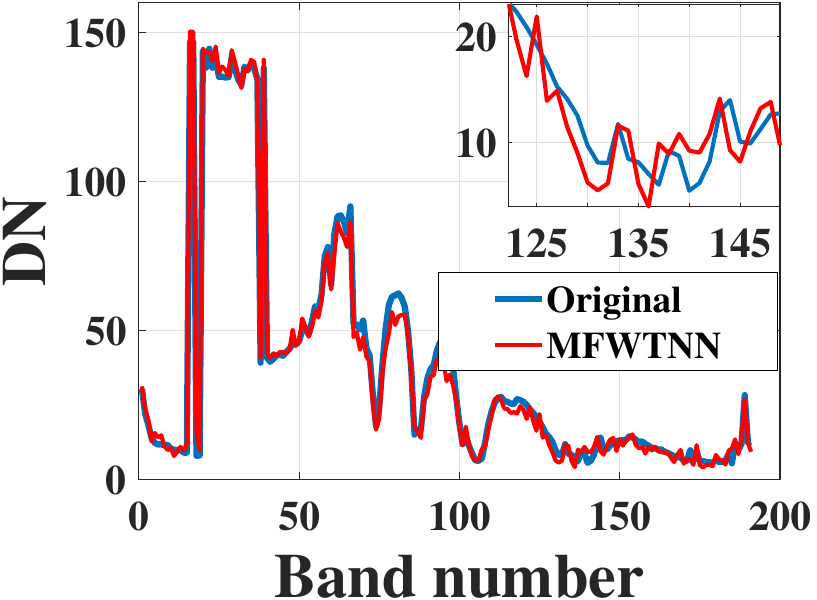}
	}
	\subfigure[NonMFWTNN] {
		
		\includegraphics[width=0.33\columnwidth]{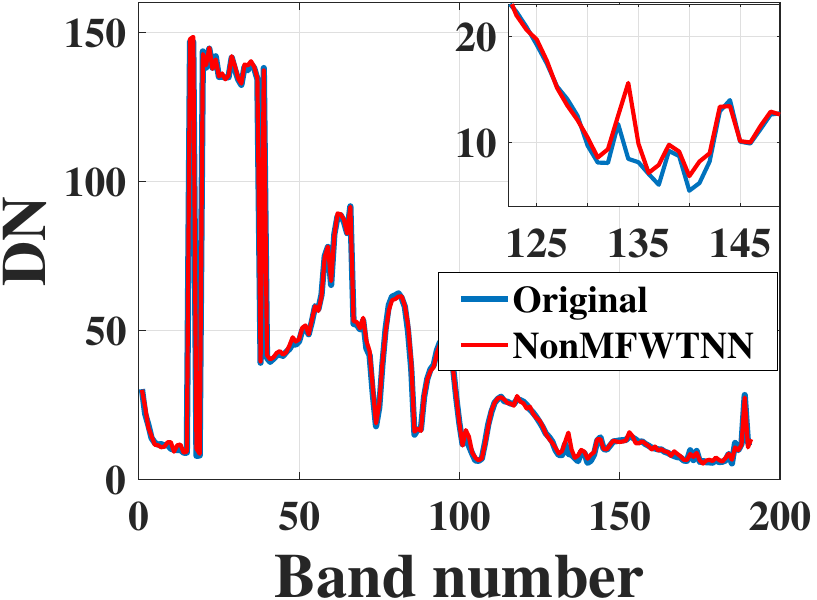}
	}
	\caption{  The Spectral signatures curve of Washington DC Mall in (234,171) under noise case 6}
	\label{DC_imshow_pixel_0103}
\end{figure*}
Similar to the Pavia City Center dataset, we show the spectral curve of the pixel (234, 71) under case 6 in Fig. \ref{DC_imshow_pixel_0103}.
It can be seen that spectral signatures curves are relative to ERGAS values in Table \ref{tab:DCTablePQI}, and our models can obtain the optimal result.

\subsection{Real HSI Data Experiments}
In this subsection, we chose the real HSI dataset in this experiment.
It is AVIRIS Indian Piness dataset\footnote{\url{https://engineering.purdue.edu/}}.
It is filmed by the Airborne Visible Infrared Imaging Spectrometer (AVIRIS).
The reason we chose a sub-block with a size of $145 \times 145 \times 220$ as a real dataset is because some bands are seriously polluted by Gaussian white noise and impulse noise and have lost the meaning of reference.
We evaluate our denoising model from visual evaluation and vertical average profile.
In Fig. \ref{Indian_imshow0103_105}, we show the 105th band of the observed and denoised results in Indian dataset.
It can be seen that it is seriously polluted by Gaussian and sparse noise, and the original ground features information cannot be distinguished.
Although the competitive denoising methods restore the feature information of HSI, the denoised HSIs still lose detailed information or residual noise, resulting in missing or blurred texture details of HSI.
Our models not only restore more feature information of HSI but also remove more local noise than the competitive methods.
Therefore, these denoised HSIs have clearer texture information.
In Fig. \ref{real_imshowIndianDN}, we show the vertical mean profiles of the 50th band of Indian dataset.
Due to the influence of hybrid noises, the vertical mean profiles of the observed HSI shows rapid fluctuations.
This means that the smaller fluctuations in the vertical mean profiles, the higher quality of the denoised HSI.
Obviously, the mean profile curves obtained by our model are the most stable and its fluctuation is the smallest.
This result is relative to the facts reflected in Fig. \ref{Indian_imshow0103_105}.
\begin{figure*}[htbp]
	\centering
	%
	\subfigure[Noisy image] {
		
		\includegraphics[width=0.25\columnwidth]{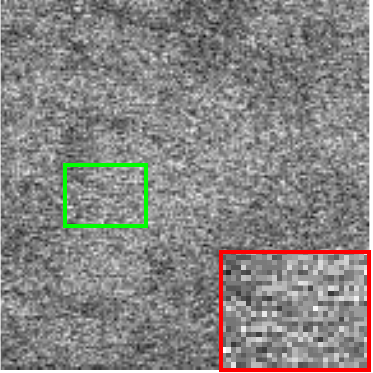}
	}
%
	\subfigure[BM4D] {
		
		\includegraphics[width=0.25\columnwidth]{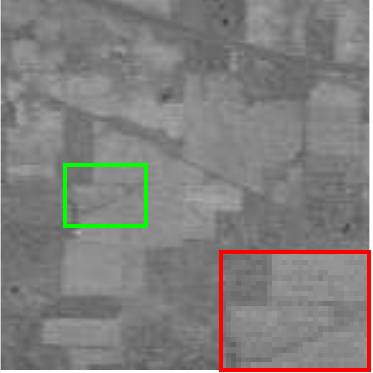}
	}
	\subfigure[LRMR] {
		
		\includegraphics[width=0.25\columnwidth]{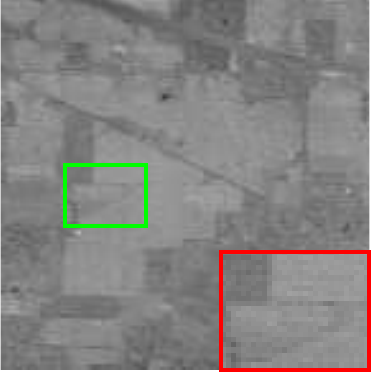}
	}
	\subfigure[LRTDTV] {
		
		\includegraphics[width=0.25\columnwidth]{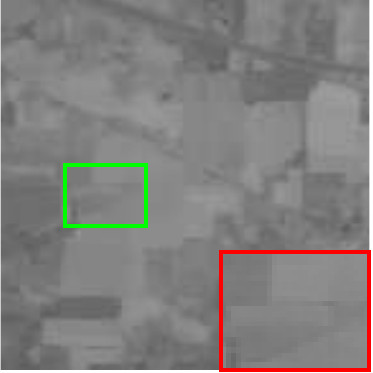}
	}
%
	\subfigure[3DLogTNN] {
		
		\includegraphics[width=0.25\columnwidth]{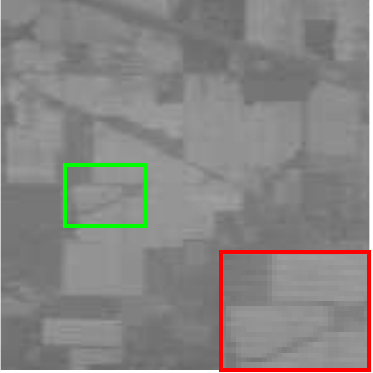}
	}
	\subfigure[MFWTNN] {
		
		\includegraphics[width=0.25\columnwidth]{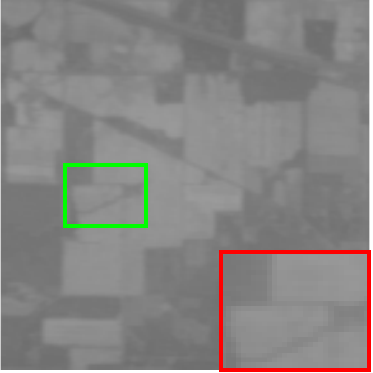}
	}
	\subfigure[NonMFWTNN] {
		
		\includegraphics[width=0.25\columnwidth]{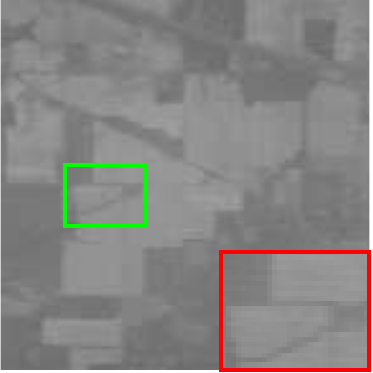}
	}
	\caption{ The 105th band of all denoising results for the Indian dataset.}
	\label{Indian_imshow0103_105}
\end{figure*}

\begin{figure*}[htbp]
	\centering
	\subfigure[Noisy image] {
		
		\includegraphics[width=0.25\columnwidth]{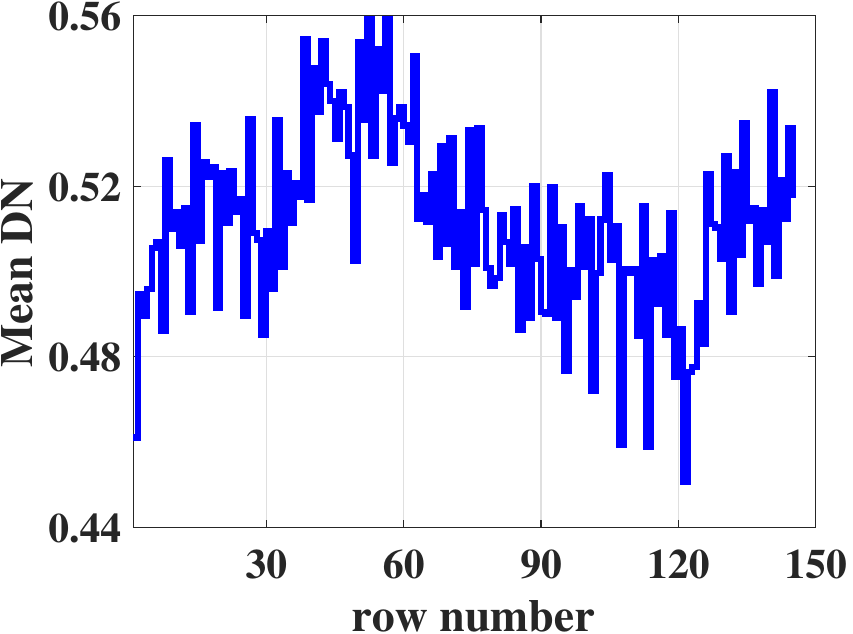}
	}
	\subfigure[BM4D] {
		
		\includegraphics[width=0.25\columnwidth]{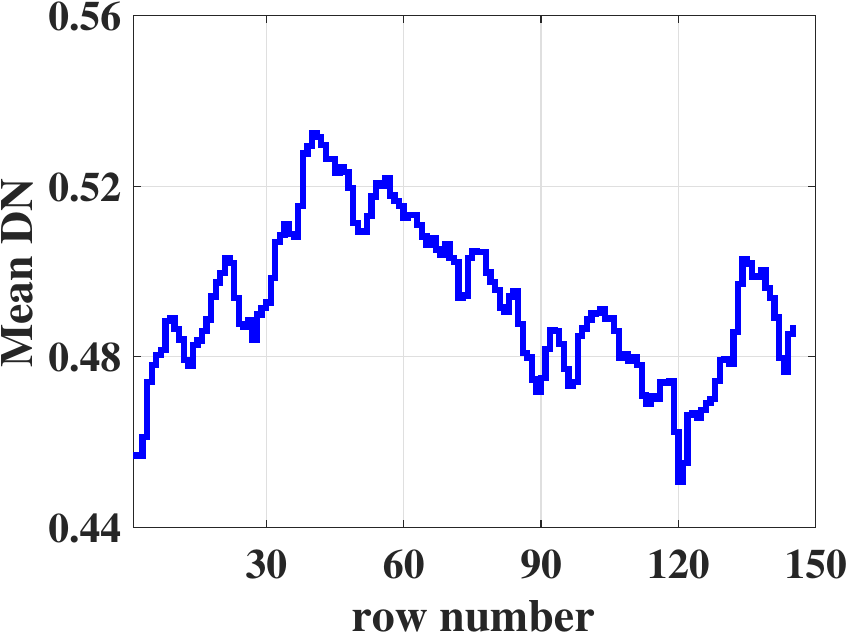}
	}
	\subfigure[LRMR] {
		
		\includegraphics[width=0.25\columnwidth]{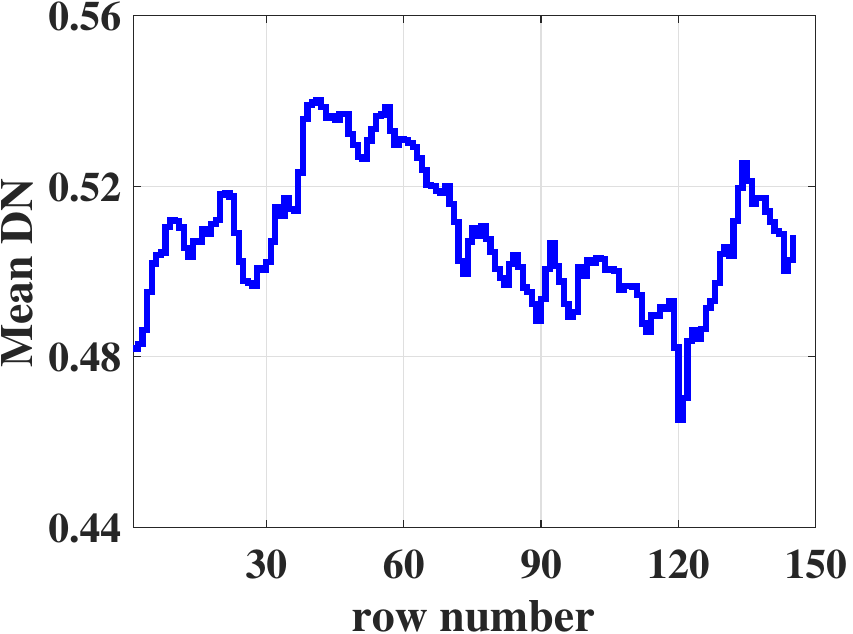}
	}
	\subfigure[LRTDTV] {
		
		\includegraphics[width=0.25\columnwidth]{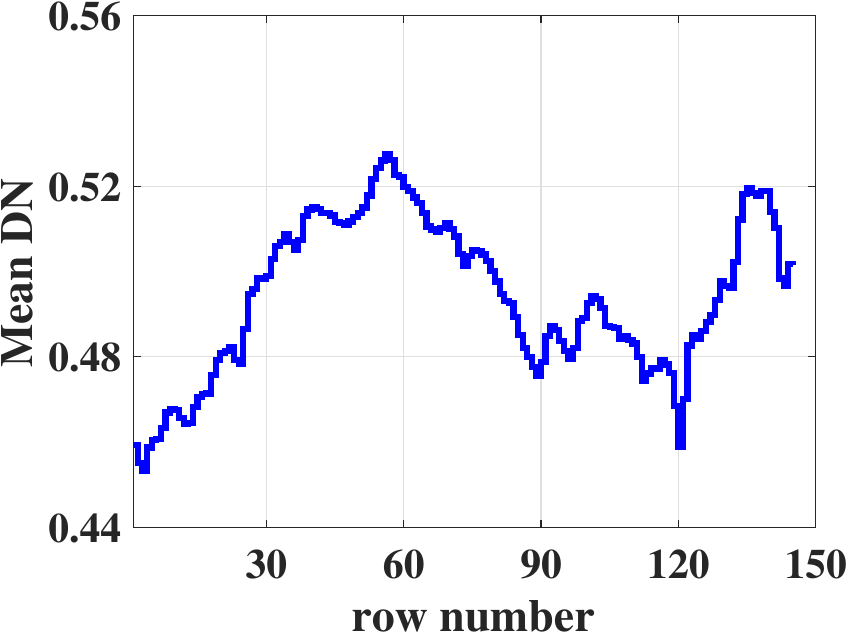}
	}
	\subfigure[3DLogTNN] {
		
		\includegraphics[width=0.25\columnwidth]{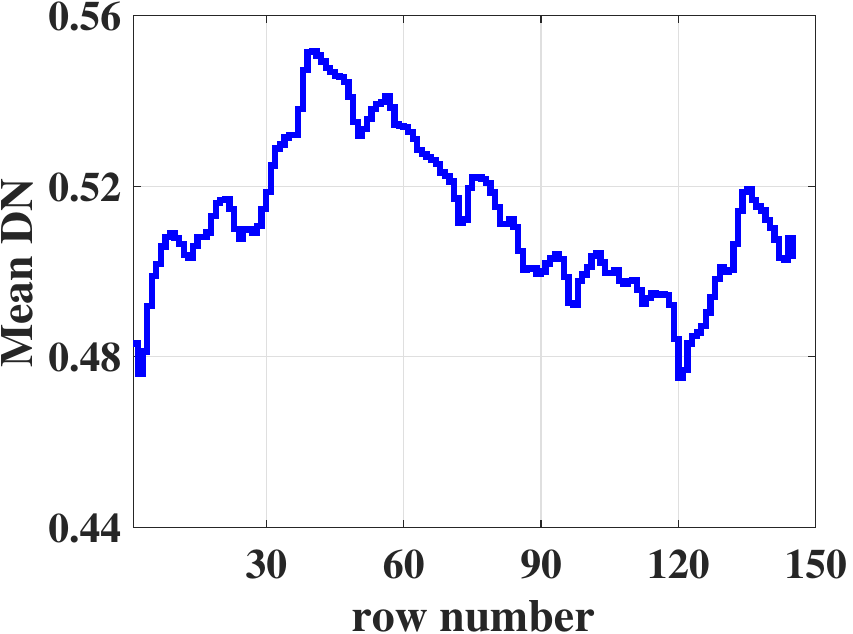}
	}
	\subfigure[MFWTNN] {
		
		\includegraphics[width=0.25\columnwidth]{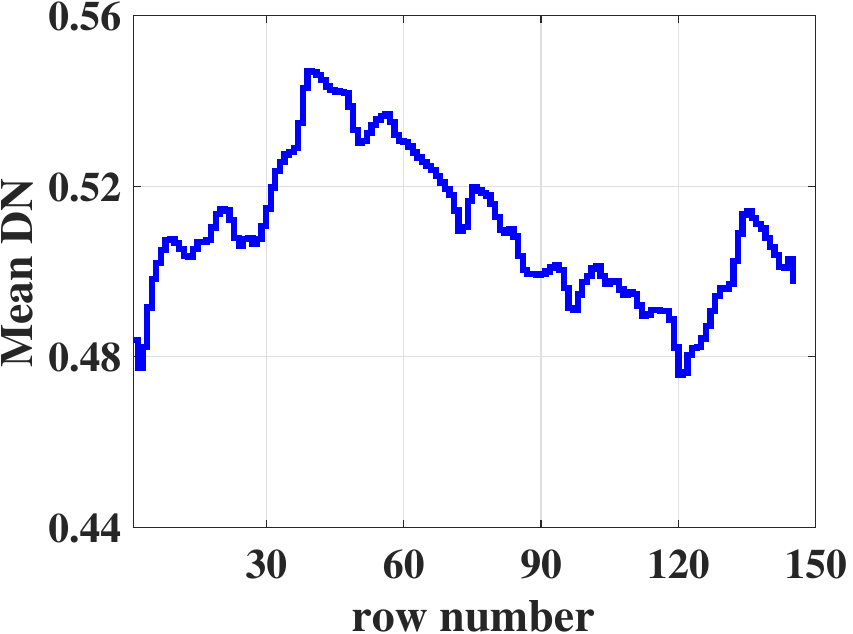}
	}
	\subfigure[NonMFWTNN] {
		
		\includegraphics[width=0.25\columnwidth]{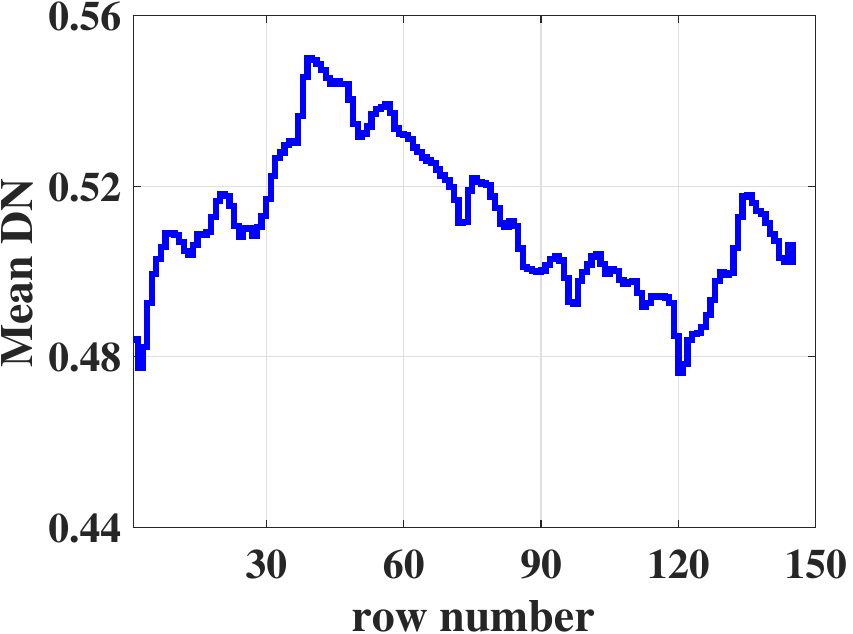}
	}
	\caption{
		The vertical mean profiles of band 50 in Indian dataset.
	}
	\label{real_imshowIndianDN}
\end{figure*}

\subsection{Discussion}\label{Discussion_Parameter}
1) Parameter Analysis:
In the NonMFWTNN model, five parameters need to be discussed, i.e., the modal weight parameter $\alpha$, the regularization parameters $\lambda$, $\tau$ and frequency weight parameter $C_1$, and $C_2$.
All experiments are done under Case 2.

The modal weight $\alpha$ is to control the ratio between different modes.
The low-rank characteristics of the spatial dimensions are similar.
To reduce the parameters, we set the weights of the two spatial dimensions to equal weights.
Therefore, it is set to $\alpha=[1,1,\alpha_3]/(1+1+\alpha_3) $, which makes $\sum_{i=1}^3{\alpha _i}=1$.
We study the impact of $\alpha_3$ which is selected from $\{ 0.1 - 1.0 \} $ on the robustness of the model, and its optimal value is obtained at $0.2$.
It can be observed from Fig. \ref{DCNEW_PSNR_alpha} that when $\alpha_3$ takes a smaller value, the PSNR value of the denoised result is better.
This is because the low-rank characteristic of the spectral mode is stronger than the spatial mode.
The regularization parameter $\tau$ controls the weight of the Gaussian noise term.
Its value is related to the intensity of Gaussian noise.
Therefore, it is set to $\tau=\tau_n/ \sigma $,  where $\sigma$ is Gaussian noise intensity.
We study the impact of $\tau_n$ which is selected from $\{ 10^{-5} - 1.6\times 10^{-4} \} $ on the robustness of the model, and its optimal value is obtained at $10^{-4}$.
It can be observed from Fig. \ref{DCNEW_PSNR_tau} that when $\tau \le 0.0001$, the MPSNR of the denoised result is low and fluctuates greatly.
The main reason is that when $\tau$ is small, the ability to remove Gaussian noise is weak, and there is a lot of remaining Gaussian noise.
The regularization parameter $\lambda$ controls the weight of the sparse noise term.
Defining $\lambda =\lambda _s( \frac{\alpha _1}{\sqrt{\max \left( n_2,n_3 \right) n_1}}+\frac{\alpha _2}{\sqrt{\max \left( n_3,n_1 \right) n_2}}+\frac{\alpha _3}{\sqrt{\max \left( n_1,n_2 \right) n_3}} ) $.
As observed from Fig. \ref{DCNEW_PSNR_lambda}, our model can obtain remarkable results when $\lambda _s$ which is selected from $\{  0.006 - 0.016\} $.
its optimal value is obtained at $0.011$.
Similar to $\tau$, when $\lambda$ is small, the noise removal effect is not good.
The frequency weight parameter $C_1$ reflects the scaling ratio of the weight.
Fig. \ref{DCNEW_PSNR_C1} show the PSNR values when $C_1$ which is selected from $\{ 0.1-1.0 \} $.
It can be seen from these figures that when $C_1$ is small, the frequency weight becomes smaller, and the noise is more difficult to remove; when $C_1$ is large, the frequency weight becomes larger, and the detailed information of tensor is removed.
its optimal value is obtained at $0.6$.
The frequency weight parameter $C_2$ reflects the initial weight of the iteration.
Fig. \ref{DCNEW_PSNR_C2} show the PSNR values when $C_2$ which is selected from $\{ 0.1-1.0 \} $.
its optimal value is obtained at $0.6$.
It shrinks equally of all frequency components.
It inherits the advantages of TNN and ensures that noise can be removed correctly.

2) Convergence Analysis: Fig.\ref{Parameter_analysis03} shows the
MPSNR value, the MSSIM value, and the Error value of the proposed model according to the iterations.
As shown in Fig.\ref{Parameter_analysis03}, the number of iterations increases, the value gradually stabilizes by the proposed method, which justifies the numerical convergence of the proposed method.

3) Shortcoming  Analysis: The MFWTNN and NonMFWTNN can better explore the low-rank structure of HSI, and achieve excellent results in removing the mixed noise of Gaussian and salt and pepper.
But our models do not work for removing dead line noise.
On the one hand, our model does not describe prior information for dead line noise.
On the other hand, the dead line noise is also low rank, and it is easy to be regarded as a part of HSI, which causes dead line noise to remain in the restored HSI.
The following two strategies may solve this shortcoming.
Similar to noise modeling of strip noise \cite{DLB,SSLR_SSTV,chen2017group}, dead line noise can be  modeled more accurately.
Another solution is to use the PnP framework to integrate the BM3D or deep learning frameworks into the model to achieve the purpose of removing dead line noise \cite{zhuang2021hyperspectral, liu2021hyperspectral, ZENG_HSI_tensor}.

\begin{figure}
	\centering
	\subfigure[ ] {
		\label{DCNEW_PSNR_alpha}
		\includegraphics[width=0.28\columnwidth]{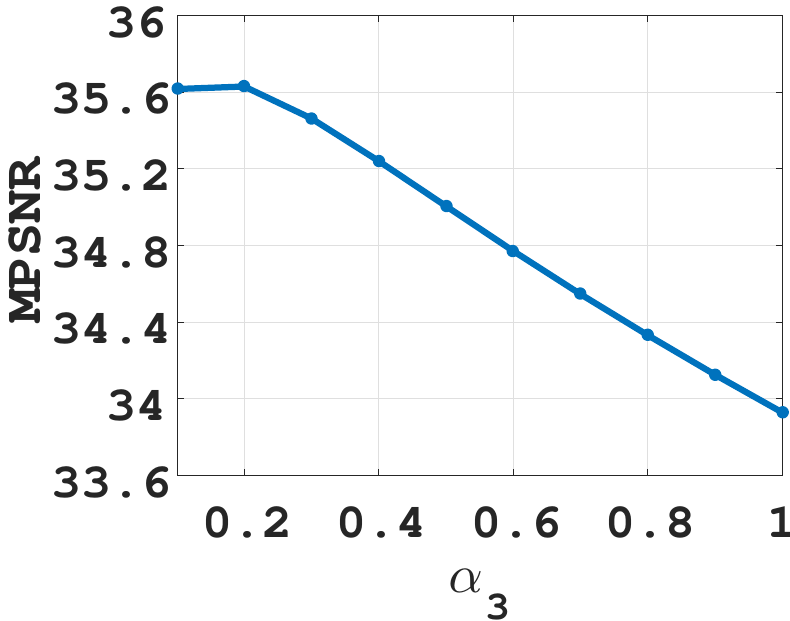}
	}
	\subfigure[ ] {
		\label{DCNEW_PSNR_lambda}
		\includegraphics[width=0.28\columnwidth]{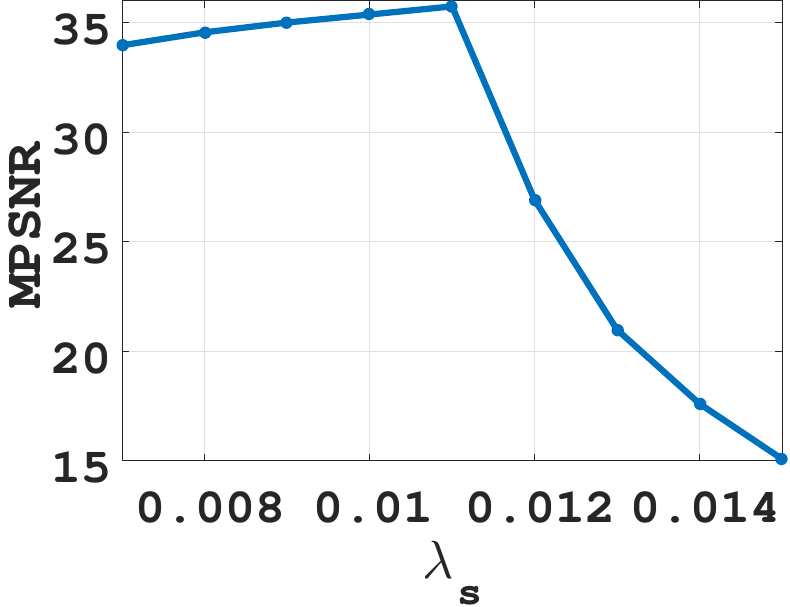}
	}
	\subfigure[ ] {
		\label{DCNEW_PSNR_tau}
		\includegraphics[width=0.28\columnwidth]{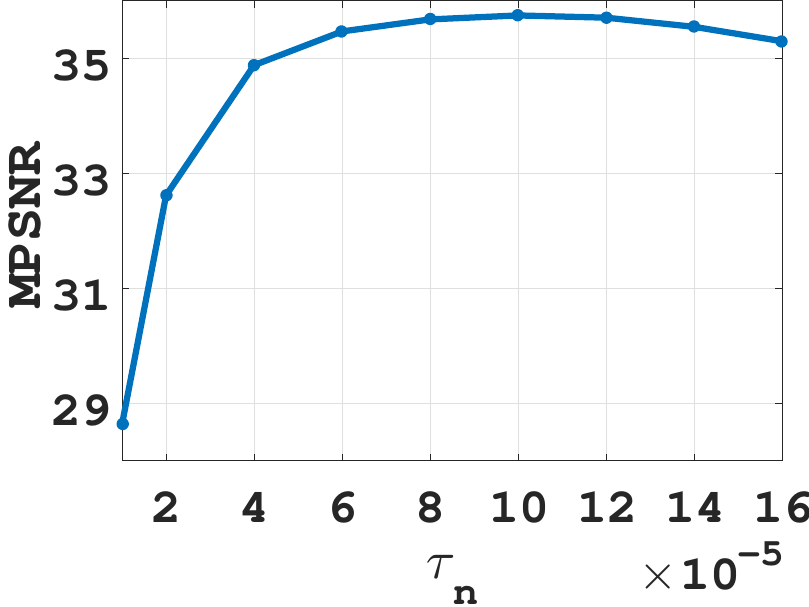}
	}
	\subfigure[ ] {
		\label{DCNEW_PSNR_C1}
		\includegraphics[width=0.28\columnwidth]{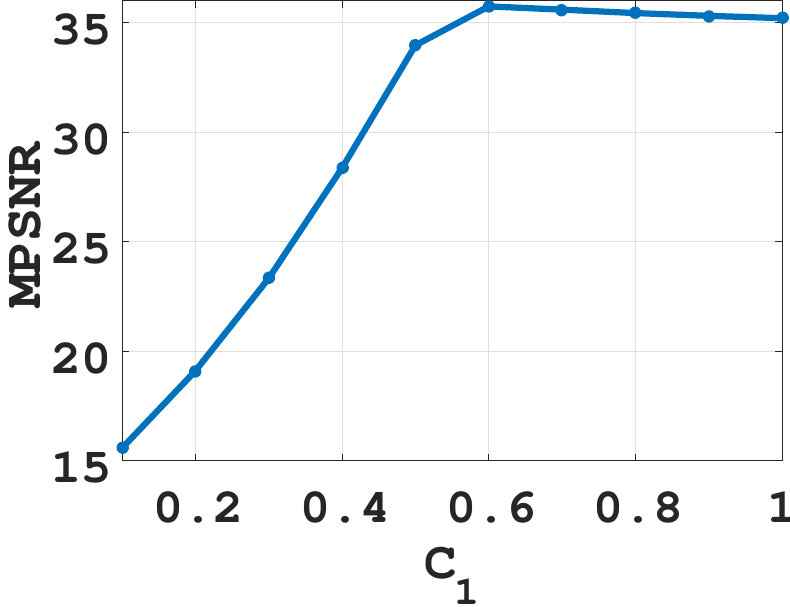}
	}
	\subfigure[ ] {
		\label{DCNEW_PSNR_C2}
		\includegraphics[width=0.28\columnwidth]{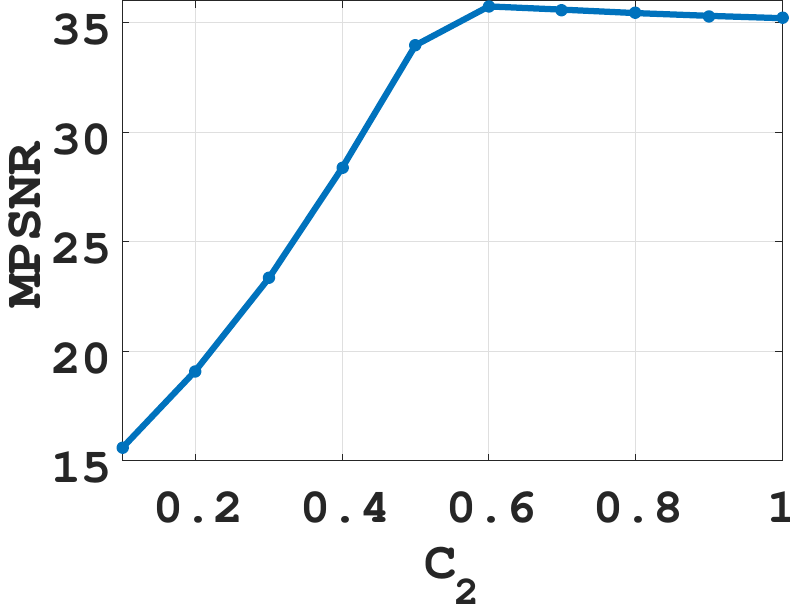}
	}
	\caption{ MPSNR values with respect to different values of parameters.
	}
	\label{Parameter_analysis02}
\end{figure}

\begin{figure}[h] \centering    	
	\subfigure[ ] {
		\label{DCNEW_PSNR_iter}
		\includegraphics[width=0.28\columnwidth]{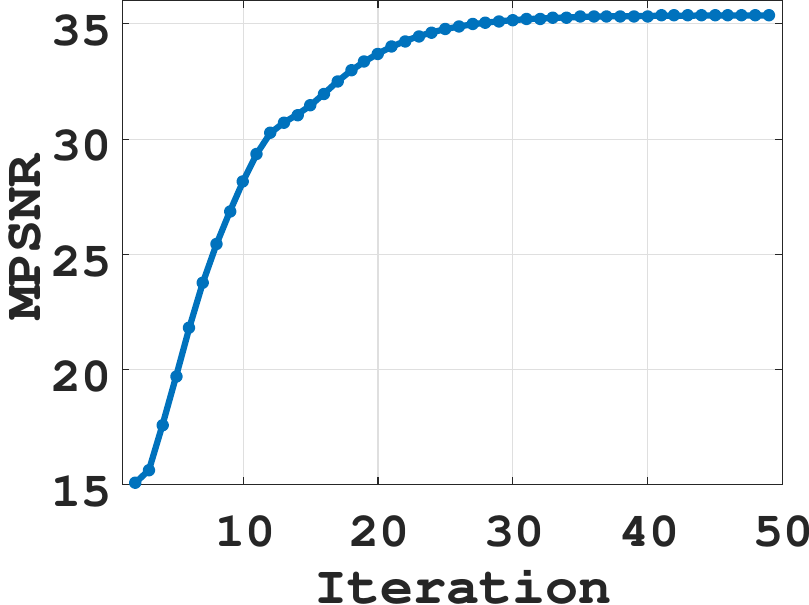}
	}
	\subfigure[  ] {
		\label{DCNEW_SSIM_iter}
		\includegraphics[width=0.28\columnwidth]{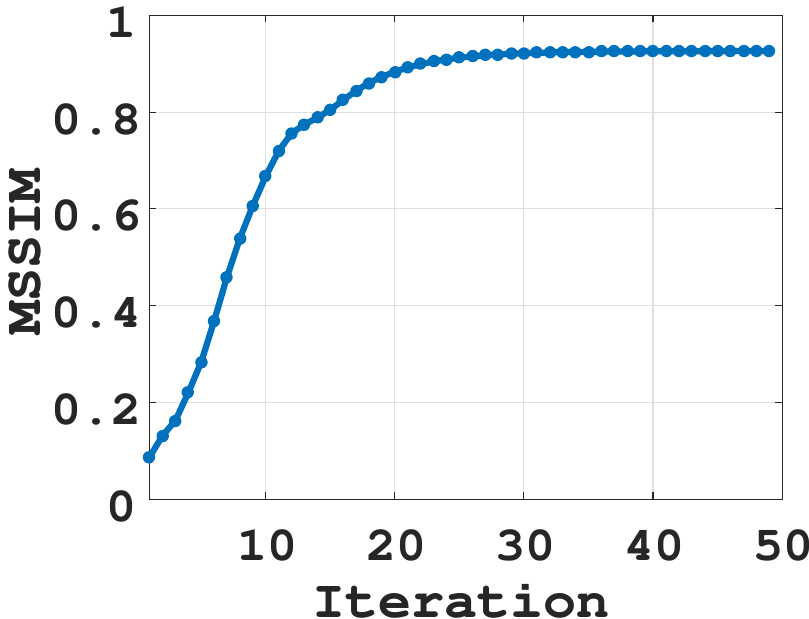}
	}
	\subfigure[ ] {
		\label{DCNEW_PSNR_Error}
		\includegraphics[width=0.28\columnwidth]{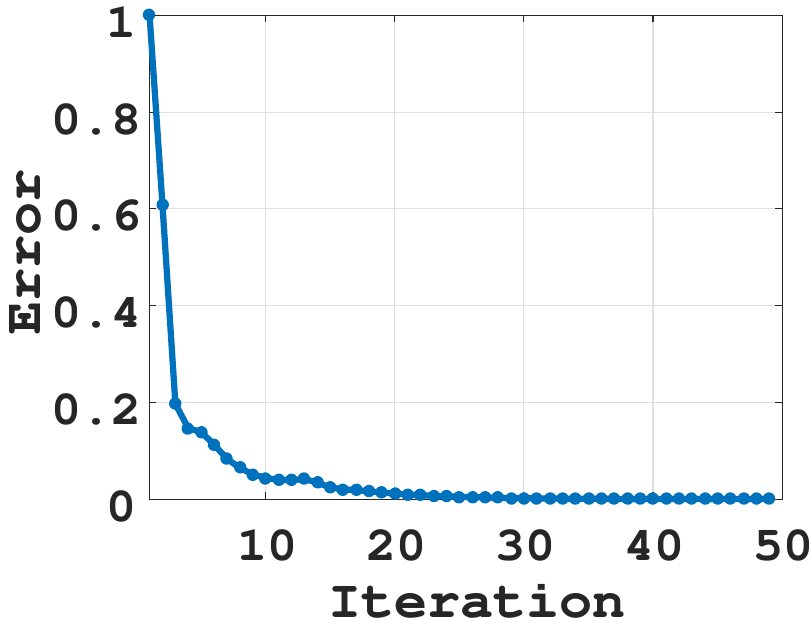}
	}
	\caption{  MPSNR , SSIM and Error  values with respect to the number of iterations.
	}
	\label{Parameter_analysis03}
\end{figure}

\section{Conclusion}
In this paper, we proposed the multi-modal and frequency-weighted tensor nuclear norm (MFWTNN) and Non-convex MFWTNN to approximate the rank function and applied them to the HSI denoising task.
The MFWTNN not only considered the correlation between two spatial modes and one spectral mode but also considered the frequency meaning of HSI in the Fourier domain.
The MFWTNN is the fusion of TNN and SNN, which inherits the advantages of t-SVD and Tucker decomposition.
And we gave the frequency weight adaptive calculation way.
Since the calculation of weights depends on iteratively restored and observed HSI, our models are more robust to different degrees of noise.
It can better characterize the low-rankness of HSI.
Besides, based on the MFWTNN model, we also considered the physical meaning of the internal singular values of the frequency slice and proposed its non-convex approximation.
They powerfully improved capability and flexibility for describing low-rankness in HSIs.
The experiments conducted with both simulated and real HSI datasets showed that our proposed models based HSI denoising model were competitive methods to remove the hybrid noises.

\bibliographystyle{IEEEtran}
\bibliography{IEEEabrv,reference}

\appendices
\section{ Appendix}

\begin{lemma}\cite{logsum} \label{lemma2}
	Let $0<\tau$ and $0<\varepsilon<\min \left(\sqrt{\tau}, \frac{\tau}{y}\right)$, the
	following problem:
	\begin{equation}\label{lognuclearproblem01}
	\min _{x} f(x)=\tau \log (|x|+\varepsilon)+\frac{1}{2}(x-y)^{2}
	\end{equation}
	has a local minimal
	\begin{equation}
	\mathcal{D}_{\tau, \varepsilon}(y)=\left\{\begin{array}{ccc}
	0 & \text { if } & c_{2} \leq 0 \\
	\operatorname{sign}(y)\left(\frac{c_{1}+\sqrt{c_{2}}}{2}\right) & \text { if } & c_{2}>0
	\end{array}\right.
	\end{equation}
	where $c_{1}=|y|-\varepsilon$ and $c_{2}=\left(c_{1}\right)^{2}-4(\tau-\varepsilon|y|) .$
\end{lemma}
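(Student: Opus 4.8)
The plan is to reduce this scalar problem to a smooth minimization on a half-line and then read off the stationary points from a quadratic. First I would exploit the symmetry of $f$ in the sign of $x$: the penalty $\tau\log(|x|+\varepsilon)$ depends only on $|x|$, while the fidelity term $\tfrac12(x-y)^2$ is decreased by moving $x$ toward $y$, so any candidate minimizer must share the sign of $y$. Replacing a negative-signed $x$ by its reflection strictly lowers $f$, so it suffices to treat $y>0$ and restrict to $x\ge 0$, where $|x|=x$ and $f$ is differentiable; the general formula is then recovered by attaching the factor $\operatorname{sign}(y)$ and reading $y$ as $|y|$. Note that the hypothesis $\varepsilon<\tau/y$ already presupposes $y>0$, which is consistent with this reduction.

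On $x\ge 0$ I would compute $f'(x)=\frac{\tau}{x+\varepsilon}+(x-y)$ and set it to zero. Clearing the positive factor $x+\varepsilon$ converts stationarity into the quadratic $x^2+(\varepsilon-y)x+(\tau-\varepsilon y)=0$, whose roots are $\tfrac{c_1\pm\sqrt{c_2}}{2}$ with $c_1=y-\varepsilon$ and $c_2=(y-\varepsilon)^2-4(\tau-\varepsilon y)$. The key algebraic observation is that the discriminant collapses to $c_2=(y+\varepsilon)^2-4\tau$, which links the sign condition on $c_2$ directly to the geometry of $f'$.

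Next I would analyze the shape of $f$ through $f''(x)=1-\frac{\tau}{(x+\varepsilon)^2}$, which vanishes at $x^\ast=\sqrt{\tau}-\varepsilon$; the hypothesis $\varepsilon<\sqrt{\tau}$ makes $x^\ast>0$, so $f'$ is concave-to-convex, decreasing on $(0,x^\ast)$ and increasing afterward, with minimum value $2\sqrt{\tau}-\varepsilon-y$ attained at $x^\ast$. The hypothesis $\varepsilon<\tau/y$ yields the one-sided values $f'(0^+)=\tau/\varepsilon-y>0$ and $f'(0^-)=-\tau/\varepsilon-y<0$. With these facts the case split is immediate. When $c_2\le 0$, equivalently $y\le 2\sqrt{\tau}-\varepsilon$, the minimum of $f'$ is nonnegative so $f'\ge 0$ on the whole half-line, and together with $f'(0^-)<0$ this makes the kink at $x=0$ a local minimizer, giving $\mathcal{D}_{\tau,\varepsilon}(y)=0$. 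When $c_2>0$ both roots are positive, since their product $\tau-\varepsilon y$ is positive by hypothesis and $c_2>0$ forces $y>2\sqrt{\tau}-\varepsilon>\varepsilon$, so their sum $c_1=y-\varepsilon$ is positive too; the smaller root sits in the concave region and is a local maximizer, while the larger root $\tfrac{c_1+\sqrt{c_2}}{2}$ sits where $f''>0$ and is the claimed local minimizer.

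The main obstacle I anticipate is bookkeeping rather than depth: I must check that the hypotheses $0<\varepsilon<\min(\sqrt{\tau},\tau/y)$ are exactly what is needed to (i) place $x^\ast$ strictly inside the feasible half-line, (ii) force the right derivative at the origin to be positive so $x=0$ is a genuine candidate, and (iii) keep the active root positive so the constraint $x\ge 0$ is never binding. The one step demanding care is confirming that the larger root lies in the convex region $x>x^\ast$, so that it is truly a local minimum and not the spurious maximizer; I would settle this by observing that $f'$ changes sign from negative to positive exactly at that root.
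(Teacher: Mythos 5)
Your proof is correct, but there is nothing in the paper to compare it against: the paper does not prove Lemma~\ref{lemma2} at all. It is imported verbatim, with citation, from the log-sum reference, and is used only as a black box inside the appendix proof of Theorem~\ref{thm2}. So your proposal fills a gap rather than paralleling an existing argument. On its own merits it is sound and essentially complete. The hypothesis $\varepsilon<\tau/y$ does force $y>0$, so the reduction to $y>0$, $x\ge 0$ is legitimate; stationarity on $x>0$ is equivalent to the quadratic $x^2-(y-\varepsilon)x+(\tau-\varepsilon y)=0$ with roots $\tfrac{1}{2}\left(c_1\pm\sqrt{c_2}\right)$; the identity $c_2=(y+\varepsilon)^2-4\tau$ correctly links the discriminant to $\min_{x>0} f'(x)=2\sqrt{\tau}-\varepsilon-y$; and the Vieta argument in the case $c_2>0$ (product of roots $\tau-\varepsilon y>0$ by hypothesis, sum $c_1>0$ because $c_2>0$ and $\varepsilon<\sqrt{\tau}$ force $y>2\sqrt{\tau}-\varepsilon>\varepsilon$) is exactly where both halves of the assumption $\varepsilon<\min(\sqrt{\tau},\tau/y)$ are consumed. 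One simplification worth noting: since $\operatorname{sign}\left(f'(x)\right)=\operatorname{sign}\left(x^2-(y-\varepsilon)x+\tau-\varepsilon y\right)$ for $x>0$, the sign pattern $+,-,+$ of $f'$ across the two roots is immediate, so the larger root is a local minimum with no appeal to the inflection point $x^{*}=\sqrt{\tau}-\varepsilon$; the second-derivative analysis is only genuinely needed in the case $c_2\le 0$, where $f'(x^{*})\ge 0$ yields $f'\ge 0$ on the whole half-line.

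Two minor corrections. First, the phrase ``$f'$ is concave-to-convex'' is a slip: $f'$ is convex on all of $(0,\infty)$ (its derivative $1-\tau/(x+\varepsilon)^2$ is increasing); it is $f$ that passes from concave to convex at $x^{*}$. What you actually use --- $f'$ decreasing on $(0,x^{*})$, increasing afterward, with minimum $2\sqrt{\tau}-\varepsilon-y$ --- is correct. Second, your own computation $f'(0^-)<0<f'(0^+)$ holds under the hypotheses irrespective of the sign of $c_2$, so when $c_2>0$ the point $x=0$ is a second local minimizer. This does not contradict the lemma, which claims only ``a local minimal,'' but promoting $\mathcal{D}_{\tau,\varepsilon}(y)$ to the global minimizer --- which is what the denoising algorithm implicitly relies on --- would additionally require comparing $f(0)$ with $f\left(\tfrac{c_1+\sqrt{c_2}}{2}\right)$, a step supplied neither by your proof nor by the paper.
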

\begin{thm}\label{thm2}
	For any positive threshold $\tau>0$ ,  $\mathcal{Y} \in \mathbb{R}^{n_{1} \times n_{2} \times n_{3}}$, and frequency weighting $w $, parameter $\varepsilon >0$, the following problem:
	\begin{equation}\label{thmNonFWTNNproblem01}
	\underset{\mathcal{X}}{\operatorname{argmin}} \ \tau\|\mathcal{X}\|_{\mathrm{FW*,Log}}+\frac{1}{2}\|\mathcal{X}-\mathcal{Y}\|_{F}^{2}
	\end{equation}
	is given by the double-weighted tensor singular value thresholding
	\begin{equation}
	\mathcal{X}^{*}=\mathcal{DW}^{w, \tau , \varepsilon }(\mathcal{Y})=\mathcal{U}*\mathcal{S}^{w, \tau ,\varepsilon}*\mathcal{V}^{H},
	\end{equation}
	where $\mathcal{Y}=\mathcal{U}*\mathcal{S}*\mathcal{V}^H,\mathcal{S}^{w,\tau ,\varepsilon}=\,\,\text{ifft\,\,}\left( \left( \overline{\mathcal{S}} \right) ^{w,\tau ,\varepsilon} \right) $ and
	\begin{equation}
	\left( \overline{\mathcal{S}} \right) ^{w,\tau ,\varepsilon}\left( i,j,s \right) =\left\{ \begin{matrix}
	0,&		\text{if\,\,}c_2\leq 0\\
	sign\left( \overline{\mathcal{S}}\left( i,j,s \right) \right) \left( \frac{c_1+\sqrt{c_2}}{2} \right) ,&		\text{if\,\,}c_2>0\\
	\end{matrix} \right.
	\end{equation}
	where $\overline{\mathcal{S}}=\text{fft}\left( \mathcal{S} \right) ,c_1=\left| \left( \overline{\mathcal{S}}\left( i,j,s \right) \right) \right|-\varepsilon ,\,\,\text{and\,\,}c_2=c_{1}^{2}-4\left( w_k\tau -\varepsilon \left| \left( \overline{\mathcal{S}}\left( i,j,s \right) \right) \right| \right)
	$
	
\end{thm}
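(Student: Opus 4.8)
The plan is to turn the single-mode problem \eqref{thmNonFWTNNproblem01} into a family of decoupled scalar problems, each of which is dispatched by Lemma \ref{lemma2}. First I would move to the Fourier domain: since the \texttt{fft} along the third mode satisfies $F_{n_3}^H F_{n_3}=n_3 I$, the Parseval-type identity $\|\mathcal{X}-\mathcal{Y}\|_F^2=\frac{1}{n_3}\|\bar{\mathcal{X}}-\bar{\mathcal{Y}}\|_F^2$ holds. Writing $\|\mathcal{X}\|_{FW*,Log}$ as the frequency-weighted sum $\sum_{k}w_k\sum_i\log(\sigma_i(\bar{\mathbf{X}}^{(k)})+\varepsilon)$ of per-slice log terms, the whole objective separates across the $n_3$ frontal slices of $\bar{\mathcal{X}}$. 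After rescaling by $n_3$ (which leaves the minimizer unchanged), it therefore suffices to solve, for each fixed $k$, the matrix problem $\min_{\mathbf{X}}\ w_k\tau\sum_i\log(\sigma_i(\mathbf{X})+\varepsilon)+\frac{1}{2}\|\mathbf{X}-\bar{\mathbf{Y}}^{(k)}\|_F^2$, the per-slice threshold being a constant multiple of $w_k\tau$.

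Second, I would reduce each matrix problem to its singular values. Expanding $\frac{1}{2}\|\mathbf{X}-\bar{\mathbf{Y}}^{(k)}\|_F^2=\frac{1}{2}\sum_i\sigma_i(\mathbf{X})^2-\langle\mathbf{X},\bar{\mathbf{Y}}^{(k)}\rangle+\frac{1}{2}\|\bar{\mathbf{Y}}^{(k)}\|_F^2$ and noting that the log penalty depends only on the $\sigma_i(\mathbf{X})$, the sole term coupling the singular vectors of $\mathbf{X}$ to those of $\bar{\mathbf{Y}}^{(k)}$ is $\langle\mathbf{X},\bar{\mathbf{Y}}^{(k)}\rangle$. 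By von Neumann's trace inequality this inner product is maximized, for fixed singular values of $\mathbf{X}$, precisely when $\mathbf{X}$ shares the left and right singular vectors of $\bar{\mathbf{Y}}^{(k)}$. Crucially, this alignment argument does not require convexity of the penalty, so it survives the non-convex $\log$ term. With the singular vectors fixed, the slice problem collapses to the separable scalar form $\sum_i[\,w_k\tau\log(\sigma_i(\mathbf{X})+\varepsilon)+\frac{1}{2}(\sigma_i(\mathbf{X})-\sigma_i(\bar{\mathbf{Y}}^{(k)}))^2\,]$.

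Third, I would apply Lemma \ref{lemma2} to each scalar term with $y=\sigma_i(\bar{\mathbf{Y}}^{(k)})$ and threshold $w_k\tau$, yielding exactly the stated expression for $(\overline{\mathcal{S}})^{w,\tau,\varepsilon}(i,j,s)$ through $c_1$ and $c_2$. Reassembling the thresholded singular values slice by slice, transforming back with \texttt{ifft} to obtain $\mathcal{S}^{w,\tau,\varepsilon}$, and recombining with the unchanged singular-vector factors gives $\mathcal{X}^*=\mathcal{U}*\mathcal{S}^{w,\tau,\varepsilon}*\mathcal{V}^H$. Two points deserve care and are where I expect the real work. The main obstacle is the non-convexity: Lemma \ref{lemma2} only furnishes a \emph{local} minimizer under the admissibility condition $0<\varepsilon<\min(\sqrt{w_k\tau},\,w_k\tau/y)$, so the conclusion is a stationary/local solution rather than a certified global optimum, and one must verify this condition holds for each active singular value. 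The second point is the conjugate symmetry \eqref{TNN_conj}: to ensure the \texttt{ifft} returns a real tensor, the weights $w_k$ must respect the symmetry of the paired frequency slices, so that the per-slice thresholding commutes with conjugation and the conjugate pairs are processed identically.
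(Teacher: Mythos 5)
Your proposal follows essentially the same route as the paper's own proof: pass to the Fourier domain via the Parseval identity, decouple the weighted log objective across the $n_3$ frontal slices of $\bar{\mathcal{X}}$, and apply Lemma \ref{lemma2} per singular value with the rescaled threshold $w_k\tau$, then reassemble via \texttt{ifft}. If anything you are more careful than the paper, whose proof jumps directly from the matrix subproblems to the scalar thresholding of Lemma \ref{lemma2}, omitting both the von Neumann alignment argument you spell out and the caveat that the lemma only certifies a local minimizer.
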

\begin{proof}
	According to the properties of the tensor in the Fourier domain, the problem \eqref{thmNonFWTNNproblem01} is equivalent to
	\begin{equation}
	\begin{aligned}
	&\arg\min\text{\ }\tau \lVert \mathcal{X} \rVert _{FW*,Log}+\frac{1}{2}\lVert \mathcal{X}-\mathcal{Y} \rVert _{F}^{2}\\
	\Leftrightarrow &\arg\min\text{\ }\tau \frac{1}{n_3}\sum_{k=1}^{n_3}{w_k\log \left( \lVert \boldsymbol{\bar{\textbf{X}}}^{\left( k \right)} \rVert _* \right)}+\frac{1}{2n_3}\lVert \bar{\mathcal{X}}-\bar{\mathcal{Y}} \rVert _{F}^{2}\\
	\Leftrightarrow &\arg\min\text{\ }\frac{1}{n_3}\sum_{k=1}^{n_3}{\tau w_k\sum_{i=1}^{\min \left( n_1,n_2 \right)}{\log \left( \sigma _i\left( \boldsymbol{\bar{\textbf{X}}}^{\left( k \right)} \right) +\varepsilon \right)}}    \\
	&+\frac{1}{2}\lVert \boldsymbol{\bar{\textbf{X}}}^{\left( k \right)}-\boldsymbol{\bar{\textbf{Y}}}^{\left( k \right)} \rVert _{F}^{2}.
	\end{aligned}
	\end{equation}
	Therefore, the problem in \eqref{thmNonFWTNNproblem01} can be divided into $n_3$ subproblems about problem \eqref{lognuclearproblem01}. From Lemma \ref{lemma2}, for $k$ subproblems, its solution is
	\begin{equation}
	\boldsymbol{\bar{\textbf{X}}}^{\left( k \right)} =\mathcal{D}_{w_k\tau,\varepsilon  }(\boldsymbol{\bar{\textbf{Y}}}^{\left( k \right)})=\left\{\begin{array}{ccc}
	0 & if  & c_{2} \leq 0 \\
	\operatorname{sign}(\boldsymbol{\bar{\textbf{Y}}}^{\left( k \right)})\left(\frac{c_{1}+\sqrt{c_{2}}}{2}\right) &if  & c_{2}>0
	\end{array}\right..
	\end{equation}
\end{proof}

%
%

\ifCLASSOPTIONcaptionsoff
  \newpage
\fi

\end{document}